\newtheorem{theorem}{Theorem}
\newtheorem{corollary}{Corollary}
\newcommand{\appropto}{\mathrel{\vcenter{
			\offinterlineskip\halign{\hfil$##$\cr
				\propto\cr\noalign{\kern2pt}\sim\cr\noalign{\kern-2pt}}}}}
\begin{document}
	%\title{How many Devices become Active?\\ An Estimation Approach exploiting Coordinated Pilot Transmissions in Massive IoT Networks}		
	%\title{Coordinated Pilot Transmissions for Estimating \\  the Number of Devices becoming Active in a \\   Massive IoT Network under Rayleigh Fading}		
	\title{Coordinated Pilot Transmissions for Detecting the   Signal Sparsity Level in a Massive IoT Network    under Rayleigh Fading}		
	\author{Onel L. A. López,
		%~\IEEEmembership{Member,~IEEE,} 
	Glauber Brante,
	%~\IEEEmembership{Senior Member,~IEEE,} 
	Richard D. Souza,
	%~\IEEEmembership{Senior Member,~IEEE,} 
	Markku Juntti,
	%~\IEEEmembership{Fellow,~IEEE,} 	
	and Matti Latva-aho%,	~\IEEEmembership{Senior Member,~IEEE}
	\thanks{O. López, M. Juntti, and M. Latva-aho are  with  the Centre for Wireless Communications University of Oulu, Finland, e-mails: \{Onel.AlcarazLopez, Markku.Juntti, Matti.Latva-aho\}@oulu.fi.
	%}
	%\thanks{
		G. Brante is with the Department of Electrotechnics, Federal University of Technology PR, Curitiba, Brazil, e-mail: gbrante@utfpr.edu.br.
	%}
	%\thanks{
		R. D. Souza is with the Electrical and Electronics Engineering Department,
		Federal University of Santa Catarina (UFSC), Florianópolis, Brazil, e-mail:
		richard.demo@ufsc.br.}	\thanks{This research has been supported by the Academy of Finland, 6G Flagship program under Grant 346208, and Brazilian National Council for Scientific and Technological Development (CNPq).} 	} 	

	\maketitle
	\vspace{-12mm}
\begin{abstract}
    \vspace{-2mm}
   % Compressed sensing (CS) multi-user detection (MUD) techniques resolve \textit{soft} collisions in massive machine-type communication (mMTC) with sporadic machine-type devices' activation.
    Grant-free protocols exploiting compressed sensing (CS) multi-user detection (MUD) are appealing for solving the random access problem in massive machine-type communications (mMTC) with sporadic device activity. %However, many state-of-the-art CS-MUD mechanisms are difficult to configure, while others are     not so far implementable in practice as they assume that the sparsity level, i.e., number of active devices $K$, is known in advance.     
    Such protocols would greatly benefit from a prior deterministic knowledge of the sparsity level, i.e., instantaneous number of simultaneously active devices $K$. Aiming at this, herein we introduce a framework relying on coordinated pilot transmissions (CPT) over a short phase at the beginning of the transmission block for detecting $K$ in mMTC scenarios under Rayleigh fading.
    %, which can potentially enable, or strengthen, CS-MUD mechanisms available in literature. 
    CPT can be implemented either as: i) U-CPT, which exploits only uplink transmissions, or A-CPT, which includes also downlink transmissions for channel state information (CSI) acquisition that resolve fading uncertainty. We discuss two specific implementations of A-CPT: ii) A-CPT-F, which implements CSI-based phase corrections while leveraging the same statistical inverse power control used by U-CPT, and iii) A-CPT-D, which implements a dynamic CSI-based inverse power control, although it requires some active devices to remain in silence if their corresponding channels are too faded. 
    We derive a signal sparsity level detector for each CPT mechanism by relaxing the original integer detection/classification problem to an estimation problem in the continuous real domain followed by a rounding operation. We show that the variance of the relaxed estimator
    %We show that estimators variance 
    increases with $K^2$ and $K$ when operating with U-CPT and A-CPT mechanisms, respectively. The distribution of the estimators under U-CPT, A-CPT-F and A-CPT-D is found to follow an exponential, Gaussian, and Student's $t-$like distribution, respectively. Analyses evince the superiority of A-CPT-D, which is also corroborated via numerical results. We reveal several interesting trade-offs and highlight potential research directions.     
    %Our analytical framework allows tractable computation of the estimation success probability, thus becoming valuable for system design/optimization purposes. 
\end{abstract}
	%
%	\begin{IEEEkeywords}
%		    massive MTC, compressed sensing, multi-user detection, signal sparsity level,  grant free random access, Rayleigh fading
%	\end{IEEEkeywords}
%
\IEEEpeerreviewmaketitle
\vspace{-4mm}
\section{Introduction}\label{intro}
\vspace{-2mm}
Nowadays, the number of Internet of Things (IoT) devices  is exponentially growing 
driven by the need of turning our homes, vehicles, entertainment, health, work, industries, and social/community services into smart, autonomous, sustainable, interactive and intelligent environments \cite{Mahmood.2020,Choi.2022,MahmoodL.2020,Carvalho.2017,Lopez.2018}. %However, the sporadic and energy-constrained communication requirements of such 
As connectivity backbone of the IoT, massive machine-type communication (mMTC) paradigm aims to address the corresponding connectivity challenges. The latter are intertwined with the unique features of massive IoT connectivity setups, specifically \cite{Mahmood.2020,Choi.2022,MahmoodL.2020,Carvalho.2017}: 
i) sporadic transmissions, i.e., an unknown/random subset of machine-type communication devices, called simply devices in the sequel, is active at a given time instant; ii)  short-packet communications dominated by uplink (UL) traffic; and iii) energy-limited communication/operation. 
The third feature evinces the need of energy-efficient communication/operation protocols and, in many cases, battery-free operation \cite{Lopez.2020,Lopez.2021}; while all the features, in particular the first two,  call for novel multiple-access mechanisms \cite{Carvalho.2017,Lopez.2018,MahmoodL.2020}. Our work here contributes to the latter.

Grant-free (GF)  multiple-access protocols are particularly appealing for mMTC, since they \cite{Mahmood.2020,Choi.2022,MahmoodL.2020} i) promote efficient spectrum utilization as each device is not assigned a dedicated transmission resource block, ii) reduce signaling overhead, and iii)
improve energy-efficiency of the devices.
Note that owing to the network massiveness, it is impossible to assign orthogonal pilot sequences/preambles to the devices, thus, motivating the need of GF non-orthogonal multiple access protocols. 
%Therefore, 
%Nevertheless,
%a key challenge of GF non-orthogonal multiple access protocols 
However, a key challenge here lies in efficiently  identifying the set of sporadically active, non-orthogonally coexisting, devices and their data,
%\footnote{In the case of coherent data detection, the channel state information (CSI) of the active devices also needs to be (concurrently) estimated \cite{Choi.2017}. 
%(and their channel state information (CSI)) while carrying out non-coherent (coherent) data detection
%},
 for which 
%\footnote{Alternatively, GF-MA protocols can directly target non-coherent data detection. Either in coherent and non-coherent configurations, the set of active devices needs to be estimated.}.
%Since collisions are inevitable and they are handled through c
collision resolution
mechanisms are required \cite{MahmoodL.2020,Carvalho.2017}.

We can distinguish two basic types of collisions: \textit{hard} and \textit{soft} collisions. The former occurs when  exactly the same preamble is being simultaneously used by several active devices, while the latter occurs when active devices use different non-orthogonal preambles as they interfere to some extent among each other. The probability of hard/soft collisions increases/decreases as the number of available preambles reduces. Since hard collisions are difficult to resolve without 
relying on 
sufficiently orthogonal channel subspaces \cite{Yin.2013,Adhikary.2013,Ribeiro.2020} and/or additional communication overhead, 
increasing the pool of non-orthogonal preambles (thus, favoring the occurrences of soft instead of hard collisions) is usually recommended in practice \cite{Choi.2022,MahmoodL.2020,Carvalho.2017}.  %\textcolor{red}{Therefore, soft collisions become predominant.}
%Meanwhile, soft collisions refer to
%Standard ALOHA-based approaches
%are not suitable for mMTC, as ALOHA suffers from low performance when the number of accessing devices is large [13].
A promising class of soft collision resolution methods, known as compressed sensing (CS) techniques, have been considered for multi-user device detection (MUD) in mMTC \cite{Choi.2017}.
\vspace{-4mm}
\subsection{Related Work}
\vspace{-2mm}
CS-MUD solutions usually rely on regularization, greedy, message-passing (MP), and/or 
%machine learning /
artificial intelligence (AI) techniques. %\cite{Renna.2020}. 
\begin{enumerate}[wide, labelwidth=!]
	\item \emph{Regularized MUD}
%\subsubsection{Regularized MUD}
 relies on transforming the highly non-convex CS-MUD problem to convex via regularization and iterative procedures. For instance, Zhu and Giannakis \cite{Zhu.2011} proposed a ridge detector and a least absolute shrinkage and selection operator (LASSO) detector, which directly regularize the original CS-MUD problem based on $l_2-$ and $l_1-$norm, respectively. Later on, some sparsity-aware successive interference cancellation
%(SA-SIC)
 regularization techniques were proposed in \cite{Knoop.2013,Ahn.2018} aiming at lowering the detection complexity by sequentially recovering the transmitted symbols.
Meanwhile, Renna and Lamare \cite{Renna.2019} incorporated an $l_1-$norm regularization into an iteratively updated linear minimum mean square error (MMSE) filter and a constellation-list scheme to enable sparse detection. 
%The same authors proposed later in \cite{RennaRoberto.2019} a regularized recursive least squares
 %(RLS)
 % adaptive algorithm which, differently from the previous works, implicitly estimates the channels of the devices. 
 Moreover, a joint user identification and channel estimation approach exploiting the alternating direction method of multipliers %(ADMM) 
  was proposed in %by Djelouat \textit{et al.}
  \cite{Djelouat.2021}.
%
%\subsubsection{Greedy MUD} 
\item \emph{Greedy MUD}
has low complexity and generally only requires appropriate tuning of the termination of the transmitted signal/vector reconstruction. Schepker and Dekorsy \cite{Schepker.2011} applied for the first time
the 
%well-known
 orthogonal least squares (OLS) and orthogonal matching pursuit (OMP) greedy algorithms to a  sparse
mMTC scenario. 
Since the latter 
%OMP algorithm 
outperforms  the 
former,
%OLS algorithm,
 latest research on greedy MUD has mostly focused on 
OMP-based algorithms. 
For instance,
Schepker \textit{et al.} \cite{Schepker.2015}
%Schepker.2013,
 proposed group OMP (GOMP) leveraging channel decoders for greater performance, while
Xiong \textit{et al.} \cite{Xiong.2014} proposed a detection-based orthogonal matching pursuit (DOMP) algorithm which, unlike the conventional OMP, does not rely on the priori knowledge of the signal/device sparsity (the number of active devices). Specifically, DOMP runs
binary hypothesis on the residual vector of OMP at each iteration, while stopping when there is no signal component in the residual vector.
%
%Relying on such prior art,  Liu \textit{et al.} \cite{Liu.2017}  proposed a detection-based group OMP (DGOMP) detector to solve the problem of unknown signal sparsity in grant-free sparse code multiple access systems. 
Meanwhile, 
a noise-robust greedy algorithm exploiting \textit{a posteriori} probability ratios for every index of sparse input signals is designed in \cite{Yu0.2019}. %Such framework is later modified by the same authors \cite{Yu.2019} to include a sparsity-unaware iteration stopping criterion.
Finally, Lee and Yu \cite{Kyubihn.2020} leveraged prior activation probability of each device to improve the performance of several greedy MUD schemes in mMTC, and show that they are robust against prior information inaccuracy.
%
%\subsubsection{MP-based MUD}
\item \emph{MP-based MUD}
constitutes a class of algorithms exploiting factor graphs,
% \cite{David.2009}
 %and consequently, 
 thus,
 the \textit{a posteriori} distribution of the signal to be reconstructed. In practice, due to the large-scale nature of the access problem in mMTC, the usual approach is to adopt/design approximate MP (AMP) algorithms relying on  iterative thresholding, which also allows analytic performance characterization via the so-called state evolution.
For instance,  Chen \textit{et al.} \cite{Chen.2018} 
%designed an AMP algorithm  that exploits the statistics
%of the wireless channel.
% and provides an analytical characterization of the probabilities of false alarm and missed detection via state evolution. 
%In their work, 
derived 
MMSE denoisers for AMP depending on whether or not the large-scale component of the channel fading is known. 
%Note that traditional AMP algorithms are guaranteed to converge only under the assumption of i.i.d. sub-Gaussian sensing matrices. To overcome this limitation,  Rangan \textit{et al.} \cite{Rangan.2019} proposed a ``vector AMP'' (VAMP) algorithm and shows that VAMP has a rigorous scalar state-evolution that holds under a much broader class of large random sensing matrices.
Senel and Larsson \cite{Kamil.2018} analyzed and proposed algorithmic enhancements for coherent and non-coherent MUD based on AMP.  Meanwhile,
Ke \textit{et al.} \cite{Ke.2020} designed non-orthogonal pseudo-random pilots for UL broadband massive access. They formulated the active user detection and channel estimation as a generalized multiple measurement vector CS problem and solved it via
%By leveraging the structured sparsity of the UL channel matrix, a  generalized multiple measurement vector AMP  algorithm is then proposed to realize joint active user detection and channel estimation based on a spatial domain or an angular domain channel model. 
a  generalized multiple measurement vector AMP  algorithm.
% is then proposed.
% By leveraging the structured sparsity of the UL channel matrix,
The suitability of AMP for joint device activity detection and channel estimation of devices coexisting with enhanced mobile broadband (eMBB) services is assessed and promoted in \cite{Marata.2021}. 
%Therein, authors proposed  two novel pilot sequence generation strategies that improve the detection capabilities of the devices without affecting the eMBB service. 
Finally, 
Wang \textit{et al.} \cite{Wang.2016} designed an
 %side information (SI)-aided 
 AMP algorithm that exploits the  temporal activation correlation of the devices, and showed the achievable performance gains.
%
%\subsubsection{AI-based MUD} 
\item \emph{AI-based MUD} leads to direct  detection decisions as the detection parameters are learned and configured on the go, thus, avoiding empirical parameters tuning.
% while detection decisions are directly obtained.  
Deep learning 
%(DL)
 %\cite{Hinton.2006} 
 is the most commonly used  AI tool for solving the CS-MUD problem \cite{Ye.2021}. 
 %In general, data-driven
  %DL
 %deep learning  leads to less design effort but increased training data requirement, while
%model-driven deep learning exploits expert knowledge to relieve the data requirement and promote learning efficiency. 
Some examples of MUD based on deep learning can be encountered in \cite{Bai.2019,Cui.2021,Chen.2021}. Specifically, Bai \textit{et al.} \cite{Bai.2019} proposed a fast data-driven 
%DL
 algorithm for CS-MUD in mMTC relying on a novel block restrictive activation nonlinear unit that nicely captures system sparsity. Meanwhile, Cui \textit{et al.} \cite{Cui.2021} designed two model-driven 
 %DL
  approaches, which effectively utilize features of sparsity patterns in
designing common measurement matrices and adjusting state-of-the-art detectors/decoders.  Interestingly, the optimum depth (number of layers) to be configured in a deep neural network %(DNN) 
varies according to the sparsity statistics, which motivated the work in  \cite{Chen.2021}. Therein, a proposal to autonomously/dynamically update the number of layers in the inference phase is proposed, for which authors introduced an extra halting score at each layer.
\end{enumerate}
\vspace{-4mm}
\subsection{Contributions}
\vspace{-2mm}
State-of-the-art research on CS-MUD either assumes that i) signal sparsity level is known  and exploited for MUD, e.g., \cite{Zhu.2011,Knoop.2013,Ahn.2018,Renna.2019,Djelouat.2021},
%RennaRoberto.2019
 or ii) signal sparsity level detection is a sub-product of MUD, e.g., \cite{Schepker.2011,Schepker.2015,Xiong.2014,Yu0.2019,Kyubihn.2020,Chen.2018,Ke.2020,Marata.2021,Bai.2019,Cui.2021,Chen.2021}. 
 %Schepker.2013 ,Liu.2017  Yu.2019,
 In the first case, there has been no insights/answers on how the sparsity level could be known in advance to MUD, which makes the mechanisms proposed in \cite{Zhu.2011,Knoop.2013,Ahn.2018,Renna.2019,Djelouat.2021} 
 %RennaRoberto.2019,
 strictly impractical so far. In the second case, the sparsity level information is not required. However, having and exploiting such information would  certainly improve the MUD performance. Specifically, the iterative mechanisms proposed in \cite{Schepker.2011,Schepker.2015,Xiong.2014,Yu0.2019,Kyubihn.2020,Chen.2018,Ke.2020,Marata.2021}
 %Schepker.2013 ,Liu.2017  Yu.2019,
  face the challenge of  setting an appropriate stopping criterion, while the deep learning based mechanisms proposed in \cite{Bai.2019,Cui.2021} have fixed depth in terms of the number of layers and do not adapt well to highly-varying sparsity-levels. Interestingly, the depth could also be learned \cite{Chen.2021}, but this introduces a further non-linearity into the system. Although the approach leads to accuracy improvements with respect to state-of-the-art MUD based on deep learning, it is also more complex. From the discussion above, we can conclude that the CS-MUD mechanisms can all significantly benefit from a (sufficiently) deterministic prior on the sparsity level, which is specifically our aim here.
The information on the sparsity level enables the application of the MUD solutions in \cite{Zhu.2011,Knoop.2013,Ahn.2018,Renna.2019,Djelouat.2021},
%,RennaRoberto.2019
 while it potentially makes those ones in \cite{Schepker.2011,Schepker.2015,Xiong.2014,Yu0.2019,Kyubihn.2020,Chen.2018,Ke.2020,Marata.2021,Bai.2019,Cui.2021,Chen.2021} more easily configurable and accurate.
 %,Schepker.2013 ,Liu.2017  Yu.2019,

%
We consider a mMTC deployment under quasi-static Rayleigh fading where a random set of $K$ devices become active. 
%, and potentially transmitting to their single-antenna coordinator in the UL. 
%As previously stated, our aim is at estimating $K$. 
Our main contributions are six-fold:
\begin{itemize}[leftmargin=0cm]
	\item We introduce a framework for detecting\footnote{By convention \cite{Kay.1993,Kay.1998}, a detection or classification operation is applied over a (discrete) set of possible hypotheses, while an estimation operation is not restricted to a discrete/natural domain. Hence, a detector for $K$ outputs an integer solution, while an estimator for $K$ may output a real solution.} $K$ by exploiting $N$ symbols at the beginning of a transmission block. The framework relies on coordinated pilot transmissions (CPT), and may  play  a crucial role for sparse signal recovery MUD algorithms in mMTC.
	%leverage $N$ symbols at the beginning of a transmission block for the purpose of estimating $K$
	\item We propose unassisted CPT (U-CPT) and  assisted CPT (A-CPT) mechanisms.  Specifically, only UL transmissions are exploited when using U-CPT, while A-CPT mechanisms include also downlink (DL) transmissions for CSI estimation that resolve fading uncertainty. We discuss two A-CPT specific implementations: i) A-CPT-F, which implements CSI-based phase corrections while leveraging the same statistical inverse power control used by U-CPT, and ii) A-CPT-D, which implements a dynamic CSI-based inverse power control, although it requires that the active devices remain in silence with probability $\xi$ given an average transmit power constraint.
	\item 
	%Because the optimum detection of the value of $K$ is combinatorial, thus, computationally unaffordable,
	We derive the optimum detector of the value of $K$ in the case of U-CPT. Meanwhile, since the optimum detection relies on a combinatorial search in the case of A-CPT mechanisms, we relax the problem to the continuous domain and find efficient estimators for $K$, also for U-CPT. The estimator variance is shown to increase with $K^2$ and $K$ when operating respectively with U-CPT and A-CPT mechanisms.
	\item Because of the relaxation mentioned above, the estimators need a quantization or rounding operation to find the integer value of $K$. We discuss two such approaches: 
	 i) 
	rounding to the nearest integer (NI), and ii) rounding relying on maximum likelihood (ML). We also numerically assess their performance.
	%The proposed relaxed estimator for $K$ followed by an ML-based rounding constitutes the optimum detector in case of U-CPT.
	 %while comparing them with that of an ideal  genie-aided (GA) rounding. 
%	In general, 
	The results reveal that the NI-based rounding offers a performance similar to that of the more sophisticated ML-based rounding.
	%, which in fact leads to the optimal detector for $K$ under the U-CPT operation.
	%, however, there is potential for improvements since the performance gap with respect to GA-rounding is significant.	
	\item We derive exact or approximate (semi) closed-form expressions for the probability mass function (PMF) of the detectors, and validate their accuracy. Such PMFs were shown to be exponential-, Gaussian- and Student's $t-$like in the case of U-CPT, A-CPT-F and A-CPT-D, respectively. They allow tractable computation of the estimation success probability, thus, becoming valuable for system design/optimization purposes.
	\item We evince the superiority of A-CPT-D under appropriate (but not strict/rigorous) configuration of $\xi$ (whose optimum value decreases slowly with $K$). Moreover, we show that the estimation accuracy increases with $N$, specially when operating with A-CPT-D, although the performance gain may saturate quickly in high SNR regimes.
\end{itemize}
Finally, we also identify and briefly discuss several attractive research directions related to CPT to pursue in the sequence.
\vspace{-4mm}
\subsection{Organization}
\vspace{-2mm}
The remainder of this paper is organized as follows. Section~\ref{system} introduces the system model and related problem.
Sections~\ref{design} and \ref{acpt} discuss the U-CPT and A-CPT mechanisms for detecting $K$, respectively, including  accuracy analyses.
The distribution of the (relaxed) estimators is characterized in Section~\ref{Dist}, while Section~\ref{results} presents and discusses numerical results. Finally, Section~\ref{conclusions}
concludes the article and highlights further research directions.

\textit{Notation:} 
Boldface lowercase letters denote column vectors. %$\mathbf{0}$ is a vector of zeros, while $\mathbf{I}$ is the identity matrix. 
Superscripts $(\cdot)^*$ and $(\cdot)^H$ denote the complex conjugate and Hermitian operations, respectively. 
$||\cdot||$ is the Euclidean norm of a vector, $|\cdot|$ is the absolute (or cardinality for sets) operation, and $\mathrm{round}(\cdot)$ denotes integer rounding of the argument according to a certain rule. 
$\mathbb{P}(A)$ denotes the probability of the occurrence of event $A$, while $A|B$ denotes a random variable $A$ conditioned on $B$.
$\mathbb{E}[\!\ \cdot\ \!]$ and $\mathrm{var}[\!\ \cdot\!\ ]$ output expected value and variance of the argument, respectively.
%, while $\mathbb{E}[\!\ \cdot\!\ |\!\ \cdot\ \!]$ denotes conditional expectation.
%
 $\lfloor\cdot\rfloor$ and $\lceil\cdot\rceil$ are floor and ceiling functions, respectively, while $\binom{\ \!\cdot\!\ }{\!\ \cdot\!\ }$ is the binomial coefficient.   
$\Re\{\cdot\}$ ($\Im\{\cdot\}$) outputs the real (imaginary) part of the argument. Additionally, $\mathrm{li}(\ \!\cdot\ \!)$ is the logarithm integral \cite[eq. (6.2.8)]{Olver.2010}, $\mathrm{Ei}(\ \!\cdot\ \!)$ is the exponential integral \cite[eq. (6.2.1)]{Olver.2010},  $\mathrm{erfc}(\ \!\cdot\ \!)$ is the complementary error function \cite[eq. (7.2.2)]{Olver.2010}, $\Gamma(\ \!\cdot\ \!)$ is the complete gamma function \cite[eq. (5.2.1)]{Olver.2010}, and $K_\nu(\ \!\cdot\ \!)$ is the modified Bessel function of second kind and order $\nu$ \cite[eq. (10.27.4)]{Olver.2010}. Meanwhile, $B_{\nu}(\ \!\cdot\ \!,\ \!\cdot\ \!)$ and $I_{\nu}(\ \!\cdot\ \!,\ \!\cdot\ \!)$ with $\nu\in[0,1]$ are respectively the incomplete, and regularized incomplete, beta functions \cite[eq. (8.17.2)]{Olver.2010}.
$\mathbb{R}$ ($\mathbb{R}^+$) is the set of (non-negative) real numbers, $\mathbb{C}$ is the set of complex numbers, and  $\imath=\sqrt{-1}$ is the imaginary unit. $f_X(x)$ and $F_X(x)$ denote the probability density function (PDF) and cumulative distribution function (CDF), respectively, of a continuous random
variable $X$, while $p_Y(y)$ denotes the PMF of a discrete random variable $Y$. Finally, Table~\ref{table} lists the distributions used throughout the paper, including notations and main statistics.
\begin{table*}[!t]
	\centering
	\caption{Distributions used throughout the Paper and their Main Statistics}
	\vspace{-4mm}
	\begin{tabular}{p{2.7cm}llccccc}
		\toprule
		Distribution & Notation & Support	& PDF & CDF & Mean & Variance  \\ \midrule
		circularly-symmetric complex Gaussian &		$\mathcal{CN}(m,s^2)$ & $x\in \mathbb{C}$ & $e^{-|x-m|^2/s^2}/(\pi s)$ &  $-$ & $m$ & $s^2$   \\ 		
		Gaussian & $\mathcal{N}(m,s^2)$ & $x\in \mathbb{R}$ & $e^{-\frac{(x-m)^2}{2s^2}}/(\sqrt{2\pi}s)$ &  $1-\mathrm{erfc}\big(\frac{x-m}{\sqrt{2}s}\big)/2$ & $m$ & $s^2$   \\
		Exponential & $\mathrm{Exp}(\lambda)$ & $x\in\mathbb{R}^+$ & $\lambda e^{-\lambda x}$ & $1-e^{-\lambda x}$ & $1/\lambda$ & $1/\lambda^2$   \\
		Rayleigh & $\mathrm{Ray}(s)$ & $x\in\mathbb{R}^+$ & $xe^{-x^2/(2s^2)}/s^2$ & $1-e^{-x^2/(2s^2)}$ & $s\sqrt{\frac{\pi}{2}}$ & $(4-\pi)s^2/2$  \\
		Student's $t$ & $\mathcal{T}(\nu)$ & $x\in \mathbb{R}$ & $\frac{\Gamma(\frac{\nu+1}{2})}{\Gamma(\frac{\nu}{2})\sqrt{\nu\pi}}\big(1\!+\!\frac{x^2}{\nu}\big)^{-\frac{\nu+1}{2}}$ &  $1\!-\!\frac{1}{2}I_{\frac{\nu}{x^2\!+\!\nu}}\big(\frac{\nu}{2},\frac{1}{2}\big)$, $\forall x\!>\!0$ & $0$ & $\frac{s^2\nu}{\nu-2}$, $\forall \nu\!>\!2$   \\
		%	$\mathrm{Exp}(\lambda)$ & &1 & 1 & 1 & 0.64   \\
		%	$\mathrm{Exp}(\lambda)$ &  &1 & 1 & 1 & 0.64   \\
		\bottomrule
	\end{tabular}\label{table}
	\vspace{-10mm}
	%	{\scriptsize\raggedleft $\!\!\ ^\dagger$We also sporadically use $\mathbf{x}\sim\mathcal{CN}(\mathbf{m},\mathbf{R})$, which is a circularly-symmetric complex Gaussian random vector with mean $\mathbf{m}$ and covariance matrix $\mathbf{R}$.\qquad\qquad\qquad\qquad\qquad}
\end{table*}
\vspace{-5mm}
\section{System \& Problem Description}\label{system}
\vspace{-2mm}
Consider a mMTC deployment, where a set $\mathcal{Q}$ of devices is served by a single coordinator, e.g., a BS or an aggregator. 
%For the purpose of our All the devices and the coordinator are equipped with a single antenna, while how to leveyleigh ge .
%, where channel statistics remain valid for long periods of time.
The MTC traffic is sporadic, i.e., only a random subset of the devices $\mathcal{K}\subseteq \mathcal{Q}$ is active at any given time. 
Assume that time is slotted and active devices must wait the next immediate time slot to start a (synchronous) transmission.
We aim at detecting the number of  devices $K=|\mathcal{K}|$, out of the total $Q=|\mathcal{Q}|$, becoming active in a given time slot, which is also referred to as signal sparsity level. This information is then potentially exploited in posterior detection/decoding mechanisms, e.g., \cite{Zhu.2011,Knoop.2013,Ahn.2018,Renna.2019,Djelouat.2021,Schepker.2011,Schepker.2015,Xiong.2014,Yu0.2019,Kyubihn.2020,Chen.2018,Ke.2020,Marata.2021,Bai.2019,Cui.2021,Chen.2021}. %Our proposal is described next.
%,RennaRoberto.2019
%,Schepker.2013  ,Liu.2017  Yu.2019,

%
The devices and the coordinator are equipped with a single antenna.\footnote{As this is, to the best of our knowledge, the first work that proposes the sparsity level detection problem prior to a MUD phase, our aim here lies in introducing the basic ideas, principles, and performance baselines. The extension of our proposed mechanisms to multi-antenna setups are not only an interesting but a required research direction, which demands specific but non-trivial adjustments and analyses that we leave for a future work.}
%\footnote{The proposed mechanisms can be adapted to multi-antenna setups. However, this requires some specific adjustments and analysis, which we prefer to leave for a future work as our aim here lies in introducing the basic principles and performance baselines.}
%, nevertheless requiring compensation mechanisms 
Channels are subject to quasi-static Rayleigh fading, while DL and UL channels are reciprocal, which is motivated by the use of the same frequency band. 
Let us denote by $h_i$ the channel coefficient of the link between the coordinator and the $i-$th device, thus, 
%$h_i$ remains unchanged during a whole transmission, while it might change in subsequent blocks,
%Rayleigh such that  
$h_i\sim \ \mathcal{CN}(0,\beta_i)$, where $\beta_i$ is the average channel power gain. Moreover,
the coordinator and the devices are aware of the large-scale channel statistics, which is specially feasible in static and quasi-static MTC deployments.

\begin{figure}[t!]
	\centering
	\includegraphics[width=0.47\textwidth]{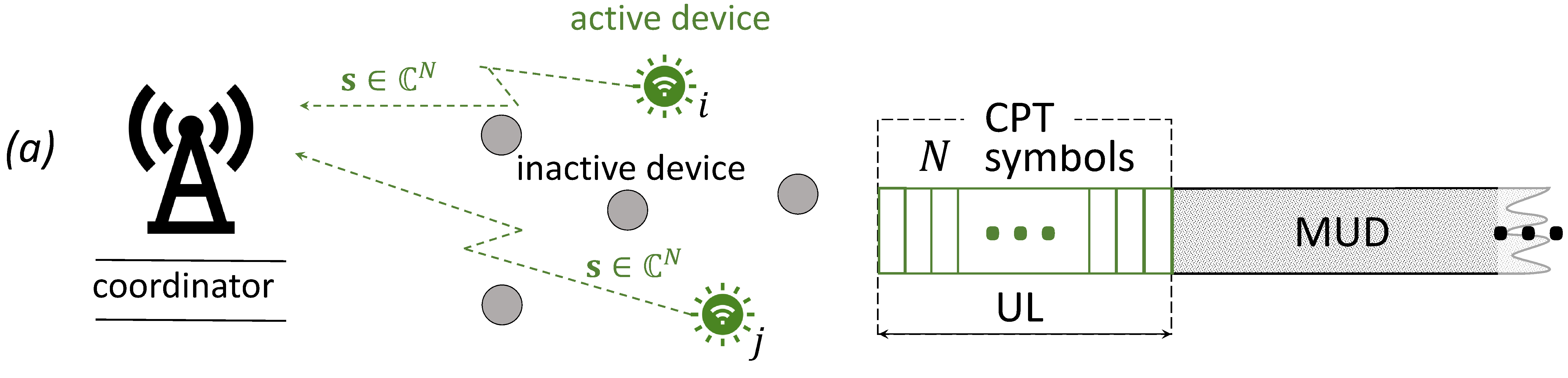}\ \ \ \ 	\includegraphics[width=0.47\textwidth]{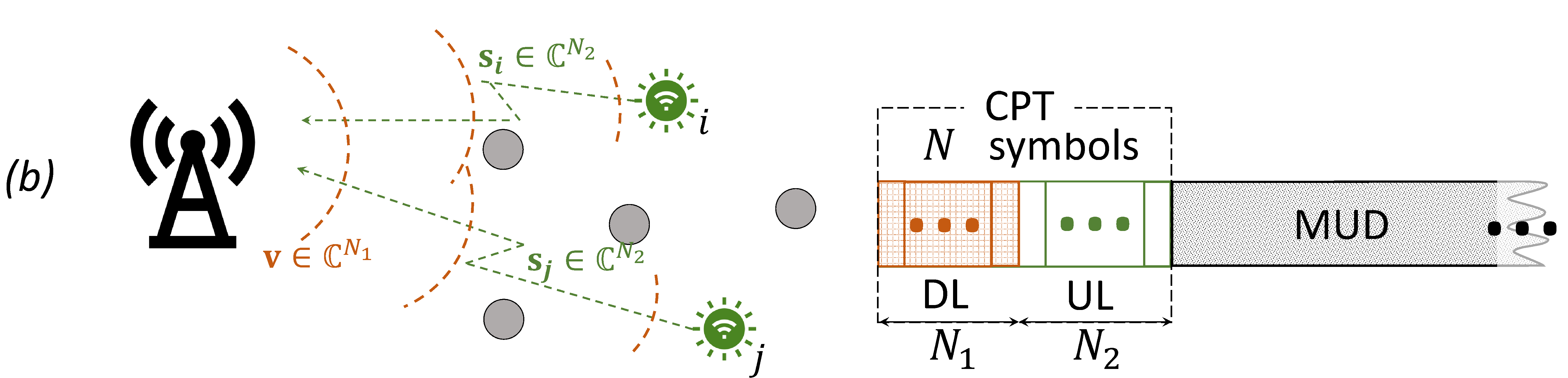}
	\vspace{-4mm}
	\caption{Proposed mechanisms: a) coordinator-unassisted CPT, and b) coordinator-assisted CPT.}
	\label{Fig1}
	\vspace{-10mm}
\end{figure}
Our proposed coordinated pilot transmission (CPT) mechanisms, which are schematically illustrated in Fig.~\ref{Fig1}, leverage the use of $N$ symbols for the purpose of detecting $K$. We assume that all the active devices transmit the same pilot sequence during such \emph{coordinated transmission phase}, either directly as in Fig.~\ref{Fig1}a, or after a short broadcast DL transmission from the coordinator  as in Fig.~\ref{Fig1}b. The signal $\mathbf{y}$ received at the coordinator 
%is denoted as $\mathbf{y}$, and 
is used to detect $K$ among all the $Q+1$ possible hypotheses:
%
%\begin{align}
	$\mathcal{H}_k:\ K=k$,
%\end{align}
%
where $k=0,1,\cdots,Q$.
% as illustrated in Fig.~\ref{Fig1}.
%Next, we discuss CPT variants and assess their performance.
%
Let $f(\mathbf{y};\mathcal{H}_k)$ 
%and $p(\mathbf{y}|h';\mathcal{H}_k)$ respectively 
denote the PDF of $\mathbf{y}$
%, and the PDF of $\mathbf{y}$ conditioned on $h'$, both
assuming that $\mathcal{H}_k$ is true. The optimal Neyman-Pearson (NP) detector, which models the true hypothesis as a non-random unknown, or the maximum {\it a posteriori} probability (MAP) decision rule with
 equal prior probabilities of the hypotheses 
%and exploiting the Bayesian approach
 is given by \cite{Kay.1998} %[Ch. 6]
\begin{align}
	\hat{K}=\arg\max_{k} f(\mathbf{y};\mathcal{H}_k) \label{hK}.
\end{align}
Following the initial CPT phase, the transmission of training and data symbols in the case of coherent MUD, or only data symbols in the case of non-coherent MUD, occurs as traditionally. % implemented. 
\vspace{-12mm}
\section{Unassisted CPT (U-CPT)}\label{design}
\vspace{-2mm}
%Our proposed CPT mechanisms, which are described in the following
%
Under the U-CPT approach, which is illustrated in Fig.~\ref{Fig1}a,
all the  devices that are active at the beginning of a given transmission block transmit immediately the corresponding shared pilot sequence $\mathbf{s}$ composed of $N$ complex symbols, i.e., $s[n]$, with $ n=1,\cdots,N$ such that $||\mathbf{s}||^2=N$.\footnote{For the purpose of security and resilience against malicious attacks, such sequence may belong to the orthogonal basis of all vectors of length $N$, and can be periodically and securely changed.}  Then, the $n-$th symbol received at the coordinator can be written as %, given by
\begin{align}
y[n] &= \sum_{i\in\mathcal{K}}\nolimits\sqrt{p_i}s[n]h_i+w[n]=\sqrt{K\bar{p}}h's[n]+w[n],\ n=1,2,\cdots,N,
%\stackrel{(a)}{=} \mathbf{s}\sum_{i\in\mathcal{K}}h_i'+\mathbf{w}\stackrel{(b)}{=}\delta_\mathcal{K}\mathbf{s}+\mathbf{w},
\end{align}
where 
%$||\mathbf{s}||^2=N$, 
$p_i$ is the per-symbol average transmit power of the $i-$th device, and $w[n]\in\mathcal{CN}(0,\sigma^2)$ denotes additive white Gaussian noise (AWGN) with variance $\sigma^2$. Moreover, last step is attained by adopting a statistical inverse power control with $p_i=\bar{p}/\beta_i$, where $\bar{p}$ is the target average receive power, and by setting $h'\triangleq h_i/\sqrt{\beta_i}\sim \mathcal{CN}(0,1)$.
% Finally, we denote $\delta_\mathcal{K}\triangleq\sum_{i\in\mathcal{K}}h_i'\sim \mathcal{CN}\big(0,K\bar{p}\big)$ and attain (b).
%
%
%Let's denote $\mathbf{z}\triangleq\mathrm{diag}(\mathbf{y}\mathbf{y}^H)$ and $\rho_\mathcal{K}\triangleq|\delta_\mathcal{K}|^2$, then, $\rho_\mathcal{K}\sim \bar{p}K\exp(1)$, $\mathbf{z}|\rho_\mathcal{K}\sim $
\vspace{-6mm}
\subsection{Detector for $K$}\label{kucpt}
\vspace{-2mm}
%For convenience, let us linearly process receive signal $\mathbf{y}$ to obtain
%%
%\begin{align}
%	\mathbf{y}'=\mathbf{s}^H\mathbf{y}/N=\sqrt{K\bar{p}}h'+w'[n],
%\end{align}
%%
%where $$
%
%
\begin{theorem}\label{EK}
	Under U-CPT, the optimum detector \eqref{hK} is equivalent to \eqref{hKF} at the top of the next page, where
	%
%	\begin{align}
		$	\alpha_1(\mathbf{x})\triangleq\sum_{n=1}^N x[n]^2/\sigma^2$, $
		\alpha_2(\mathbf{x})\triangleq 2\sqrt{k\bar{\gamma}}\sum_{n=1}^N x[n]\Im\{s[n]\}/\sigma$,
		$\alpha_3(\mathbf{x})\triangleq2\sqrt{k\bar{\gamma}}\sum_{n=1}^N x[n]\Re\{s[n]\}/\sigma$, 
		$\delta_1\triangleq 2k\bar{\gamma}\sum_{n=1}^{n}\Re\{s[n]\}\Im\{s[n]\}$, 
		$\delta_2\triangleq Nk\bar{\gamma}+1$,
%	\end{align}
	%
	and $\bar{\gamma}\triangleq \bar{p}/\sigma^2$ is the average receive signal to noise ratio (SNR).
\end{theorem}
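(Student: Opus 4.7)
The plan is to marginalise the fading coefficient out of the joint likelihood of $\mathbf{y}$ and $h'$, and then to put the resulting log-likelihood into the compact notation of the theorem. Under $\mathcal{H}_k$, the model of Section~\ref{design} makes $\mathbf{y}$ conditionally circularly-symmetric complex Gaussian with mean $\sqrt{k\bar{p}}\,h'\mathbf{s}$ and covariance $\sigma^2\mathbf{I}$, so $f(\mathbf{y}\,|\,h';\mathcal{H}_k)\propto \exp\bigl(-\|\mathbf{y}-\sqrt{k\bar{p}}\,h'\mathbf{s}\|^2/\sigma^2\bigr)$. Multiplying by the Rayleigh prior $f(h')\propto e^{-|h'|^2}$ and using $\|\mathbf{s}\|^2=N$, the combined exponent collects into a term linear in $\Re\{h'\mathbf{y}^H\mathbf{s}\}$ and a $-\delta_2|h'|^2$ term, where $\delta_2=Nk\bar{\gamma}+1$ is precisely the sum of the $|h'|^2$ coefficients contributed by the signal and by the prior.

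I would then decompose $h'=h_r+\imath h_i$ and $y[n],s[n]$ into real and imaginary parts, so that the marginalising integral $f(\mathbf{y};\mathcal{H}_k)=\int_{\mathbb{C}} f(\mathbf{y}\,|\,h';\mathcal{H}_k)\,f(h')\,dh'$ reduces to a two-dimensional real Gaussian integral in $(h_r,h_i)$. Completing the square in $(h_r,h_i)$ and performing the integration yields (i) a prefactor $\pi/\delta_2$, which after taking logarithms contributes the $-\log\delta_2$ penalty that keeps the detector from trivially maximising on $k$, and (ii) an effective exponent $-\alpha_1(\Re\{\mathbf{y}\})-\alpha_1(\Im\{\mathbf{y}\})$ (coming from the bare $\|\mathbf{y}\|^2/\sigma^2$) plus a quadratic boost built from $[\alpha_3(\Re\{\mathbf{y}\})+\alpha_2(\Im\{\mathbf{y}\})]^2/(4\delta_2)$ and $[\alpha_2(\Re\{\mathbf{y}\})-\alpha_3(\Im\{\mathbf{y}\})]^2/(4\delta_2)$. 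This is precisely the route through which $\alpha_2(\cdot)$ and $\alpha_3(\cdot)$ emerge, since up to the factor $2\sqrt{k\bar{\gamma}}/\sigma$ they are merely the inner products of their argument with $\Im\{\mathbf{s}\}$ and $\Re\{\mathbf{s}\}$, respectively.

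Taking logarithms of $f(\mathbf{y};\mathcal{H}_k)$ and dropping the $k$-independent constants (the $-N\log(\pi\sigma^2)$ from the Gaussian normalisation and the $\alpha_1$ terms, both of which depend only on $\mathbf{y}$), I would next expand the two squared brackets above and regroup the real--imaginary cross products. The regrouping produces the pilot self-correlation $\sum_n \Re\{s[n]\}\Im\{s[n]\}=\delta_1/(2k\bar{\gamma})$, which explains how $\delta_1$ enters \eqref{hKF}; substituting the simplified log-likelihood back into \eqref{hK} then matches the claimed form verbatim.

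The main obstacle I anticipate is the bookkeeping in this last step: the completed-square expression is a sum of several quadratic monomials in the $\alpha_2$/$\alpha_3$ quantities, and tracking which cross terms recombine through $\delta_1$ versus which remain as pure squares is mechanical but tedious. By contrast, convergence of the two-dimensional Gaussian integral is immediate, since $\delta_2\geq 1$ for every $k\geq 0$, so no measure-theoretic care is needed.
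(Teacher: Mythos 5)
Your marginalisation is the textbook-exact route, but it provably cannot terminate at \eqref{hKF}, so the last step of your argument --- the claim that regrouping the completed square ``produces'' $\delta_1$ and ``matches the claimed form verbatim'' --- is a genuine gap. In the joint marginalisation over $h'=x_1+\imath x_2$, the same-quadrature cross terms cancel identically, because the signal contributes the quadratic form
\begin{align}
\sum_{n=1}^N\Big[\big(x_1\Re\{s[n]\}-x_2\Im\{s[n]\}\big)^2+\big(x_1\Im\{s[n]\}+x_2\Re\{s[n]\}\big)^2\Big]=(x_1^2+x_2^2)\|\mathbf{s}\|^2=N(x_1^2+x_2^2),\nonumber
\end{align}
so the pilot self-correlation $\sum_n\Re\{s[n]\}\Im\{s[n]\}$ drops out and $\delta_1$ can never appear. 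Completing the square then yields a \emph{single} exponential, $f(\mathbf{y};\mathcal{H}_k)\propto \delta_2^{-1}\exp\big(-\|\mathbf{y}\|^2/\sigma^2+k\bar{\gamma}|\mathbf{s}^H\mathbf{y}|^2/(\sigma^2\delta_2)\big)$, which is exactly the sum of your two ``boost'' terms; expanding those squares produces only the cross-quadrature products $\alpha_3(\Re\{\mathbf{y}\})\alpha_2(\Im\{\mathbf{y}\})$ and $\alpha_2(\Re\{\mathbf{y}\})\alpha_3(\Im\{\mathbf{y}\})$. By contrast, \eqref{hKF} is a \emph{sum of two exponentials}, each involving only one quadrature of $\mathbf{y}$ and containing the same-quadrature products $\delta_1\,\alpha_2(\Re\{\mathbf{y}\})\alpha_3(\Re\{\mathbf{y}\})$ and $\delta_1\,\alpha_2(\Im\{\mathbf{y}\})\alpha_3(\Im\{\mathbf{y}\})$. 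No single Gaussian integral over $\mathbb{C}$ can generate that structure, so your derivation ends at a different statistic than the one the theorem asserts.

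The reason is that the paper's proof never forms the joint density you compute. It first replaces \eqref{hK} by the split $\hat{K}=\arg\max_k\big(f(\Re\{\mathbf{y}\};\mathcal{H}_k)^2+f(\Im\{\mathbf{y}\};\mathcal{H}_k)^2\big)$ and then integrates $h'$ out of \emph{each quadrature's marginal separately}. Because each of those two-dimensional integrals sees only half of the data, the $x_1x_2$ cross terms have no cancelling partner (the matching $\big(x_1\Im\{s[n]\}+x_2\Re\{s[n]\}\big)^2$ term lives in the other quadrature), and that is precisely where $\delta_1$ and the two-exponential shape of \eqref{hKF} originate. So you must either adopt the paper's real/imaginary decomposition from the outset --- after which your completion-of-squares bookkeeping changes and does yield \eqref{pyre} and then \eqref{hKF} --- or keep your exact computation, which arguably gives the true marginal-likelihood (NP) statistic and neatly explains why the relaxed ML estimator \eqref{UCPTK} depends on the data only through $|\mathbf{s}^H\mathbf{y}|^2$, but then you are proving a different claim than the stated equivalence of \eqref{hK} and \eqref{hKF}.
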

\vspace{-3mm}
\begin{proof}
	See Appendix~\ref{ApK}. \phantom\qedhere
\end{proof}
\begin{figure*}
	\small
	\begin{align}
		\hat{K}_{\text{u-cpt}}\!=\!&\arg\max_{k}\Bigg\{ \frac{1}{4\delta_2^2\!-\!\delta_1^2}\bigg[
		\mathrm{exp}\Big(\frac{\delta_1\big(\alpha_2(\Re\{\mathbf{y}\})\alpha_3(\Re\{\mathbf{y}\})\!-\!\alpha_1(\Re\{\mathbf{y}\})\delta_1\big)\!+\!\delta_2\big(\alpha_2(\Re\{\mathbf{y}\})^2\!+\!\alpha_3(\Re\{\mathbf{y}\})^2\big)\!+\!4\delta_2^2\alpha_1(\Re\{\mathbf{y}\})}{(\delta_1^2-4\delta_2^2)/2}\Big)\nonumber\\
		&+\mathrm{exp}\Big(\frac{\delta_1\big(\alpha_2(\Im\{\mathbf{y}\})\alpha_3(\Im\{\mathbf{y}\})\!-\!\alpha_1(\Im\{\mathbf{y}\})\delta_1\big)\!+\!\delta_2\big(\alpha_2(\Im\{\mathbf{y}\})^2\!+\!\alpha_3(\Im\{\mathbf{y}\})^2\big)\!+\!4\delta_2^2\alpha_1(\Im\{\mathbf{y}\})}{(\delta_1^2-4\delta_2^2)/2}\Big) \bigg]
		%\frac{1}{(Nk\bar{\gamma}+1)^2}\bigg[\mathrm{exp}\bigg(\!-\frac{2\big((N\!-\!1)k\bar{\gamma}\!+\!1\big)}{\sigma^2(Nk\bar{\gamma}+1)}\!\sum\limits_{n=1}^{N}\!\Re\{y[n]\}^2\bigg)\!+
		%	\mathrm{exp}\bigg(\!-\frac{2\big((N\!-\!1)k\bar{\gamma}\!+\!1\big)}{\sigma^2(Nk\bar{\gamma}+1)}\!\sum\limits_{n=1}^{N}\!\Im\{y[n]\}^2\bigg)\bigg]
		\Bigg\} \label{hKF}\\
		\bottomrule\nonumber
	\end{align}
	\vspace{-22mm}
\end{figure*}

Note that the complexity of \eqref{hKF} grows with the number of
coexisting devices since it requires testing all the possible
$Q + 1$ hypothesis.
%\footnote{Testing all the possible $Q+1$ hypothesis. One can alternatively leverage convex optimization tools since \eqref{hKF} is concave on $K$, including a simple bisection method which would demand at most $\lceil\log_2 Q\rceil$ testing operations.}
 But more importantly, \eqref{hKF} is difficult to analyze, thus, deriving performance insights becomes cumbersome. To circumvent these issues, we analyze a much simpler, although sub-optimal, detection approach in the sequel.
% following.

The idea is to relax the original detection problem \eqref{hK}, which directly leads to an integer solution, to an estimation problem in the continous real domain.
Since $\mathbb{E}[\mathbf{y}]$ converges to a zero vector, while  $\mathrm{var}[y[n]-w[n]]$ scales linearly with $K$, the (relaxed) minimum-variance unbiased (MVU) estimator of $K$ exploits the sample variance as %\cite[eq. (4.5)]{Kay.1993}
%We resort to 
%The MVU estimator of $\delta_\mathcal{K}$
%, which 
%is given by \cite[eq. (4.5)]{Kay.1993}
%
%\begin{align}
%\hat{\delta}_\mathcal{K}= \mathbf{s}^H\mathbf{y}/n\sim \mathcal{CN}\big(0,K\bar{p}+\sigma^2/n\big),\label{dK}
%\end{align}
%
%Now, observe that the information related to $K$ lies uniquely in the variance of $\hat{\delta}_\mathcal{K}$. Then, since $\mathbb{E}[\hat{\delta}_\mathcal{K}]=0$, the relaxed (in the real domain) MVU estimator of $K$  is given by
%
\begin{align}
\widehat{K}_{\mathrm{u-cpt}}^\mathbb{R}&=\big(|\mathbf{s}^H\mathbf{y}|^2/N^2-\sigma^2/N\big)/\bar{p}. \label{UCPTK}
\end{align}
Then, since $K\in\mathbb{N}$, we need to apply a rounding operation, i.e., $\widehat{K}_{\mathrm{u-cpt}}=\mathrm{round}\big(\widehat{K}_{\mathrm{u-cpt}}^\mathbb{R}\big)$, which is in fact the only possible source 
%(although not necessarily one)
 of performance degradation compared to the optimal NP detector \eqref{hKF}. The reason is simple, \eqref{UCPTK} is not only the MVU but also the ML estimator of $K$, i.e., the value of $K$ that maximizes $f(\mathbf{y};K)$ for $K\in\mathbb{R}$.
In Section~\ref{Dist}, we discuss the specific rounding operation that we adopt in this work and the implications.
\vspace{-4mm}
\subsection{On the Accuracy of the Proposed Estimator}
\vspace{-2mm}
For simplicity, the performance degradation from the rounding operation is neglected in this subsection.
The variance of the estimator can be easily obtained from  \eqref{UCPTK} as
\begin{align}
	\mathrm{var}\big[\widehat{K}^\mathbb{R}_{\mathrm{u-cpt}}\big]&= (K\bar{p}+\sigma^2/N)^2/\bar{p}^2=K^2+\frac{2K}{N\bar{\gamma}}+\frac{1}{N^2\bar{\gamma}^2}\label{var}
\end{align}
by exploiting $\frac{\mathbf{s}^H\mathbf{y}}{N}\sim \mathcal{CN}(0,K\bar{p}+\frac{\sigma^2}{N})$, thus, $\frac{|\mathbf{s}^H\mathbf{y}|^2}{N^2}\sim \mathrm{Exp}\big((K\bar{p}+\frac{\sigma^2}{N})^{-1}\big)$.
%, and setting $\bar{\gamma}\triangleq \bar{p}/\sigma^2$, which denotes the average receive signal to noise ratio (SNR).
%
Since $K\ge 0$,   $\mathrm{var}\big[\widehat{K}_{\mathrm{u-cpt}}\big]$ increases with $K$. In fact, in the high SNR regime and/or for a sufficiently large number of samples $N$, it holds that $\mathrm{var}\big[\widehat{K}^\mathbb{R}_{\mathrm{u-cpt}}\big]\rightarrow K^2$. This means that the $K^2$ contribution to the estimator variance is due to the fading uncertainty, hence,
%that 
the uncontrolled fading in the U-CPT scenario is the main cause of estimation inaccuracy as $K$ increases.
This motivates the introduction of more evolved estimators, and corresponding requirements, in the following.
\section{Assisted CPT (A-CPT)}\label{acpt}
\vspace{-2mm}
The main reason behind the limited accuracy of the U-CPT detector/estimator is the channel uncertainty. Herein, we address this by introducing a coordinated training phase common to all devices as illustrated in Fig.~\ref{Fig1}b. Specifically, at the beginning of each time slot, the coordinator sends a broadcast pilot signal $\mathbf{v}\in\mathbb{C}^{N_1}$ comprising $N_1<N$ symbols.
Then, the signal received by the $i-$th device is given by
\begin{align}
z_i[n] = \sqrt{p}h_iv[n] + w_i'[n],\ n=1,2,\cdots,N_1,
\end{align}
where $||\mathbf{v}||^2=N_1$, $p$ is the per-symbol average transmit power of the coordinator, and $w_i'[n]\sim\mathcal{CN}(0,\sigma_i^2)$ is the $n-$th AWGN sample at the $i-$th device. For simplicity, we assume $\sigma_i^2=\sigma^2, \forall i$.
This broadcast pilot transmission phase is leveraged by the active devices to estimate their corresponding channel coefficient, which is assumed UL-DL reciprocal due to a time division duplex operation \cite{Carvalho.2017,Lopez.2020,Chen.2021}. Specifically, the MVU estimator of $h_i$ is given by
\begin{align}
\hat{h}_i=\mathbf{v}^H \mathbf{z}_i/(N_1\sqrt{p})\sim \mathcal{CN}\Big(0,\beta_i+\frac{1}{N_1\varrho}\Big),
\end{align}
where $\varrho\triangleq p/\sigma^2$,
% is the average transmit SNR, 
while the estimation error is given by
\begin{align}
\tilde{h}_i=\mathbf{v}^H \mathbf{w}_i'/(N_1\sqrt{p})\sim \mathcal{CN}\Big(0,\frac{1}{N_1\varrho}\Big).
\end{align}

After such initial DL CSI acquisition phase, active devices exploit the remaining $N_2=N-N_1$ symbols for sending the shared pilot sequence $\mathbf{s}\in\mathbb{C}^{N_2}$, with $|s[n]|^2=1\ \forall n$, but phase shifted as $\mathbf{s}_i=e^{-\imath\angle\hat{h}_i}\mathbf{s}=\hat{h}_i^*\mathbf{s}/|\hat{h}_i|$, thus, aiming at a coherent signal combination at the coordinator. 
%The cost to pay is less samples for averaging out the noise at the coordinator in the UL since  $N_2<n$.  
Meanwhile, the transmit power of the devices is established according to one of the power control mechanisms described in the next subsections. 
%Notice that A-CPT re-allocates (trades) $N_1$ symbols (previously used by U-CPT for averaging out the noise at the coordinator in the UL) for fading uncertainty DL mitigation capabilities.
%
\vspace{-4mm}
\subsection{Fixed Power Control}
\vspace{-2mm}
Herein, we use the same statistical inverse power control mechanism as in the case of U-CPT, i.e., $p_i=\bar{p}/\beta_i$, thus, the power allocation remains fixed in static deployments. We refer to A-CPT with such fixed power control as A-CPT-F.

Since $\mathbf{s}$ is phase-shifted by the $i-$th (active) device as $\mathbf{s}\hat{h}_i^*/|\hat{h}_i|$, the  signal $\mathbf{y}\in\mathbb{C}^{N_2}$ received at the coordinator  is given by
\begin{align}
y[n] &= \sum_{i\in\mathcal{K}}\sqrt{\bar{p}/\beta_i} \hat{h}_i^*h_i s[n]/|\hat{h}_i|+w[n]\stackrel{(a)}{=}\sum_{i\in\mathcal{K}}\sqrt{\bar{p}/\beta_i}(|\hat{h}_i|-\hat{h}_i^*\tilde{h}_i/|\hat{h}_i|)s[n]+w[n]\nonumber\\
&\stackrel{(b)}{=}\!\sum_{i\in\mathcal{K}}\!\!\sqrt{\!\frac{\bar{p}}{\beta_i}}\big(|\hat{h}_i|\!-\!\tilde{h}_i'\big)s[n]\!+\!w[n],\!\  n\!=\!1,2,\cdots,N_2,\label{yn}
%&\stackrel{(b)}{=}\big(\alpha_\mathcal{K}+\alpha_\mathcal{K}'\big)s[n]+w[n],%\nonumber\\
%&\stackrel{(c)}{=}\!\sum_{i\in\mathcal{K}}\!\big(\alpha_ib_i\!-\!\alpha_i'b_i'\big)s[n]\!+\!w[n],\  n\!=\!N_1\!+\!1,\cdots\!,N,
%\nonumber\stackrel{(c)}{=}\zeta_\mathcal{K}\mathbf{s}+\mathbf{w},
%&\stackrel{(c)}{=}\sum_{i\in\mathcal{K}}\sqrt{\bar{p}}\mathbf{s}(\frac{|\hat{h}_i|}{/\beta_i}-\frac{\tilde{h}_i'}{/\beta_i})
\end{align}
%
%for $$, 
where we use %$\mathbf{w}\sim\mathcal{CN}(\mathbf{0},\sigma^2\mathbf{I})\in\mathbb{C}^{N_2}$, while 
 $\hat{h}_i=h_i+\tilde{h}_i$
%setting $\varphi_\mathcal{K}\triangleq \bar{p}\sum_{i\in\mathcal{K}}\hat{h}_i^*\tilde{h}_i/|\hat{h}_i|^2$. By exploiting the statistical independence of $\hat{h}_i$ and $\tilde{h}_i$, we have that $\varphi_\mathcal{K}\sim $
%
and $\tilde{h}_i'\triangleq \frac{\hat{h}_i^*\tilde{h}_i}{|\hat{h}_i|}$ to attain (a) and (b), respectively. 
 %, $\alpha_\mathcal{K}\triangleq \sum_{i\in\mathcal{K}}\sqrt{\bar{p}/\beta_i}|\hat{h}_i|$, and $\alpha_\mathcal{K}'\triangleq \sum_{i\in\mathcal{K}}\sqrt{\bar{p}/\beta_i}\tilde{h}_i'$. 
Note that $\tilde{h}_i'$ is distributed as $\tilde{h}_i$ since $\frac{\hat{h}_i^*}{|\hat{h}_i|}$ is independent of $\tilde{h}_i$ and uniformly distributed in the unit circle. %Finally, we attain (c) after defining
%
%\begin{align}
%	\alpha_\mathcal{K}&\triangleq \sum_{i\in\mathcal{K}}\sqrt{\bar{p}/\beta_i}|\hat{h}_i|\sim,\\
%	\alpha_\mathcal{K}'&\triangleq \sum_{i\in\mathcal{K}}\sqrt{\bar{p}/\beta_i}\tilde{h}_i'\sim 
%	\alpha_i&\triangleq \sqrt{\bar{p}\Big(1\!+\!\frac{1}{N_1\bar{\gamma}_i}\Big)},\\
%	\alpha_i'&\triangleq \sqrt{\frac{\bar{p}}{N_1\bar{\gamma}_i}},\\
%	b_i&\triangleq |\hat{h}_i'|/\sqrt{\beta_i+1/(N_1\varrho)}\sim \mathrm{Ray}(1/\sqrt{2}),\\
%		b_i'&\triangleq \sqrt{N_1\varrho}\tilde{h}_i'\sim \mathcal{CN}(0,1),
%\end{align}
%
%where  $\bar{\gamma}_i\triangleq\varrho\beta_i$ as the average SNR of the broadcast pilot received at the $i-$th device.
 %, $b_i\triangleq |\hat{h}_i'|/\sqrt{\beta_i+1/(N_1\varrho)}\sim \mathrm{Ray}(1/2)$.

%Finally, $(c)$ comes from setting $\zeta_\mathcal{K}\triangleq \sum_{i\in\mathcal{K}}\sqrt{\bar{p}/\beta_i}(|\hat{h}_i|-\tilde{h}_i)$.
%
\subsubsection{Detector for $K$}\label{kacptf}
Observe that the statistics of $\mathbf{y}$ in \eqref{yn} depend on the specific set of active devices $\mathcal{K}$ rather than on its cardinality alone. Therefore, an optimal detector inevitably requires evaluating the likelihood of all the possible sets of cardinality $k$  for each hypothesis $\mathcal{H}_k$. This is a nondeterministic polynomial-time (np) complete  problem. Since $Q$ can be large, the number of combinations to test when assessing each hypothesis $\mathcal{H}_k$, i.e., $\binom{Q}{k}$, may be huge and computationally unaffordable. 
Therefore, here we do not present the optimum detector and resort directly to a sub-optimal but affordable method.

Similar to our approach in Section~\ref{kucpt}, we relax the detection problem to an estimation problem. For this, let us first define 
\begin{align}
\zeta_\mathcal{K}&\triangleq \mathbf{s}^H\mathbf{y}/N_2=\sum_{i\in\mathcal{K}}\!\!\!\!\!\!\!\underbrace{\sqrt{\bar{p}/\beta_i}|\hat{h}_i|}_{\!\!\!\!\sim \sqrt{\frac{\bar{p}}{2}}\mathrm{Ray}\big(\sqrt{1\!+\!\frac{1}{N_1\bar{\gamma}_i}}\big)}\!\!\!\!\!\!-\sum_{i\in\mathcal{K}}\!\!\!\!\underbrace{\sqrt{\bar{p}/\beta_i}\tilde{h}_i}_{\sim\mathcal{CN}\big(0,\frac{\bar{p}}{N_1\bar{\gamma}_i}\big)}\!+\!\underbrace{\mathbf{s}^H\mathbf{w}/N_2}_{\sim\mathcal{CN}\big(0,\frac{\sigma^2}{N_2}\big)}.\label{zetaK}
\end{align}
where $\bar{\gamma}_i\triangleq \varrho\beta_i$ is the average SNR of the broadcast pilot received at the $i-$th device.
Observe that although $K$ impacts not only the mean but multiple statistics of $\zeta_\mathcal{K}$ via the first two terms of \eqref{zetaK}, we can safely rely only on the mean as a single channel realization/sample is available. 

The first order statistics of $\zeta_\mathcal{K}$ can be computed as follows
\begin{align}
\mathbb{E}[\zeta_\mathcal{K}]&\stackrel{(a)}{=}\mathbb{E}\big[\Re\{\zeta_\mathcal{K}\}\big]=\sum_{i\in\mathcal{K}}\sqrt{\bar{p}/\beta_i}\mathbb{E}\big[|\hat{h}_i|]\stackrel{(b)}{=}\frac{\sqrt{\pi\bar{p}}}{2}\sum_{i\in\mathcal{K}}\sqrt{1+\frac{1}{N_1\bar{\gamma}_i}}\stackrel{(c)}{\approx} \frac{K\sqrt{\pi\bar{p}}}{2}\sqrt{1+\frac{1}{ N_1\bar{\gamma}'}},\label{K1}
\end{align}
where (a) holds since $\Im\{\zeta_\mathcal{K}\}=0$, while (b) follows from $\sqrt{\bar{p}/\beta_i}\mathbb{E}\big[|\hat{h}_i|\big]=\sqrt{\pi\bar{p}(1+1/(N_1\bar{\gamma}_i))}/2$. Thereafter, a relaxing approximation is needed to proceed further since the set of active devices and their corresponding $\{\beta_i\}$ and $\{\bar{\gamma}_i\}$ are unknown. This is done in (c) by exploiting  the relationship between the arithmetic and geometric mean twice, and defining 
\begin{align}
\bar{\gamma}'\triangleq \Big(\frac{1}{Q}\sum_{i\in\mathcal{Q}}\bar{\gamma}_i^{-1}\Big)^{-1}.
\end{align}
Observe that $\bar{\gamma}'$ is the harmonic mean of $\{\bar{\gamma}_i\}_{i\in\mathcal{Q}}$, while  
$1/\bar{\gamma}'$ can be interpreted as a traditional mean-based estimate of $1/\bar{\gamma}_i$. The estimate becomes more accurate as the set $\{\bar{\gamma}_i\}_{i\in\mathcal{Q}}$  becomes less disperse and/or as $\min_{i\in\mathcal{Q}}\bar{\gamma}_i$ increases. The latter is because as $\min_{i\in\mathcal{Q}}\bar{\gamma}_i$ increases, one can more easily assure that terms $\{1+1/(N_1\bar{\gamma}_i)\}$ become more similar among one another (closer to 1). Hence, the arithmetic and geometric means, which were used to go from (b) to (c), converge faster as exemplified in \cite{LopezAlves.2018,Lopez.2019} in the context of an ultra-reliable low-latency scenario. 
%Observe also that 
%small deployment areas. Nevertheless, 
%in high SNR regimes where $1/\bar{\gamma}_i\rightarrow 0$.

Using \eqref{K1} and assuming $\bar{\gamma}'$ is known at (computed beforehand by) the coordinator, the 
%(in the real domain) 
A-CPT-F estimator for $K$ 
%relying on first order analysis 
can be set as
\vspace{-2mm}
\begin{align}
\widehat{K}_{\mathrm{a-cpt-f}}^\mathbb{R}=2\Re\{\zeta_\mathcal{K}\}/\sqrt{\pi\bar{p}\big(1+\frac{1}{N_1\bar{\gamma}'}\big)},\label{ACPTF1}
\end{align}
while $\widehat{K}_{\mathrm{a-cpt-f}}=\mathrm{round}(\widehat{K}_{\mathrm{a-cpt-f}}^\mathbb{R})$.
\vspace{-2mm}
\subsubsection{On the Accuracy of the Proposed Estimator}
Departing from \eqref{zetaK} and \eqref{ACPTF1}, we have that
\begin{align}
\widehat{K}_{\mathrm{a-cpt-f}}^\mathbb{R}\!=\! \upsilon\bigg(\!\sum_{i\in\mathcal{K}}\!\!\!\!\!\!\!\!\!\!\!\!\!\!\underbrace{\frac{|\hat{h}_i|}{\sqrt{\beta_i}}}_{\!\!\!\!\!\frac{1}{\sqrt{2}}\mathrm{Ray}\big(\sqrt{1+\frac{1}{N_1\bar{\gamma}_i}}\big)}\!\!\!\!\!\!\!\!\!\!\!\!\!-\!\sum_{i\in\mathcal{K}}\!\underbrace{\Re\Big\{\!\frac{\tilde{h}_i'}{\sqrt{\beta_i}}\!\Big\}}_{\mathcal{N}\big(0,\frac{1}{2N_1\bar{\gamma}_i}\big)}\!\!\!+\!\underbrace{\Re\Big\{\!\frac{\mathbf{s}^H\mathbf{w}}{N_2\sqrt{\bar{p}}}\!\Big\}}_{\mathcal{N}\big(0,\frac{1}{2N_2\bar{\gamma}_c}\big)}\!\bigg),\label{KacptF}
\end{align}
where $\upsilon\triangleq 2/\sqrt{\pi(1+1/(N_1\bar{\gamma}'))}$, while $\bar{\gamma}_c\triangleq \bar{p}/\sigma^2$ is the target average receive  SNR at the coordinator.
Then,
\vspace{-2mm}
\begin{align}
\mathrm{var}\big[\widehat{K}_{\mathrm{a-cpt-f}}^\mathbb{R}\big] =\upsilon^2\bigg(\!\Big(1\!-\!\frac{\pi}{4}\Big)\!\sum_{i\in\mathcal{K}}\!\Big(1\!+\!\frac{1}{N_1\bar{\gamma}_i}\Big)\!+\!\sum_{i\in\mathcal{K}}\!\frac{1}{2N_1\bar{\gamma}_i}\!+\!\frac{1}{2N_2\bar{\gamma}_c}\!\bigg),\label{varK}
\end{align}
while  \eqref{varK} transforms to
\vspace{-2mm}
\begin{align}
\mathrm{var}\big[\widehat{K}_{\!\mathrm{a-cpt-f}}^\mathbb{R}\big]\!=\! \upsilon^2\bigg(\!\Big(1\!-\!\frac{\pi}{4}\Big)\Big(1\!+\!\frac{1}{N_1\bar{\gamma}'}\Big)\!K\!+\!\frac{1}{2N_1\bar{\gamma}'}\!+\!\frac{1}{2N_2\bar{\gamma}_c}\bigg)=\frac{4-\pi}{\pi}K+\frac{1+\frac{N_1\bar{\gamma}'}{N_2\bar{\gamma}_c}}{\pi(1+N_1\bar{\gamma}')}\label{varKS}
\end{align}
%
%where (a) holds for the simplified scenario where $\bar{\gamma}_i=\bar{\gamma},\ \forall i$, 
%This approximation holds accurate in homogeneous deployments where devices are subject to similar path loss and/or under relatively good SNR performance, i.e., $N_1\bar{\gamma}_i\gg 1,\ \forall i$, as typical in practice. Finally, we perform
%while simple algebraic transformations are performed
% to attain (b). 
for a simplified scenario with $\bar{\gamma}_i=\bar{\gamma}',\ \forall i$, and after performing simple algebraic transformations.

Different from the estimator variance for the case of U-CPT \eqref{var}, which has a quadratic dependence on $K$, here the dependence is linear. Therefore, A-CPT-F estimator provides already some resilience against fading uncertainty. %The cost to pay is less samples for averaging out the noise at the coordinator in the UL since  $N_2<N$.  
\vspace{-6mm}
\subsection{Dynamic Power Control}\label{DPC}
\vspace{-2mm}
Since channel estimates are already available at the devices, a more intuitive approach could rely on setting $p_i=\frac{\rho}{|\hat{h}_i|^2}$ given a fixed $\rho$. However,
% applying this power control mechanism in such a form faces 
%two main drawbacks: 
%a critical drawback. Basically,
%
%\begin{itemize}
	%\item 
	the devices may  require to transmit with very high power when channels experience poor conditions, i.e., very small $|\hat{h}_i|$, which may not be affordable.
	Note also that
	%\item 
	$\mathbb{E}[\frac{\rho}{|\hat{h}_i|^2}]$ does not converge, which means 
	%not only 
	that the average power consumption cannot be controlled through $\rho$.
	%, but also that an estimator relying on this power control mechanism may considerably diverge.
%\end{itemize}
%
To overcome this, we propose silencing the devices experiencing very poor channel conditions. Thus, only those active devices with $|\hat{h}_i|^2\ge \mu_i$ may transmit. This guarantees operating with the same statistics of receive power from each device signal at the coordinator. We refer to A-CPT with such dynamic power control as A-CPT-D. 
\vspace{-3mm}
\begin{theorem}\label{the1}
	The following parameter configuration
	\vspace{-3mm}
	\begin{align}	
	\rho= -\bar{p}(1-\xi)/\mathrm{li}(1-\xi),\ \ \mu_i  = -\vartheta_i\ln(1-\xi),\ \forall i,\label{u1rho}
	\end{align}
	where
	%
%	\begin{align}
	$\vartheta_i\triangleq \beta_i+\frac{1}{N_1\varrho}=\beta_i\big(1+\frac{1}{N_1\bar{\gamma}_i}\big)$,%\label{varth}
%	\end{align}
	%
	guarantees that the devices operate with the same average transmit energy consumption as for U-CPT and A-CPT-F, while constraining the  probability of the active devices of remaining in silence to  $\xi$ as $N_1\bar{\gamma}_i\rightarrow \infty,\ \forall i$.
\end{theorem}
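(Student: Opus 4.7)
The plan is to verify two separate claims encoded in the theorem: first, that the silence event has probability exactly $\xi$; second, that in the limit $N_1\bar{\gamma}_i\to\infty$ the average transmit energy matches that of U-CPT and A-CPT-F. Both follow from direct integration using the fact that $\hat{h}_i\sim\mathcal{CN}(0,\vartheta_i)$ implies $|\hat{h}_i|^2\sim\mathrm{Exp}(1/\vartheta_i)$, whose PDF is $f(x)=(1/\vartheta_i)e^{-x/\vartheta_i}$ on $x\ge 0$.

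For the silence probability, I would evaluate $\mathbb{P}(|\hat{h}_i|^2<\mu_i)=1-e^{-\mu_i/\vartheta_i}$, set this equal to $\xi$, and solve to obtain $\mu_i=-\vartheta_i\ln(1-\xi)$. This step is mechanical and does not rely on the high-SNR limit.

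For the energy condition, I would compute the expected transmit power \emph{conditional on the device transmitting} (i.e., on $\{|\hat{h}_i|^2\ge\mu_i\}$), since this is the quantity that must coincide with the constant $\bar{p}/\beta_i$ used by U-CPT and A-CPT-F. After the substitution $u=x/\vartheta_i$,
\begin{align*}
\mathbb{E}\!\left[\tfrac{\rho}{|\hat{h}_i|^2}\,\Big|\,|\hat{h}_i|^2\ge\mu_i\right]
=\frac{\rho}{(1-\xi)\vartheta_i}\int_{\mu_i/\vartheta_i}^{\infty}\!\frac{e^{-u}}{u}\,\mathrm{d}u
=-\frac{\rho\,\mathrm{Ei}(-\mu_i/\vartheta_i)}{(1-\xi)\vartheta_i}.
\end{align*}
Plugging in $\mu_i/\vartheta_i=-\ln(1-\xi)$ and invoking $\mathrm{li}(z)=\mathrm{Ei}(\ln z)$ collapses this to $-\rho\,\mathrm{li}(1-\xi)/((1-\xi)\vartheta_i)$. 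Finally, $\vartheta_i=\beta_i(1+1/(N_1\bar{\gamma}_i))\to\beta_i$ as $N_1\bar{\gamma}_i\to\infty$, so enforcing equality with $\bar{p}/\beta_i$ isolates $\rho=-\bar{p}(1-\xi)/\mathrm{li}(1-\xi)$.

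The main obstacle is recognizing the truncated exponential-type integral as a value of $\mathrm{Ei}$ and then collapsing it via the $\mathrm{li}$ identity, which is what yields the clean closed form. An interpretive subtlety also deserves attention: ``same average transmit energy consumption'' must be read as the per-transmission (non-silent) expectation, because the per-slot unconditional average would introduce an extra factor of $(1-\xi)$ and contradict the stated $\rho$.
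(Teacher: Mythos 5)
Your proposal is correct and follows essentially the same route as the paper's proof: obtain $\mu_i$ by inverting the exponential CDF of $|\hat{h}_i|^2\sim\mathrm{Exp}(1/\vartheta_i)$, evaluate the conditional expectation $\mathbb{E}\big[\rho/|\hat{h}_i|^2\,\big|\,|\hat{h}_i|^2\ge\mu_i\big]$ as a truncated exponential-integral that collapses to $-\rho\,\mathrm{li}(1-\xi)/((1-\xi)\vartheta_i)$, equate it to $\bar{p}/\beta_i$, and let $N_1\bar{\gamma}_i\to\infty$ so $\vartheta_i\to\beta_i$. Your reading of the energy constraint as the expectation conditioned on transmitting is exactly the constraint the paper imposes, so there is no gap.
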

\begin{proof}
See Appendix~\ref{AppT1}. \phantom\qedhere
\end{proof}
\vspace{-2mm}
Some insights on the configuration of $\xi$ are discussed in Section~\ref{scalab}.
\subsubsection{Detector for $K$}
Let us denote by $\mathcal{K}'$ and $\mathcal{K}''$ the set of active devices with $|\hat{h}_i|^2\ge \mu_i$ (UL transmitting devices) and $|\hat{h}_i|^2< \mu_i$ (silent devices), respectively, thus, $\mathcal{K}=\mathcal{K}'\cup \mathcal{K}''$. Then, 
%with  previously discussed dynamic power control mechanism in place, and assuming that all active devices are indeed transmitting, 
the  signal received at the coordinator, $\mathbf{y}\in\mathbb{C}^{N_2}$, is given by
\vspace{-2mm}
\begin{align}
%\mathbf{y} = \sum_{i\in\mathcal{K}'}\sqrt{\rho}\mathbf{s} \hat{h}_i^*h_i/|\hat{h}_i|^2+\mathbf{w}\stackrel{(a)}{=}\sqrt{\rho}\mathbf{s}K'+\varphi_{\mathcal{K}'}\mathbf{s}+\mathbf{w},\label{ypE}
y[n] = \sum_{i\in\mathcal{K}'}\sqrt{\rho} \hat{h}_i^*h_is[n]/|\hat{h}_i|^2+w[n]=\sqrt{\rho}K's[n]\!+\!\varphi_{\mathcal{K}'}s[n]\!+\!w[n],\ n\!=\!1,2,\cdots,N_2,\label{ypE}
\end{align}
where $K'\triangleq |\mathcal{K}'|$, while last step
%$\mathbf{w}\sim\mathcal{CN}(\mathbf{0},\sigma^2\mathbf{I})\in\mathbb{C}^{N_2}$, 
%and (a) 
follows after using $\hat{h}_i=h_i+\tilde{h}_i$ and setting $\varphi_{\mathcal{K}'}\triangleq \sqrt{\rho}\sum_{i\in\mathcal{K}'}\frac{\hat{h}_i^*\tilde{h}_i}{|\hat{h}_i|^2}$.

Observe that $\mathbf{y}$ in \eqref{ypE} not only depends directly on the cardinality but also on the specific set of transmitting devices $K'$ via $\varphi_{\mathcal{K}'}$. Therefore, our arguments in Section~\ref{kacptf} against the viability of an optimum detector are also valid here, hence, we next resort to a sub-optimal but affordable/efficient detector exploiting a (relaxed) estimation of $K'$. % in the real domain.

Since $\mathbb{E}[\varphi_{\mathcal{K}'}]=0$ and $\mathbb{E}[\Im\{\mathbf{s}^H\mathbf{y}\}]=0$, and discarding the information on $K'$ contained in $\varphi_{\mathcal{K}'}$, we can directly estimate the relaxed $K'$
%in the real domain 
as
\vspace{-2mm}
\begin{align}
\widehat{K'}_{\mathrm{a-cpt-d}}^\mathbb{R}=\Re\{\mathbf{s}^H\mathbf{y}\}/(N_2\sqrt{\rho})\label{KpE}
\end{align}
while, $\widehat{K'}_{\mathrm{a-cpt-d}}=\mathrm{round}(\widehat{K'}_{\mathrm{a-cpt-d}}^\mathbb{R})$.
To proceed further, we assume all devices are configured such that $\xi_i=\xi,\forall i$ (i.e., by adopting the parameter configuration given in Theorem~\ref{the1}). Then, 
%In a simplified, but often practical, scenario where devices are configured such that $\xi_i=\xi,\forall i$ (i.e., by adopting the parameter configuration given in Theorem~\ref{the1} in the high SNR regime), 
the following result establishes a relatively simple estimator for $K''$. Nevertheless, readers may refer to Appendix~\ref{AppT2}, not only to follow Theorem~\ref{th2}'s proof, but also to get insights on how the estimator of $K''$ could be computed for a general case where $\xi_i\ne\xi_j$, for some $i,j\in\mathcal{K}$.  
\vspace{-2mm}
\begin{theorem}\label{th2}
	Assume $\xi_i=\xi,\forall i$, then the MMSE estimate of $K''$ is given by
	\vspace{-2mm}
	\begin{align}
	\widehat{K''}_\mathrm{\!\!a-cpt-d}^\mathbb{R}&\!=\frac{\xi}{(1\!-\!\xi)^2}\Bigg(\xi^{K_l}\bigg((\xi\!-\!1)(K_l\!+\!1)\binom{1\!+\!K_l\!+\!\widehat{K'}}{1+K_l}\ _2F_1\big(1\!+\!K_l,-\widehat{K'},2\!+\!K_l,\xi\big)\!-\!\binom{2\!+\!K_l\!+\!\widehat{K'}}{2+K_l}\nonumber\\
	\vspace{-4mm}
	&\qquad\qquad\qquad\qquad\qquad\qquad\times\xi\!\ _2F_1\big(1\!+\!K_l,-\widehat{K'},3\!+\!K_l,\xi\big)\!\bigg)\!\!+\!\widehat{K'}\!+\!1 \!\Bigg),\label{mmse}
	\end{align}
	where $K_l$ is the solution of
	\vspace{-3mm}
	\begin{align}
	B_{\xi}\big(1+K_l,1+\widehat{K'}\big)(1+K_l+\widehat{K'})!=\xi K_l!\widehat{K'}!.\label{Kl}
	\end{align}	
\end{theorem}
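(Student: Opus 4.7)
The approach is to derive the Bayesian MMSE estimator $\widehat{K''}_{\mathrm{a-cpt-d}}^{\mathbb{R}} = \mathbb{E}[K'' \mid K' = \widehat{K'}]$, treating $\widehat{K'}$ as a point estimate of the actual number $K'$ of transmitting devices. The calculation hinges on three ingredients: (i) a conditional model for $K''$ given $K'$, (ii) a truncated posterior mean, and (iii) a consistency condition that pins down the truncation level $K_l$.

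First, since each active device is independently silent with probability $\xi$ by Theorem~\ref{the1} (in the homogeneous case $\xi_i=\xi$), we have $K'' \mid K \sim \mathrm{Bin}(K,\xi)$ and $K' \mid K \sim \mathrm{Bin}(K,1-\xi)$. Adopting a non-informative (improper uniform) prior on $K$, Bayes' rule yields
$$
\mathbb{P}\!\big(K''=k'' \mid K'=\widehat{K'}\big) \;\propto\; \binom{\widehat{K'}+k''}{\widehat{K'}}\,\xi^{k''},
$$
which is (up to truncation) a negative-binomial $\mathrm{NB}(\widehat{K'}+1,\,1-\xi)$ posterior on $k''$.

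Second, I would restrict $k'' \in \{0,1,\ldots,K_l\}$ and write the MMSE estimator as the ratio
$$
\widehat{K''}_{\mathrm{a-cpt-d}}^{\mathbb{R}} \;=\; \frac{\sum_{k''=0}^{K_l} k''\binom{\widehat{K'}+k''}{\widehat{K'}}\,\xi^{k''}}{\sum_{k''=0}^{K_l} \binom{\widehat{K'}+k''}{\widehat{K'}}\,\xi^{k''}}.
$$
The denominator is a classical truncated negative-binomial normalizer that admits a closed form as a Gauss hypergeometric ${}_2F_1$. The numerator follows either by differentiating the denominator with respect to $\xi$ or by a shift-of-index identity, yielding the second ${}_2F_1$ term with shifted parameters $(1+K_l,-\widehat{K'},3+K_l,\xi)$ that appears in \eqref{mmse}. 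Isolating the untruncated-mean contribution $\xi(\widehat{K'}+1)/(1-\xi)$ (which arises as the $K_l\to\infty$ limit) and factoring the common $\xi/(1-\xi)^2$ should reproduce \eqref{mmse} term by term.

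Third, to pin down $K_l$, I would invoke the natural calibration that the excluded tail mass of the untruncated posterior matches the silencing probability, i.e., $\mathbb{P}(K''>K_l\mid K'=\widehat{K'}) = \xi$. Using the well-known identity expressing the negative-binomial survival function via the regularized incomplete beta function $I_\xi(1+K_l,1+\widehat{K'})$, this becomes $I_\xi(1+K_l,1+\widehat{K'}) = \xi$; multiplying through by $B(1+K_l,1+\widehat{K'}) = K_l!\,\widehat{K'}!/(1+K_l+\widehat{K'})!$ then recovers \eqref{Kl}. The main technical obstacle will be the combinatorial bookkeeping that reduces the truncated-sum ratio to the specific two-${}_2F_1$ form in \eqref{mmse}: the binomial coefficients $\binom{1+K_l+\widehat{K'}}{1+K_l}$ and $\binom{2+K_l+\widehat{K'}}{2+K_l}$ emerge from the boundary term generated by the partial summation, and care must be taken to align the Pochhammer parameters correctly so that the prefactor $\xi^{K_l}(\xi-1)(K_l+1)$ and the trailing $+\widehat{K'}+1$ survive after simplification.
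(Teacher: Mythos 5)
Your proposal is correct and arrives at the same computational core as the paper, but your derivational framing differs in two genuine ways, both of which are worth recording. (i) The conditional law of $K''$ given $K'$: the paper obtains $p_{K''|K'}(k'')=\binom{K'+k''}{k''}\xi^{k''}(1-\xi)^{K'}$ by the substitution $K=K'+k''$ inside the binomial PMF $p_{K''|K}$ — an \emph{unnormalized} object whose total mass is $\tfrac{1}{1-\xi}$ — whereas you derive the same shape as a proper negative-binomial posterior $\mathrm{NB}(\widehat{K'}+1,1-\xi)$ from an improper uniform prior on $K$; your route gives a principled Bayesian meaning to what the paper does heuristically, at the cost of an extra modeling assumption (the flat prior). (ii) The truncation: the paper fixes $K_l$ by requiring the unnormalized mass on $\{0,\dots,K_l\}$ to equal exactly $1$, while you require the negative-binomial tail mass beyond $K_l$ to equal $\xi$; multiplying the paper's condition by $(1-\xi)$ shows these are literally the same equation, and both reduce to \eqref{Kl} via $I_\xi(1+K_l,1+\widehat{K'})=\xi$ together with $B(1+K_l,1+\widehat{K'})=K_l!\,\widehat{K'}!/(1+K_l+\widehat{K'})!$, exactly as you argue. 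One step you should make explicit: because you normalize and the paper does not, your estimator is a ratio of truncated sums, and \eqref{mmse} contains no ${}_2F_1$ in any denominator; to collapse the ratio you must invoke the calibration identity $\sum_{k''=0}^{K_l}\binom{\widehat{K'}+k''}{k''}\xi^{k''}=(1-\xi)^{-\widehat{K'}}$, which holds precisely at the $K_l$ solving \eqref{Kl}. This same identity reconciles the factor discrepancy in your parenthetical: the untruncated posterior mean is indeed $\xi(1+\widehat{K'})/(1-\xi)$, but the leading term of \eqref{mmse} (and the corollary \eqref{Kp2}) is $\xi(1+\widehat{K'})/(1-\xi)^2$, the extra $\tfrac{1}{1-\xi}$ arising because the truncated normalizer equals $(1-\xi)$ times the untruncated one. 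With that substitution made, your partial-summation/index-shift algebra is exactly the paper's route to \eqref{mmse}.
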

\vspace{-6mm}
\begin{proof}
	See Appendix~\ref{AppT2}. \phantom\qedhere
\end{proof}
\vspace{-2mm}
\vspace{-4mm}
\begin{corollary}
	For a sufficiently large $K_l$, as expected for small $\xi$, such result simplifies to
	\vspace{-2mm}
	\begin{align}
	\widehat{K''}_\mathrm{a-cpt-d}^\mathbb{R}\stackrel{K_l \rightarrow \infty}{=} \xi\big(1+\widehat{K'}_\mathrm{a-cpt-d}^\mathbb{R}\big)/(1-\xi)^2\label{Kp2}
	\end{align}
\end{corollary}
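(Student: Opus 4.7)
The plan is to show that as $K_l\to\infty$, the entire bracketed expression in \eqref{mmse} that is prefixed by $\xi^{K_l}$ vanishes, leaving only the constant term $\widehat{K'}+1$, from which \eqref{Kp2} follows immediately after multiplying by $\xi/(1-\xi)^2$.

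First I would recall that $_2F_1(a,b,c,z)$ terminates whenever $b$ is a non-positive integer. Since $\widehat{K'}$ is a non-negative integer, both hypergeometric factors in \eqref{mmse} reduce to polynomials of degree at most $\widehat{K'}$ in $\xi$. Writing $_2F_1(1+K_l,-\widehat{K'},c+K_l,\xi)=\sum_{j=0}^{\widehat{K'}}\frac{(1+K_l)_j(-\widehat{K'})_j}{(c+K_l)_j\,j!}\xi^j$ for $c\in\{2,3\}$, and observing that the ratio of rising factorials $(1+K_l)_j/(c+K_l)_j$ is bounded by $1$ (and in fact tends to $1$) for every fixed $j\le \widehat{K'}$, I conclude that both hypergeometric quantities remain uniformly bounded in $K_l$; in particular they tend to $(1-\xi)^{\widehat{K'}}$, but for the purposes of the corollary only their boundedness matters.

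Next I would estimate the binomial coefficients. For fixed $\widehat{K'}$, $\binom{1+K_l+\widehat{K'}}{1+K_l}=\binom{1+K_l+\widehat{K'}}{\widehat{K'}}$ and $\binom{2+K_l+\widehat{K'}}{2+K_l}=\binom{2+K_l+\widehat{K'}}{\widehat{K'}}$ both grow only polynomially in $K_l$, of order $K_l^{\widehat{K'}}$. Combining with the extra $(K_l+1)$ prefactor in the first summand, the two terms inside the large parentheses that carry $\xi^{K_l}$ are at most $O(K_l^{\widehat{K'}+1})$ in magnitude. Since $\xi\in(0,1)$, exponential decay dominates polynomial growth, so
\begin{equation*}
\xi^{K_l}\,O\bigl(K_l^{\widehat{K'}+1}\bigr)\xrightarrow{K_l\to\infty}0.
\end{equation*}
The only surviving contribution inside the outer parentheses is therefore $\widehat{K'}+1$, and multiplying by $\xi/(1-\xi)^2$ yields \eqref{Kp2}.

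The only subtlety will be justifying the uniform bound on the hypergeometric factors: I expect to argue this by noting that each coefficient in the finite-sum representation is monotone in $K_l$ with limit $\binom{\widehat{K'}}{j}(-1)^j$, so a termwise bound suffices because the sum has a fixed number of $\widehat{K'}+1$ summands independent of $K_l$. Once that is in place, the rest of the argument reduces to the elementary fact that $\xi^{K_l}$ decays faster than any polynomial in $K_l$.
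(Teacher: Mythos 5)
Your proof is correct, but it takes a genuinely different route from the paper. The paper's own proof deliberately avoids the limit you compute: it says ``rather than taking $K_l\rightarrow\infty$ in \eqref{mmse}, use \eqref{mmse0}'', i.e., it returns to the series representation $\widehat{K''}^\mathbb{R}_\mathrm{a-cpt-d}=\sum_{k''=0}^{K_l}k''\binom{\widehat{K'}+k''}{k''}\xi^{k''}(1-\xi)^{\widehat{K'}}$ in Appendix~\ref{AppT2}, lets the upper limit go to infinity, and evaluates the resulting full series as a negative-binomial-type mean via the generating identity $\sum_{k\ge 0}\binom{\widehat{K'}+k}{k}x^k=(1-x)^{-(\widehat{K'}+1)}$ and differentiation, which yields $\xi(1+\widehat{K'})/(1-\xi)^2$ directly. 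You instead perform the asymptotic analysis on the closed form \eqref{mmse} itself: you use that ${}_2F_1(1+K_l,-\widehat{K'},c+K_l,\xi)$ terminates (this does require treating $\widehat{K'}$ as a non-negative integer, consistent with the binomial coefficients in \eqref{mmse}, even though the corollary is then applied with the real-valued $\widehat{K'}^\mathbb{R}_\mathrm{a-cpt-d}$ --- a looseness the paper shares), bound the Pochhammer ratios $(1+K_l)_j/(c+K_l)_j$ by $1$, note the binomial coefficients grow only like $K_l^{\widehat{K'}}$, and conclude that the $\xi^{K_l}$-prefixed correction is $O\bigl(\xi^{K_l}K_l^{\widehat{K'}+1}\bigr)\to 0$. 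The paper's route is shorter and needs no asymptotic estimates; yours buys two things the paper's does not: it verifies that the closed form \eqref{mmse} is itself consistent with the limit \eqref{Kp2} (rather than sidestepping \eqref{mmse}), and it quantifies the error, showing the correction decays like $\xi^{K_l}K_l^{\widehat{K'}+1}$, which makes precise the claim that \eqref{Kp2} ``holds accurately already for small $\xi$.''
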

\begin{proof}
	Rather than taking $K_l\rightarrow\infty$ in \eqref{mmse}, use \eqref{mmse0} in Appendix~\ref{AppT2}.% to attain \eqref{Kp2}.
\end{proof}
%
%
%which we will be using hereinafter. 

Since \eqref{Kp2} holds accurately already for small $\xi$, hereinafter we adopt it to simplify our exposition and analysis. 
Note that we are using $\widehat{K'}_\mathrm{a-cpt-d}^\mathbb{R}$ instead of $\widehat{K'}_\mathrm{a-cpt-d}$ aiming to mitigate the accumulation of rounding errors. 
%Then, $\widehat{K''}_\mathrm{a-cpt-d}=\mathrm{round}\big(\widehat{K''}_\mathrm{a-cpt-d}^\mathbb{R}\big)$. 
From \eqref{Kp2}, one can notice that as $\widehat{K'}_\mathrm{a-cpt-d}$ increases, $\mathbb{P}(K\ne K')$ increases as well, motivating the introduction of the  correcting term $\widehat{K''}$.
Finally, 
\begin{align}
\widehat{K}_\mathrm{a-cpt-d}^\mathbb{R}&=\widehat{K'}_\mathrm{a-cpt-d}^\mathbb{R}+\widehat{K''}_\mathrm{a-cpt-d}^\mathbb{R},\label{varKT}
\end{align}
and $\widehat{K}_\mathrm{a-cpt-d}=\mathrm{round}(\widehat{K}_\mathrm{a-cpt-d}^\mathbb{R})$.
%
%and we  to mitigate
%
\subsubsection{On the Accuracy of the Proposed Estimator}
\vspace{-1mm}
%Assuming that $|\hat{h}_i|^2\ge \mu_i,\ \forall i$, the estimator variance is given by
%
\begin{theorem}\label{TH3}
	\vspace{-4mm}
	The variance of the A-CPT-D estimator  given in \eqref{varKT}
	%$\widehat{K}^\mathbb{R}$  
	is approximately given by
	\begin{align}
	\mathrm{var}[\widehat{K}_{\mathrm{a-cpt-d}}^\mathbb{R}]\!=\!\frac{\psi}{2}\Big(\frac{1}{N_2\bar{\gamma}_c'}\!-\!\frac{\mathrm{li}(1\!-\!\xi)}{1-\xi}\sum_{i\in\mathcal{K}'}\frac{1}{1\!+\!N_1\bar{\gamma}_i}\Big)\label{varF}
	\end{align}
	for relatively small $\xi$, where $\psi\triangleq 1+\xi^2/(1-\xi)^4$ and $\bar{\gamma}_c'\triangleq \rho/\sigma^2$ is the target average receive  SNR at the coordinator.
\end{theorem}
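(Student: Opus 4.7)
The plan is to propagate the variance through the two-stage estimator $\widehat{K}_\mathrm{a-cpt-d}^\mathbb{R}=\widehat{K'}_\mathrm{a-cpt-d}^\mathbb{R}+\widehat{K''}_\mathrm{a-cpt-d}^\mathbb{R}$ from \eqref{varKT}. Substituting the closed-form MMSE correction \eqref{Kp2} makes $\widehat{K}_\mathrm{a-cpt-d}^\mathbb{R}$ affine in $\widehat{K'}_\mathrm{a-cpt-d}^\mathbb{R}$, so its variance is driven entirely by $\mathrm{var}[\widehat{K'}_\mathrm{a-cpt-d}^\mathbb{R}]$. I would invoke the MMSE orthogonality principle to treat the noise-driven error in $\widehat{K'}_\mathrm{a-cpt-d}^\mathbb{R}$ and the MMSE residual of $\widehat{K''}_\mathrm{a-cpt-d}^\mathbb{R}$ as uncorrelated contributions, so that the variances add: $\mathrm{var}[\widehat{K}_\mathrm{a-cpt-d}^\mathbb{R}]\approx \mathrm{var}[\widehat{K'}_\mathrm{a-cpt-d}^\mathbb{R}]+\bigl(\xi/(1-\xi)^2\bigr)^2\mathrm{var}[\widehat{K'}_\mathrm{a-cpt-d}^\mathbb{R}]=\psi\,\mathrm{var}[\widehat{K'}_\mathrm{a-cpt-d}^\mathbb{R}]$. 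This reduces everything to computing the variance of the base estimator.

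Next, I would evaluate $\mathrm{var}[\widehat{K'}_\mathrm{a-cpt-d}^\mathbb{R}]$ directly from \eqref{KpE} and \eqref{ypE}. Conditioning on $\mathcal{K}'$, the estimator decomposes as $\widehat{K'}_\mathrm{a-cpt-d}^\mathbb{R}=K'+\Re\{\varphi_{\mathcal{K}'}\}/\sqrt{\rho}+\Re\{\mathbf{s}^H\mathbf{w}\}/(N_2\sqrt{\rho})$, with two independent zero-mean noise sources. The AWGN term immediately contributes $\sigma^2/(2N_2\rho)=1/(2N_2\bar{\gamma}_c')$. For the channel-estimation-error term $\varphi_{\mathcal{K}'}/\sqrt{\rho}=\sum_{i\in\mathcal{K}'}\hat{h}_i^*\tilde{h}_i/|\hat{h}_i|^2$, I would condition on $\hat{h}_i$: since $\tilde{h}_i\sim\mathcal{CN}(0,1/(N_1\varrho))$ is independent of $\hat{h}_i$, each summand is conditionally $\mathcal{CN}(0,1/(N_1\varrho|\hat{h}_i|^2))$, so its real part has conditional variance $1/(2N_1\varrho|\hat{h}_i|^2)$. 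Averaging across the truncated fading $|\hat{h}_i|^2\ge\mu_i$ reduces the analysis to evaluating the scalar quantity $\mathbb{E}\bigl[\,1/|\hat{h}_i|^2\mid|\hat{h}_i|^2\ge\mu_i\bigr]$.

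This truncated expectation is the key technical step. Since $|\hat{h}_i|^2\sim\mathrm{Exp}(1/\vartheta_i)$ and $\mu_i=-\vartheta_i\ln(1-\xi)$ from Theorem~\ref{the1}, the change of variable $u=x/\vartheta_i$ converts the conditional integral to $\int_{-\ln(1-\xi)}^{\infty}e^{-u}u^{-1}\,du$ divided by the tail probability $1-\xi$. The numerator equals $-\mathrm{Ei}(\ln(1-\xi))$, and the identity $\mathrm{li}(x)=\mathrm{Ei}(\ln x)$ \cite[eq.~(6.2.8)]{Olver.2010} gives $\mathbb{E}\bigl[\,1/|\hat{h}_i|^2\mid|\hat{h}_i|^2\ge\mu_i\bigr]=-\mathrm{li}(1-\xi)/\bigl((1-\xi)\vartheta_i\bigr)$. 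Substituting $\vartheta_i=\beta_i(1+1/(N_1\bar{\gamma}_i))$ and $\bar{\gamma}_i=\varrho\beta_i$ collapses the per-device variance to $-\mathrm{li}(1-\xi)/\bigl(2(1-\xi)(1+N_1\bar{\gamma}_i)\bigr)$; summing over $i\in\mathcal{K}'$, adding the AWGN term, and multiplying by $\psi$ yields \eqref{varF}.

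I expect the main obstacles to be twofold. First, cleanly motivating the orthogonality-based collapse to $\psi=1+\xi^2/(1-\xi)^4$ rather than the exact propagation factor $\bigl(1+\xi/(1-\xi)^2\bigr)^2$, which hinges on treating the MMSE residual of \eqref{Kp2} as uncorrelated with the noise-driven error in $\widehat{K'}_\mathrm{a-cpt-d}^\mathbb{R}$ and is the source of the ``relatively small $\xi$'' caveat in the statement. Second, recognizing that the particular threshold $\mu_i=-\vartheta_i\ln(1-\xi)$ prescribed in Theorem~\ref{the1} is exactly what makes the truncated exponential integral telescope into the compact logarithmic-integral form appearing in \eqref{varF}, rather than a messier dependence on the raw $\{\vartheta_i\}$.
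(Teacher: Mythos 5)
Your proof is correct and follows essentially the same route as the paper's: the same $\psi$-fold variance propagation from \eqref{Kp2} (adding the two components' variances, which is precisely the paper's small-$\xi$ approximation), the same split of $\mathrm{var}\big[\widehat{K'}_\mathrm{a-cpt-d}^\mathbb{R}\big]$ into the AWGN term $1/(2N_2\bar{\gamma}_c')$ plus per-device channel-estimation-error terms, and the same truncated-exponential integral giving $-\mathrm{li}(1-\xi)/\big((1-\xi)\vartheta_i\big)$, exactly as in the paper's Appendices~\ref{AppT1} and \ref{AppTH3}. The only cosmetic difference is that you condition on $\hat{h}_i$ and apply the law of total variance, whereas the paper forms the ratio $Z_i=X_i/Y_i$ with truncated-Rayleigh $Y_i$ and evaluates $\mathbb{E}[X_i^2]\mathbb{E}[Y_i^{-2}]$; these are the same computation.
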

\vspace{-4mm}
\begin{proof}
	See Appendix~\ref{AppTH3}.
\end{proof}

Note that he variance of the A-CPT-D estimator increases linearly with $K'$ due to the sum operator in \eqref{varF}. In fact, for the specific scenario where devices' deployment is homogeneous in the sense that $\bar{\gamma}_i=\bar{\gamma}'$,
%\ (\beta_i=\beta,\ \vartheta_i=\vartheta),\ \mu_i=\mu\ (\xi_i=\xi),\ \forall i\in\mathcal{K}'$, 
\eqref{varF} simplifies to
\begin{align}
\mathrm{var}[\widehat{K}_{\mathrm{a-cpt-d}}^\mathbb{R}]&=\frac{\psi}{2}\Big(\frac{1}{N_2\bar{\gamma}_c'}\!-\!\frac{\mathrm{li}(1-\xi)K'}{(1+N_1\bar{\gamma}')(1-\xi)}\Big)\label{varF2}.
\end{align}
Finally, one can get more insights on the variance performance specified by \eqref{varF2} by leveraging \cite[eq. (6.8.1)]{Olver.2010} to attain
\begin{align}
\mathrm{var}[\widehat{K}_{\mathrm{a-cpt-d}}^\mathbb{R}]&<\frac{\psi/2}{1+ N_1\bar{\gamma}'}\ln\Big(1-\frac{1}{\ln(1-\xi)}\Big)K'+\frac{\psi}{2N_2\bar{\gamma}_c'}\approx\frac{\psi \ln(1+\xi)K'}{2(1+ N_1\bar{\gamma}')}+\frac{\psi}{2N_2\bar{\gamma}_c'},\label{vb}
\end{align}
where last step comes from leveraging $-\ln(1-\xi)\approx \xi$ which holds accurate for $\xi\ll 1$.
Since $N_1\bar{\gamma}'\gg 1$ holds in practice, \eqref{vb} leads to significantly small variance bound values.
\vspace{-5mm}
\section{On the Distribution of the Proposed Detectors}\label{Dist}
\vspace{-2mm}
In general, and especially for the considered setup, the estimator variance cannot be directly used to quantify, at least thoroughly, the performance degradation due to detection mismatches. Instead, the distribution of the classification results must be leveraged. 

The PMF of $\widehat{K}$ can be found as
\vspace{-2mm}
\begin{align}
	p_{\widehat{K}}(\hat{k})&=\mathbb{P}\big(\mathrm{round}(\widehat{K}^\mathbb{R})=\hat{k}\big),
\end{align}
where $\mathrm{round}(\cdot)$ denotes a rounding rule. Herein, we adopt the following rounding rules:
\begin{itemize}
	\item \textit{Rounding to nearest integer (NI)}, where $\mathrm{round}(x)$ equals $\lfloor x \rfloor$ or $\lceil x \rceil$ if the fractional part of $x$ is respectively smaller or greater than $1/2$.
	\vspace{-2mm}
%	\begin{align}
%		\mathrm{round}(x)
%		=\left\{\begin{array}{ll}
%			\lfloor x \rfloor,& \text{if } \{x\}<1/2\\
%			\lceil x \rceil,& \text{if } \{x\}\ge 1/2
%		\end{array}\right..
%	\end{align}
Then,
\vspace{-2mm}
\begin{align}
	p_{\widehat{K}}(\hat{k})&=\mathbb{P}\big(\hat{k}-1/2\le \widehat{K}^\mathbb{R}\le \hat{k}+1/2\big)=F_{\widehat{K}^\mathbb{R}}(\hat{k}+1/2)-F_{\widehat{K}^\mathbb{R}}(\hat{k}-1/2).
\end{align}
\vspace{-12mm}
\item \textit{Maximum likelihood (ML)-based rounding}, where 
	\vspace{-3mm}
\begin{align}
	\widehat{K}=\arg \max_{k\in\mathbb{N}}f_{\widehat{K}^\mathbb{R}|K=k}(\hat{k}).\label{mlr}
\end{align}
Here, $f_{\widehat{K}^\mathbb{R}|K=k}(\hat{k})$ is the likelihood function of $\widehat{K}^\mathbb{R}$ given the outcome $k$ of $K$.
\vspace{-2mm}
%Then,
%
%\begin{align}%
%	p(\widehat{K})&=\mathbb{P}\big(\arg \max_{k\in\mathbb{N}}f_{\widehat{K}^\mathbb{R}|\widehat{K}=k}(\widehat{K}^\mathbb{R})=\widehat{K}\big),
%\end{align}
%
%but becomes cumbersome simplifying
%\item \textit{genie-aided (GA) rounding}, where
%%
%\begin{align}
%	\widehat{K}=\left\{\begin{array}{ll}
%		\lfloor \hat{K}^\mathbb{R}\rfloor,& \text{if } \hat{K}^\mathbb{R}\ge K\\
%		\lceil \hat{K}^\mathbb{R}\rceil,& \text{if }  \hat{K}^\mathbb{R}\le K
%	\end{array}\right.,
%\end{align}
%which requires a priori knowledge of $K$. This is obviously not feasible in practice as having such knowledge constitutes the whole purpose of our proposed framework. Nevertheless, we include such ideal rounding operator to get insights on the margin of performance loss introduced by the NI and ML-based approaches.
% the distribution of $\widehat{K}$ is leveraged for the rounding as follows. Let 
%%
%\begin{align}
%\widehat{K}_1=\lfloor \widehat{K}^\mathbb{R} \rfloor \le \widehat{K}^\mathbb{R}\le \lceil\widehat{K}^\mathbb{R}\rceil=\widehat{K}_2,\nonumber
%\end{align}
%then, applying Bayes theorem, we have
%%
%\begin{align}
%	p(\widehat{K}_{1,2}|\widehat{K}^\mathbb{R})=\frac{f_{\widehat{K}^\mathbb{R}|\widehat{K}=\widehat{K}_{1,2}}(\widehat{K}^\mathbb{R})p(\widehat{K}_{1,2})}{f_{\widehat{K}^\mathbb{R}}(\widehat{K}^\mathbb{R})}.
%\end{align}
\end{itemize}

Observe that the distribution of the relaxed estimator, $\hat{K}^\mathbb{R}$, is needed for computing $p_{\widehat{K}}(\hat{k})$ under both rounding approaches. Thus, our efforts are in this direction in the following subsections. 
%Therefore, we %characterize such statistics for each of the estimators.
%proceed to find $\hat{K}^\mathbb{R}$ for each estimation scheme in the following subsections. 
Interestingly, different from the NI-based rounding, 
%in case of  ML-% 
%and GA-
%based rounding, and different from NI-based, such 
the distribution under ML-based rounding impacts the rounding operation itself. Hence, it becomes cumbersome finding $p_{\widehat{K}}(\hat{k})$ for the latter.
%when exploiting ML-% and GA-based rounding.
Moreover, note that due to the intrinsic similarities between \eqref{mlr} and \eqref{hK}, the relaxed estimators followed by ML-based rounding are expected to perform similar to the optimal NP detectors. This is verified in the case of U-CPT, for which the optimal detector was derived in Theorem~\ref{EK}, in Section~\ref{results}, specifically Table~\ref{table3}.
%Finally, it is worth noting that the optimum detector under U-CPT \eqref{hKF} is expected to perform very similar to the ML-based rounding over the relaxed estimator \eqref{UCPTK}.
%, while that is no longer the case for A-CPT mechanisms.
	\vspace{-5mm}
\subsection{U-CPT}
	\vspace{-2mm}
From \eqref{UCPTK}, it follows directly that $\widehat{K}_{\mathrm{u-cpt}}^\mathbb{R}$ is a shifted exponential RV since $|\mathbf{s}^H\mathbf{y}|^2/N^2\sim \mathrm{Exp}(1/(K\bar{p}+\sigma^2))$. Specifically,
\begin{align}
F_{\widehat{K}^\mathbb{R}_{\mathrm{u-cpt}}|K}(\hat{k})=1-e^{-\frac{1+N\bar{\gamma}\hat{k}}{1+N\bar{\gamma}K}},\ \ 
f_{\widehat{K}^\mathbb{R}_{\mathrm{u-cpt}}|K}(\hat{k})=\frac{N\bar{\gamma}}{1+N\bar{\gamma}K}e^{-\frac{1+N\bar{\gamma}\hat{k}}{1+N\bar{\gamma}K}},\ \ \hat{k}\ge -1/(N\bar{\gamma}).
\end{align}
%
%for $\hat{k}\ge -1/(N\bar{\gamma})$. 
	\vspace{-14mm}
\subsection{A-CPT-F}\label{acptf}
	\vspace{-2mm}
According to \eqref{KacptF}, $\widehat{K}_\mathrm{a-cpt-f}^\mathbb{R}$ can be written as 
%a weighted sum of a Gaussian and Rayleigh RVs as
\vspace{-2mm}
\begin{align}
\sqrt{2}\widehat{K}_\mathrm{a-cpt-f}^\mathbb{R}/\upsilon = \sum\nolimits_{i\in\mathcal{K}}U_i+R,\label{KF}
\end{align}
where 
\vspace{-4mm}
\begin{align}
U_i\triangleq |\hat{h}_i/\sqrt{\beta_i}|\sim \mathrm{Ray}(\sqrt{1+1/(N_1\bar{\gamma}_i)}),\ \ 
R\triangleq \frac{\Re\{\mathbf{s}^H\mathbf{w}\}}{N_2\sqrt{\bar{p}}}-\sum_{i\in\mathcal{K}}\frac{\Re\{\tilde{h}_i\}}{\sqrt{\beta_i}}\sim \mathcal{N}(0,\phi_\mathcal{K}^2),\label{KF2}
\end{align}
with $\phi_\mathcal{K}\triangleq \big(\frac{1}{N_2\bar{\gamma}_c}+\frac{1}{N_1}\sum_{i\in\mathcal{K}}\frac{1}{\bar{\gamma}_i}\big)^{1/2}$.
\vspace{-2mm}
\begin{theorem}\label{a-cpt-f-T}
	The distribution of $\widehat{K}_\mathrm{a-cpt-f}^\mathbb{R}$ conditioned on $K$ is given by
	\begin{align}
	F_{\widehat{K}_{\mathrm{a-cpt-f}}^\mathbb{R}|K}\!(\hat{k})
	%&=\frac{1}{\binom{Q}{K}}\sum_{\substack{\forall \mathcal{K}\subseteq \mathcal{Q}\\|\mathcal{K}|=K}}\int\limits_{0}^{\infty}\!F_R\Big(\frac{\sqrt{2}}{\upsilon}x\!-\!\sum_{i\in\mathcal{K}}\!s_i\Big)\!\prod_{i\in\mathcal{K}}\!f_{U_i}(s_i)\mathrm{d}s_i\nonumber\\
	\!=\!1\!-\!\Biggl[\sum_{\substack{\forall \mathcal{K}\subseteq \mathcal{Q}\\|\mathcal{K}|=K}}\underbrace{\!\int\limits_{0}^{\infty}\!\cdots\!\int\limits_0^\infty}_{K\ \text{integrals}}\!\!\mathrm{erfc}\Big(\frac{\hat{k}}{\phi_\mathcal{K}\upsilon}\!-\!\sum_{i\in\mathcal{K}}\!\frac{u_i}{\sqrt{2}\phi_\mathcal{K}}\Big) e^{-\sum\limits_{i\in\mathcal{K}}\frac{u_i^2/2}{1+1/(N_1\bar{\gamma}_i)}}\prod_{i\in\mathcal{K}}\!u_i \mathrm{d}u_i \Biggl]\frac{\varsigma_\mathcal{K}}{2\binom{Q}{K}},\label{final}
	\end{align}
	where $\varsigma_\mathcal{K}\triangleq \prod_{i\in\mathcal{K}} (1+1/(N_1\bar{\gamma}_i))^{-1}$. 
\end{theorem}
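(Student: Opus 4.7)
The plan is to compute the conditional CDF by first conditioning on the specific active set $\mathcal{K}$ and then averaging over all $\binom{Q}{K}$ size-$K$ subsets of $\mathcal{Q}$ under a uniform activation prior. From \eqref{KF} and \eqref{KF2}, given $\mathcal{K}$, the relaxed estimator satisfies $\sqrt{2}\widehat{K}_{\mathrm{a-cpt-f}}^{\mathbb{R}}/\upsilon = \sum_{i\in\mathcal{K}}U_i + R$, where the Rayleigh variables $U_i$ (with subset-dependent scales $\sqrt{1+1/(N_1\bar{\gamma}_i)}$) and the Gaussian $R\sim\mathcal{N}(0,\phi_{\mathcal{K}}^2)$ are mutually independent. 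Hence, the event $\{\widehat{K}_{\mathrm{a-cpt-f}}^{\mathbb{R}}\le\hat{k}\}$ reduces to $\{R\le\sqrt{2}\hat{k}/\upsilon-\sum_{i\in\mathcal{K}}U_i\}$.

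Next, I would condition additionally on the $U_i$'s and integrate $R$ out first. Because $R$ remains zero-mean Gaussian given the $U_i$'s, the Gaussian CDF expression from Table~\ref{table} directly yields the upper-tail factor $\frac{1}{2}\mathrm{erfc}\bigl(\hat{k}/(\phi_{\mathcal{K}}\upsilon)-\sum_{i\in\mathcal{K}}u_i/(\sqrt{2}\phi_{\mathcal{K}})\bigr)$. I would then take expectation over the $U_i$'s using their product joint density $\prod_{i\in\mathcal{K}}u_i\,e^{-u_i^2/[2(1+1/(N_1\bar{\gamma}_i))]}/(1+1/(N_1\bar{\gamma}_i))$; the scale factors from each Rayleigh marginal collect into $\varsigma_{\mathcal{K}}$ and factor out of the integral, leaving precisely the $K$-fold integral of an erfc times a quadratic-exponential weight that appears inside the brackets of \eqref{final}.

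The remaining step is the law of total probability over $\mathcal{K}$. Under the natural assumption that the active set is uniformly drawn among the size-$K$ subsets of $\mathcal{Q}$, i.e., $\mathbb{P}(\mathcal{K}\mid K)=1/\binom{Q}{K}$, summing the per-subset conditional CDFs weighted by this prior and combining with the $1/2$ retained from the Gaussian-tail integration reproduces \eqref{final}.

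The main obstacle I anticipate is bookkeeping rather than conceptual novelty: both the Gaussian noise variance $\phi_{\mathcal{K}}^2$ and the Rayleigh scale parameters depend intrinsically on the specific active subset, preventing the combinatorial sum over $\mathcal{K}$ from collapsing into a cleaner closed form. A useful sanity check would be to verify the homogeneous limit $\bar{\gamma}_i=\bar{\gamma}'$ for all $i$, in which case all $\varsigma_{\mathcal{K}}$ and $\phi_{\mathcal{K}}$ become identical across subsets, the $\binom{Q}{K}$ identical summands cancel the $\binom{Q}{K}^{-1}$ prefactor, and the expression should collapse to the CDF of a scaled sum of $K$ i.i.d.\ Rayleigh variables plus an independent Gaussian.
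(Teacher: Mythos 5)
Your proposal is correct and follows essentially the same route as the paper's proof: condition on the specific active set $\mathcal{K}$, write $\mathbb{P}\big(\sum_{i\in\mathcal{K}}U_i+R\le\sqrt{2}\hat{k}/\upsilon\big)$ as the integral of the Gaussian CDF $F_R$ against the product of Rayleigh densities $\prod_{i\in\mathcal{K}}f_{U_i}(u_i)$ (whose scale factors collect into $\varsigma_\mathcal{K}$), and then average uniformly over all $\binom{Q}{K}$ subsets to condition on $K$ alone. Your homogeneous-limit sanity check is a sensible addition not present in the paper, but otherwise the two arguments coincide step for step.
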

\vspace{-6mm}
\begin{proof}
	See Appendix~\ref{TH2}. \phantom\qedhere
\end{proof}
\vspace{-2mm}

The PDF of $\widehat{K}_{\mathrm{a-cpt-f}}^\mathbb{R}|K$ can be obtained from \eqref{final} as $f_{\widehat{K}_{\mathrm{a-cpt-f}}^\mathbb{R}|K}(\hat{k})=d F_{\widehat{K}_{\mathrm{a-cpt-f}}^\mathbb{R}|K}(\hat{k})/d x$. 
Unfortunately, evaluating \eqref{final} becomes computationally expensive and often unaffordable, specially when i) $Q\gg 1$ since the number of combinations grows exponentially, and/or ii) $K\gg 1$ because of the increased number of integration operations in \eqref{FKYXi}. Therefore, resorting to approximation approaches is necessary for practical algorithms.
%may be handy here.

\begin{figure}[t!]
	\centering
	\includegraphics[width=0.45\textwidth]{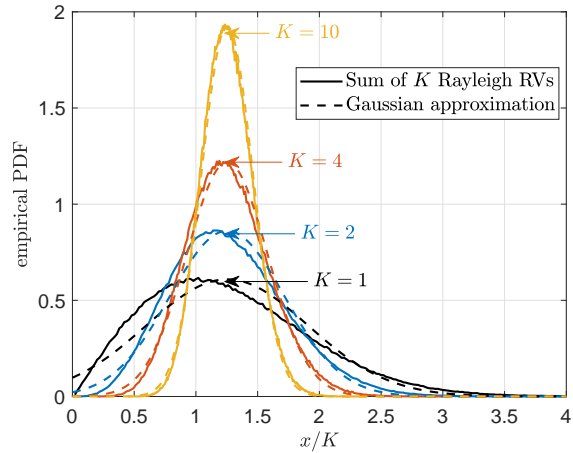}
	\vspace{-4mm}
	\caption{Empirical PDF of the normalized sum of $K$ Rayleigh RVs with $s=1$, and corresponding Gaussian distribution fitting.}
	\label{Fig2}
	\vspace{-8mm}
\end{figure}
A simple approach relies on \textit{homogenizing} the setup by  adopting $\bar{\gamma}'$ as an approximation for $\bar{\gamma}_i$ $\forall i$. Note that the approximation holds with equality when $\beta_i=\beta,\ \forall i$. Under the above assumption, $\{U_i\}$ are i.i.d. such that $U_i\sim \mathrm{Ray}\big(\sqrt{1+1/(N_1\bar{\gamma}')}\big)$, while $R\sim \mathcal{N}\big(0,1/(N_2\bar{\gamma}_c)+K/(N_1\bar{\gamma}')\big)$. Even in highly heterogeneous setups, where $\bar{\gamma}_i$ oscillates significantly for different devices $i$, this approach remains accurate in high SNR regimes where $\min_{i\in\mathcal{Q}}\bar{\gamma}_i\gg 1$ since $1/(N_1\bar{\gamma}_i)\rightarrow 0$ and $1/(N_1\bar{\gamma}')\rightarrow 0$.
%Then,
%
%\begin{align}
%F_{\widehat{K}_{\mathrm{a-cpt-f}|K}^\mathbb{R}}\!\!(x)&\approx\int\limits_{0}^{\infty}\!F_B\Big(\frac{\sqrt{2}}{\upsilon}x\!-\!\sum_{i=1}^K\!s_i\Big)\!\prod_{i=1}^K\!f_{X}(s_i)\mathrm{d}s_i.\label{FKYXi2}
%\end{align}
%

The above approach solves the combinatorial problem since the distribution becomes dependent on the number, rather than the specific set, of active devices.
%as $F_{\widehat{K}_{\mathrm{a-cpt-f}|K}^\mathbb{R}}(x)\approx F_{\widehat{K}_{\mathrm{a-cpt-f}|\mathcal{K}}^\mathbb{R}}(x)$, $\forall \mathcal{K}\subseteq \mathcal{Q}$ with $|\mathcal{K}|=K$.
%, however the multiple integration issue remains.
%
%Unfortunately, obtaining the closed-form distribution of such sum is intractable, thus, we proceed with a nevertheless tight approximation. 
%Specifically, 
To avoid the multiple integration required for computing \eqref{final},
we resort to fit the Rayleigh distribution into a Gaussian one, such that
%
%\begin{align}
$\mathrm{Ray}(s)  \rightarrow \mathcal{N}\big(\sqrt{\pi/2}s,(4-\pi)s^2/2\big)$.
%\end{align}
Such fitting, which is illustrated in Fig.~\ref{Fig2}, is motivated by: i) a Rayleigh PDF, although not strictly symmetrical, is bell-shaped, and ii) the sum of arbitrarily distributed RVs converge asymptotically (as $K$ increases) to a Gaussian distribution. The accuracy of our approach is assessed later in Section~\ref{results}.

Finally, $\widehat{K}_\mathrm{a-cpt-f}^\mathbb{R}$ is approximately distributed as
\vspace{-2mm}
\begin{align}
\widehat{K}_\mathrm{a-cpt-f}^\mathbb{R}\!&\sim\!  \mathcal{N}\Big(K,\mathrm{var}\big[\widehat{K}_\mathrm{a-cpt-f}^\mathbb{R}\big]\Big),\label{N}
\end{align}
where $\mathrm{var}\big[\widehat{K}_\mathrm{a-cpt-f}^\mathbb{R}\big]$ is given in 
%closed- and approximated-form in \eqref{varK} and 
\eqref{varKS}.
%, respectively.
%
\vspace{-4mm}
\subsection{A-CPT-D}
\vspace{-2mm}
Observe that
\vspace{-4mm}
\begin{align}
	F_{\widehat{K}_\mathrm{a-cpt-d}^\mathbb{R}|K}(\hat{k})&=\sum_{k'=0}^{K}F_{\widehat{K}_\mathrm{a-cpt-d}^\mathbb{R}|K'}(\hat{k})\ \!p_{K'|K}(k'),
\end{align}
where $p_{K'|K}(k')=\binom{K}{k'}(1-\xi)^{k'}\xi^{K-k'}$. Meanwhile, by leveraging \eqref{Kp2} and \eqref{varKT}, we have that
\begin{align}
	F_{\widehat{K}_\mathrm{a-cpt-d}^\mathbb{R}|K'}(\hat{k})
	&=\mathbb{P}\big(\widehat{K'}_\mathrm{a-cpt-d}^\mathbb{R}+\widehat{K''}_\mathrm{a-cpt-d}^\mathbb{R}\le \hat{k}\ \!\big|\ \!K'\big)\approx\!\mathbb{P}\Big(\widehat{K'}_\mathrm{\!\!a-cpt-d}^\mathbb{R}\!+\!\frac{\xi(1\!+\!\widehat{K'}_\mathrm{\!\!a-cpt-d}^\mathbb{R})}{(1-\xi)^2}\!\le\! \hat{k}\ \!\big|\ \!K'\Big)\nonumber\\
	&=F_{\widehat{K'}_\mathrm{a-cpt-d}^\mathbb{R}|K'}\Big(\frac{\hat{k}(1-\xi)^2-\xi}{(1-\xi)^2+\xi}\Big).\label{FK}
\end{align}
%
%where (a) and (b) come respectively from leveraging \eqref{Kp2} and the CDF definition. 
%
%The distribution of the A-CPT-D estimator for $K$ is presented in the following result.

The problem translates to find $F_{\widehat{K'}_\mathrm{a-cpt-d}^\mathbb{R}|K'}(\hat{k})$. For this, let us denote $X_i\triangleq \Re\{\tilde{h}_i\}/\sqrt{2N_1\varrho}$, $Y_i\triangleq |\hat{h}_i|/\sqrt{\vartheta_i}\ \big|\ |\hat{h}_i|^2\ge \mu_i/\vartheta_i$, while $V\triangleq \Re\{\mathbf{s}^H\mathbf{w}/(N_2\sqrt{\rho})\}$. Then,

%\begin{minipage}{0.45\textwidth}
%	\includegraphics[width=1\textwidth]{Fig_RayGau.eps}
%	\vspace{-12mm}
%	\captionof{figure}{Empirical PDF of the normalized sum of $K$ Rayleigh RVs with $s=1$, and corresponding Gaussian distribution fitting.}
%	\label{Fig2}
%\end{minipage}\ \ \ 
%\begin{minipage}{0.53\textwidth}
%	\includegraphics[width=1\textwidth]{}
%	\vspace{-12mm}
%	\captionof{figure}{PMF of $\widehat{K}$. 
%		%(top), and b) detection success probability 
%		%for different rounding mechanisms (bottom). 
%		Markers denote Monte Carlo simulations.\\
%	}
%	\label{FigR1}
%\end{minipage}\ \ 
%

%\vspace{-12mm}
%
\begin{align}
\widehat{K'}_\mathrm{a-cpt-d}^\mathbb{R}&=K'+\sum_{i\in\mathcal{K}'}\frac{X_i/Y_i}{\sqrt{2N_1\varrho\vartheta_i}}+V=K'+\sum_{i\in\mathcal{K}'}\frac{Z_i}{\sqrt{2N_1\varrho\vartheta_i}}+V,\label{KXY0}
\end{align}
%	assuming $|\hat{h}_i|^2\ge \mu_i, \forall i\in\mathcal{K}$, and defining 
where $Z_i\triangleq X_i/Y_i$. Herein, $X_i\sim \mathcal{N}(0,1)$, $V\sim\mathcal{N}(0,1/(2N_2\bar{\gamma}_c'))$, while $f_{Y_i}(y)=f_{|\hat{h}_i|}(y)/\big(1-F_{|\hat{h}_i|}(\sqrt{\mu_i/\vartheta_i})\big)$ for $y\ge \sqrt{\mu_i/\vartheta_i}$. Hence,
%
%\begin{lemma}\label{L2}
%	The PDF of $Z_i$ is given by
%	%
%	\begin{align}
%	f_{Z_i}(z)&=\frac{\sqrt{\mu_i/(N_1\varrho \pi)}e^{-\mu_iN_1\varrho z^2}}{1/(N_1\varrho)+\vartheta_i z^2}\nonumber\\
%	&\qquad +\frac{\sqrt{\vartheta_i}e^{\mu_i/\vartheta_i}\mathrm{erfc}\big(\sqrt{\mu_i(1/\vartheta_i+N_1\varrho z^2)}\big)}{2N_1\varrho(1/(N_1\varrho)+\vartheta_i z^2)^{3/2}}. \label{fZ2} 
%	\end{align}
%\end{lemma}
%%
%\begin{proof}
%	See Appendix~\ref{AL2}. \phantom\qedhere
%\end{proof}
%We depart from $Z_i=X_i/Y_i$ as follows
%
\vspace{-2mm}
\begin{align}
f_{Z_i}(z)&=\frac{d}{dz}F_{Z_i}(z)=\frac{d}{dz}\mathbb{P}(X_i/Y_i\le z)=\frac{d}{dz}\int_{\sqrt{\mu_i/\vartheta_i}}^{\infty}F_{X_i}(yz)f_{Y_i}(y)\mathrm{d}y\stackrel{(a)}{=}\int_{\sqrt{\mu_i/\vartheta_i}}^{\infty}yf_{X_i}(yz)f_{Y_i}(y)\mathrm{d}y\nonumber\\
&\stackrel{(b)}{=}\sqrt{\frac{2}{\pi}}e^{\mu_i/\vartheta_i}\int_{\sqrt{\mu_i/\vartheta_i}}^{\infty}y^2e^{-y^2(z^2/2+1)}\mathrm{d}y\stackrel{(c)}{=}\frac{e^{\mu_i/\vartheta_i}}{\sqrt{2\pi}}\Big(\frac{z^2}{2}+1\Big)^{-3/2}\Gamma\bigg(\frac{3}{2},\Big(\frac{z^2}{2}+1\Big)y^2\bigg)\bigg|_{y=\sqrt{\mu_i/\vartheta_i}}^{y\rightarrow\infty}\nonumber\\
&\stackrel{(d)}{=}e^{\mu_i/\vartheta_i}\Gamma\big(3/2,(z^2/2+1)\mu_i/\vartheta_i\big)/\sqrt{2\pi(z^2/2+1)^3},
\label{fZ2}
\end{align}
where (a) comes from differentiating under the integral sign by leveraging Leibniz rule and $dF_{X_i}(yz)/dz=yf_{X_i}(yz)$. Then, (b) follows from substituting $f_{X_i}(x)=e^{-x^2/2}/\sqrt{2\pi}$, and $f_{Y_i}(y)$ as given in 
\eqref{fY}. The indefinite integral is solved in (c) using \cite[eq. (2.325.6)]{Gradshteyn.2014}, while (d) follows directly after evaluating the definite integral limits. 
Observe that by assuming the parameter configuration given in Theorem~\ref{the1}, \eqref{fZ2} transforms to
\begin{align}
	f_{Z_i}(z)=\frac{\Gamma\big(3/2,-(z^2/2+1)\ln(1-\xi)\big)}{(1-\xi)\sqrt{2\pi(z^2/2+1)^3}},\ \forall i.\label{fZ22}
\end{align}
\vspace{-12mm}
\begin{theorem}\label{a-cpt-d-T}
	The distribution of $\widehat{K'}_\mathrm{a-cpt-d}^\mathbb{R}$ conditioned on $K'$ is approximately given by
	\vspace{-3mm}
	\begin{align}
F_{\widehat{K'}_{\!\!\mathrm{a-cpt-d}}^\mathbb{R}|K'}(\hat{k'})\!&\approx 1\!-\!\frac{1}{2\binom{Q}{K'}}\!\sum_{\substack{\forall \mathcal{K}'\subseteq \mathcal{Q}\\|\mathcal{K}'|=K'}}\underbrace{\int\limits_{-\infty}^\infty\!\!\!\cdots\!\!\!\int\limits_{-\infty}^\infty}_{K'\ \text{integrals}}\!\!\! \mathrm{erfc}\bigg(\!\sqrt{N_2\bar{\gamma}_c'} \Big(\hat{k'}\!-\!K'\!-\!\sum_{i\in\mathcal{K}'}\!z_i\Big)\bigg)\!\prod_{i\in\mathcal{K}'}\!f_{Z_i}(z_i)\mathrm{d}z_i\label{KD}
\end{align}
for relatively small $\xi$.
\end{theorem}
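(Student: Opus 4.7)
My plan is to build the conditional CDF in layers, following the same template as the proof of Theorem~\ref{a-cpt-f-T}. Starting from \eqref{KXY0},
$$\widehat{K'}^\mathbb{R}_{\mathrm{a-cpt-d}} \;=\; K' \,+\, \sum_{i\in\mathcal{K}'}\frac{Z_i}{\sqrt{2N_1\varrho\vartheta_i}} \,+\, V,$$
I would first condition on both the specific active subset $\mathcal{K}'$ of cardinality $K'$ and on the random variables $\{Z_i\}_{i\in\mathcal{K}'}$. The only remaining source of randomness is the residual AWGN contribution $V\sim\mathcal{N}(0,1/(2N_2\bar{\gamma}_c'))$, which is independent of everything else, so $\widehat{K'}^\mathbb{R}_{\mathrm{a-cpt-d}}$ given $\{Z_i\}$, $\mathcal{K}'$ and $K'$ is Gaussian with mean $K'+\sum_{i\in\mathcal{K}'}Z_i/\sqrt{2N_1\varrho\vartheta_i}$ and variance $1/(2N_2\bar{\gamma}_c')$. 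Using the identity $F_{\mathcal{N}(\mu,\sigma^2)}(x)=1-\tfrac{1}{2}\mathrm{erfc}\big((\mu-x)/(\sqrt{2}\sigma)\big)$ directly yields the erfc term appearing in the statement, with the factor $\sqrt{N_2\bar{\gamma}_c'}$ arising from $1/(\sqrt{2}\sigma)$.

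Next I would marginalize over $\{Z_i\}_{i\in\mathcal{K}'}$, which under the parameter configuration of Theorem~\ref{the1} are i.i.d. with common density $f_{Z_i}$ given in \eqref{fZ22} (the commonality stems from $\mu_i/\vartheta_i=-\ln(1-\xi)$ being the same for all $i$). This produces a $K'$-fold integral of the erfc against $\prod_{i\in\mathcal{K}'}f_{Z_i}(z_i)$. Finally, since the joint distribution of the integrand depends on the specific composition of $\mathcal{K}'$ (via the $\vartheta_i$ scalings), and since the identities of the transmitting devices are unknown given only $K'$, I would average over $\mathcal{K}'$ taken as uniform over the $\binom{Q}{K'}$ subsets of $\mathcal{Q}$ of cardinality $K'$, weighting by $1/\binom{Q}{K'}$ and summing. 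This step exactly mirrors the subset averaging used in the proof of Theorem~\ref{a-cpt-f-T} to produce \eqref{final}.

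The main obstacle is justifying the \emph{``relatively small $\xi$''} approximation implicit in the statement. The literal expression I obtain has the erfc argument $\sqrt{N_2\bar{\gamma}_c'}\big(\hat{k'}-K'-\sum_{i\in\mathcal{K}'}z_i/\sqrt{2N_1\varrho\vartheta_i}\big)$, whereas the stated form replaces $z_i/\sqrt{2N_1\varrho\vartheta_i}$ by $z_i$. The cleanest way to bridge the gap is to note, via the variance bound \eqref{vb} obtained from Theorem~\ref{TH3}, that in the regime $\xi\ll 1$ and $N_1\bar{\gamma}_i\gg 1$ the $Z_i$-induced contribution to the estimator variance is dominated by the AWGN term $1/(2N_2\bar{\gamma}_c')$, so the shift induced by the scaled $Z_i$ correction is negligible to first order inside the erfc. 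Turning this heuristic into tight control on the approximation error---for instance by a Taylor expansion of $\mathrm{erfc}$ around its dominant argument $\sqrt{N_2\bar{\gamma}_c'}(\hat{k'}-K')$ combined with the moment bounds on $Z_i$ implied by \eqref{fZ22}---is where the technical bite lies; the remainder of the derivation is routine bookkeeping, analogous to what was done for A-CPT-F.
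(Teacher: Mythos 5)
Your plan reproduces the paper's proof in Appendix~\ref{TH5} step for step: condition on the transmitting set $\mathcal{K}'$, write the conditional CDF as a $K'$-fold integral of $F_V$ against $\prod_{i}f_{Z_i}$, substitute $F_V(v)=1-\mathrm{erfc}(v\sqrt{N_2\bar{\gamma}_c'})/2$, and then average uniformly over the $\binom{Q}{K'}$ subsets exactly as in \eqref{FKK2}, mirroring Theorem~\ref{a-cpt-f-T}. (One small slip: the Gaussian identity should read $F_{\mathcal{N}(\mu,\sigma^2)}(x)=1-\tfrac{1}{2}\mathrm{erfc}\big((x-\mu)/(\sqrt{2}\sigma)\big)$; you wrote $\mu-x$, though you used the correct orientation downstream.)

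The ``main obstacle'' you flag deserves comment, because the paper does not overcome it --- it sidesteps it. Equation \eqref{KV} in the paper's proof starts directly from $\mathbb{P}\big(K'+\sum_{i\in\mathcal{K}'}Z_i+V\le\hat{k}'\big)$, i.e., the $1/\sqrt{2N_1\varrho\vartheta_i}$ factors of \eqref{KXY0} are dropped without comment, and every subsequent step is an exact equality; there is no small-$\xi$ expansion hidden in the paper for you to reconstruct. Moreover, your proposed bridge cannot be made rigorous as stated: if the scaled contribution is negligible (as \eqref{vb} suggests when $N_1\bar{\gamma}_i\gg 1$), that licenses \emph{deleting} it from the erfc argument, not \emph{replacing} it by the unscaled sum $\sum_i z_i$, whose standard deviation is $\sqrt{K'\,\mathrm{var}[Z]}$ with $\mathrm{var}[Z]=-\mathrm{li}(1-\xi)/(1-\xi)$, a quantity that grows like $\ln(1/\xi)$ as $\xi\to 0$; the scaled and unscaled expressions describe genuinely different distributions, so no small-$\xi$ argument can equate them. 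Two internal consistency checks confirm that the scaling belongs inside \eqref{KD}: the variance in Theorem~\ref{TH3} and the working expression \eqref{intK} both retain $1/\sqrt{2N_1\varrho\vartheta}$; and under the Theorem~\ref{the1} configuration the densities \eqref{fZ22} are identical across $i$, so without the device-dependent $\vartheta_i$ in the integrand all $\binom{Q}{K'}$ terms of \eqref{KD} coincide and the subset average is vacuous --- it only carries content in your version. In short, keep your literal derivation (with the scaling): it is correct, it is what the paper's own method actually yields, and the mismatch with the stated \eqref{KD} is an inconsistency of the paper rather than a gap on your side; the heuristic you propose to close it should be dropped rather than tightened.
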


\begin{proof}
	See Appendix~\ref{TH5}. \phantom\qedhere
\end{proof}
Note that \eqref{KD} is often too computationally expensive, similar to \eqref{final} in Subsection~\ref{acptf} (see related discussions therein), thus, we resort to additional approximations in the following. 
%First, note that
%
%\begin{align}
%\widehat{K'}_{\mathrm{a-cpt-d}}=K'+\sum_{i\in\mathcal{K}'}\frac{x_i}{Y_i}+V
%\end{align}
%
\vspace{-2mm}
\begin{theorem}\label{TH6}
	Let $T\triangleq \sum_{i\in\mathcal{K}'}Z_i$, and assume  $Z_i\sim Z$ with PDF given in \eqref{fZ22}, where $\xi_i=\xi$ $\forall i$. 
	Then, $T$ is approximately distributed as a scaled Student's $t$ distribution $\sqrt{w_1\Big(1-\frac{2}{\nu}\Big)} \mathcal{T}(\nu)$, where $\nu$ is the solution to
	\vspace{-2mm}
	\begin{align}
	2^{\frac{\nu}{2}-1}\omega_2\Gamma\Big(\frac{\nu}{2}\Big)&=K_{\frac{\nu}{2}}(26\sqrt{\omega_1(\nu-2)})(26\sqrt{\omega_1(\nu-2)})^{\frac{\nu}{2}},\label{eqV}%\nonumber\\
	%&\approx\frac{h_1(\omega_1)\nu-h_2(\omega_1)}{\nu-h_3(\omega_1)},\label{app}
	\end{align}
	and 
	\vspace{-4mm}
	\begin{align}
	\omega_1\triangleq \mathrm{li}(1-\xi)K'/(1-\xi),\ \  \ \ 
	\omega_2\triangleq \bigg[2\int_{0}^{\infty}\cos (26z)f_Z(z)\mathrm{d}z\bigg]^{K'}. \label{omega1omega2}
	\end{align}
\end{theorem}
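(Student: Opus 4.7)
The plan is to fit the family of scaled Student's $t$ distributions to $T$ by first matching the variance, which fixes the scale in terms of $\nu$, and then matching characteristic functions at a single point to fix $\nu$. Since $f_Z(\cdot)$ in \eqref{fZ22} is an even function, I immediately have $\mathbb{E}[Z]=\mathbb{E}[T]=0$, so no location shift is needed and the problem reduces to identifying two parameters.

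To fix the scale I would compute $\mathrm{var}[Z]=\mathbb{E}[Z^2]=\int_{-\infty}^{\infty} z^2 f_Z(z)\,\mathrm{d}z$. The cleanest route is to replace the incomplete Gamma factor in \eqref{fZ22} by its defining integral representation, or equivalently to revert to the double-integral form that appeared in step (b) of the derivation in \eqref{fZ2}, swap the order of integration (justified by non-negativity through Tonelli's theorem), perform the resulting Gaussian moment integral in $z$ in closed form, and reduce the remaining one-dimensional integral to $\mathrm{li}(1-\xi)/(1-\xi)$ through a substitution that exploits the defining formula of the logarithm integral. By independence, $\mathrm{var}[T]=K'\,\mathrm{var}[Z]=\omega_1$. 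A scaled Student's $t$ variable $s\,\mathcal{T}(\nu)$ has variance $s^2\nu/(\nu-2)$, so equating to $\omega_1$ forces $s=\sqrt{\omega_1(1-2/\nu)}$, which matches the scaling stated in the theorem.

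To pin down $\nu$ I would match characteristic functions at the point $t=26$. Using the classical formula $\phi_{\mathcal{T}(\nu)}(t)=\frac{(\sqrt{\nu}|t|)^{\nu/2}K_{\nu/2}(\sqrt{\nu}|t|)}{2^{\nu/2-1}\Gamma(\nu/2)}$ for a standard Student's $t$, and noting that the $Z_i$ are i.i.d.\ and $f_Z$ is symmetric, one obtains $\phi_T(t)=\phi_Z(t)^{K'}=\bigl[\,2\int_0^\infty \cos(tz)f_Z(z)\,\mathrm{d}z\,\bigr]^{K'}$. Evaluated at $t=26$ this is exactly $\omega_2$ from \eqref{omega1omega2}. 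For the scaled candidate, $\phi_{s\mathcal{T}(\nu)}(26)=\phi_{\mathcal{T}(\nu)}(26s)$, and with $s=\sqrt{\omega_1(\nu-2)/\nu}$ the Bessel argument simplifies to $\sqrt{\nu}\cdot 26s = 26\sqrt{\omega_1(\nu-2)}$. Setting $\phi_{s\mathcal{T}(\nu)}(26)=\omega_2$ and clearing the $\Gamma$-and-power denominator rearranges to \eqref{eqV}.

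The main obstacle will be the explicit evaluation of $\mathbb{E}[Z^2]$: combining $(z^2/2+1)^{-3/2}$ with the incomplete Gamma factor is not a textbook integral, and collapsing it to a $\mathrm{li}(\cdot)$ expression requires the change-of-order trick described above together with careful bookkeeping of the constant $\mu_i/\vartheta_i=-\ln(1-\xi)$ from Theorem~\ref{the1}. A secondary, more conceptual point is that the matching value $t=26$ is evidently chosen empirically for good overall fit across realistic regimes of $(\xi,K',\bar{\gamma})$ rather than derived from an asymptotic argument; hence the result should be read as an engineering approximation, and I would flag this explicitly in the write-up, deferring numerical validation to the results section.
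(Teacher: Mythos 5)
Your proposal is correct and follows essentially the same route as the paper: match the second moment of $T$ (giving $s=\sqrt{\omega_1(1-2/\nu)}$) and the characteristic function $\phi_T(t)=\phi_Z(t)^{K'}$ evaluated at the empirically chosen point $t=26$ against the known formula $\frac{(\sqrt{\nu}ts)^{\nu/2}K_{\nu/2}(\sqrt{\nu}ts)}{2^{\nu/2-1}\Gamma(\nu/2)}$, then combine the two equations to obtain \eqref{eqV}. The only cosmetic difference is that the paper evaluates $\mathbb{E}[Z^2]$ by reusing the ratio representation $Z=X/Y$ and \eqref{varZ} from Appendix~\ref{AppTH3} rather than integrating against \eqref{fZ22} directly, and it additionally motivates the characteristic-function matching by noting that moments of order higher than two may fail to exist when $\nu$ approaches $2$.
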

\begin{proof}
	See Appendix~\ref{AL3} for the proof and accuracy-related discussions.  \phantom\qedhere
\end{proof}
 
Let us denote $T'\sim \mathcal{T}(\nu)$, where $\nu$ is the solution to \eqref{eqV}. Then, by exploiting Theorem~\ref{TH6}, one attains
 \begin{align}
 	F_{\widehat{K'}_{\!\!\mathrm{a-cpt-d}}^\mathbb{R}|K'}(\hat{k'})&\!\stackrel{(a)}{\approx}
 	%\mathbb{P}\bigg(K'+\sqrt{\frac{\omega_1(1-2/\nu)}{2N_1\varrho\vartheta}}T'+V\le \hat{k'}\bigg)
 	\mathbb{P}\bigg(V\le \hat{k'}\!-\!K'\!-\!\sqrt{\frac{\omega_1(1-2/\nu)}{2N_1\varrho\vartheta}}T'\bigg)\!=\!\!\int\limits_{-\infty}^{\infty}\!\!F_V\bigg(\hat{k'}\!-\!K'\!-\!\sqrt{\frac{\omega_1\big(1\!-\!\frac{2}{\nu}\big)}{2N_1\varrho\vartheta}}t\bigg)f_T(t)\mathrm{d}t\nonumber\\
 	&\stackrel{(b)}{=}1\!-\!\frac{\Gamma\big(\frac{\nu+1}{2}\big)}{2\sqrt{\nu\pi}\Gamma(\frac{\nu}{2})}\!\!\int\limits_{-\infty}^{\infty}\!\!\!\mathrm{erfc}\Bigg(\!\!\sqrt{N_2\bar{\gamma}_c'}\bigg(\hat{k'}\!\!-\!K \!  -\!\sqrt{\frac{\omega_1\big(1\!-\!\frac{2}{\nu}\big)}{2N_1\varrho\vartheta}}t\bigg)\!\Bigg)\Big(1\!+\!\frac{t^2}{\nu}\Big)^{\!\!-\frac{\nu\!+\!1}{2}}\!\mathrm{d}t,\label{intK}
 \end{align}
where (a) leverages \eqref{KXY0}, while (b) follows from using the distribution of $V$ and $T'$. By leveraging  \cite[eq. (3)]{Forchini.2008}, one can state the integral operation in \eqref{intK} as an infinite sum that includes factorials, incomplete gamma, and confluent hypergeometric functions. However, such an approach may not significantly reduce the mathematical complexity from numerically computing \eqref{intK}, thus, we do not adopt it here.

Observe that computing \eqref{intK} is significantly much less computationally demanding than \eqref{KD} since i) there is no combinatorial sub-set selection (as that required for selecting the adding terms in \eqref{KD}), and ii) the number of integral computations is reduced to two, i.e., \eqref{intK} and $\omega_2$ in \eqref{omega1omega2}, independently of the value of $K'$.
\vspace{-5mm}
\section{Numerical Results}\label{results}
\vspace{-2mm}
Herein, we analyze numerically the performance of the discussed CPT mechanisms, mostly in terms of the detection success probability given by $p_{\widehat{K}}(K)$. 
Unless stated otherwise, we adopt an NI-based rounding and consider a massive deployment of $Q=1000$ devices, out of which $K=5$ become active at each time slot. 
%The MTC deployment is homogeneous in the sense that UL/DL channels are subject to i.i.d. quasi-static Rayleigh fading with $\beta_i=-140$ dB $\forall i$. 
The devices are uniformly randomly distributed in an annulus region around the coordinator with 25 m and 500 m of inner and outer radius, respectively, as in \cite{Kamil.2018}. The average power gain of the $i-$th channel is modeled as 
%
%\begin{align}
	$\beta_i=130 + 37.6\times \log_{10}(d_i)\ [\text{dB}]$,
%\end{align}
%
where $d_i$ is the distance between the $i-$th device and the coordinator \cite{Kamil.2018}.
%\footnote{A similar path loss model has been also exploited in \cite{Chen.2018,Ke.2020}.}
%
Moreover, we set $p=30$ dBm, $N=6$ ($N_1=2$, $N_2=4$), $\bar{p}=-110$ dBm, $\xi=10^{-2}$, and $\sigma^2=-120$ dBm by assuming a transmission bandwidth of 200~kHz. Finally, we adopt the parameter configuration given in Theorem~\ref{the1} for a fair performance comparison of the CPT mechanisms. 
%Specifically, the average power consumption of the $i-$th device is given by $\bar{p}/\beta_i$.
%, which becomes $mm$.
%

The results to be discussed next reveal that even when the large-scale fading is heterogeneous across the network, the detectors perform as expected, and the accuracy of the derived theoretical results is not severely affected.
%\begin{align}
%%\beta_i&=-130\ \text{dB}\ \forall i\nonumber\\
%\sigma^2&=-150\ \text{dB}\ @200\text{kHz}\nonumber\\
%\bar{p}&=-140\ \text{dB}\nonumber\\
%p&=0\ \text{dB}\nonumber\\
%N&=6,\ N_1=2,\ N_2=4\nonumber\\
%Q&=1000,\ K=5\nonumber\\
%\xi_i&=10^{-2},\ \forall i
%\end{align}
%
\vspace{-6mm}
\subsection{On the PMF of $\widehat{K}$ and Rounding Performance Impact}
\vspace{-2mm}
\begin{figure}[t!]
	\centering
	\includegraphics[width=0.49\textwidth]{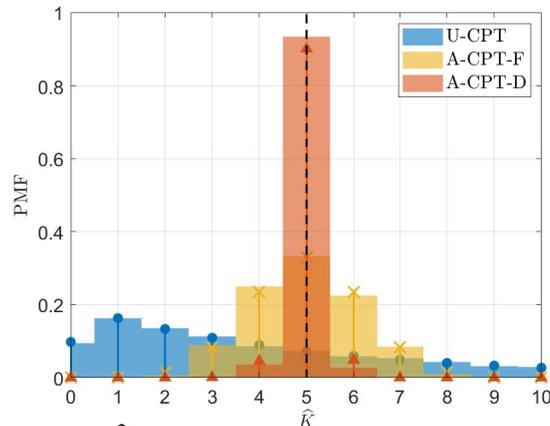}%\\
	\vspace{-6mm}
	\caption{PMF of $\widehat{K}$. 
		 %(top), and b) detection success probability 
		 %for different rounding mechanisms (bottom). 
		 Markers denote Monte Carlo simulations.}
	\label{FigR1}
		\vspace{-6mm}
\end{figure}
Fig.~\ref{FigR1} depicts the PMF of $\widehat{K}$.
% under NI-based rounding. 
Observe that the  derived theoretical results agree with  Monte Carlo simulations, specially in the case of U-CPT, where no approximations were exploited for deriving $p_{\widehat{K}}(\hat{k})$. Results here corroborate the previous observations in Section~\ref{Dist} regarding the distribution of $\widehat{K}^\mathbb{R}$ (and approximately that of $\widehat{K}$): i) its exponential-like shape in the case of U-CPT, and ii) its symmetric-like shape in the case of A-CPT-F (Gaussian) and A-CPT-D (Student's $t$). Note that A-CPT mechanisms are the most appealing, specially A-CPT-D, since $p_{\widehat{K}}(\hat{k})$ is the narrowest around $\widehat{K}=K=5$. This evinces the need of two-stage detection mechanisms, consisting of DL training and UL CPT as depicted in Fig.~\ref{Fig1}b, to mitigate fading uncertainty, thus, enabling more successful detections. The superiority of A-CPT-D is also illustrated in Table~\ref{table3} and in most of the remaining figures. Specifically, Table~\ref{table3} illustrates the attainable performance under other rounding approaches in terms of $p_{\widehat{K}}(K)$. Interestingly, the optimal U-CPT detector specified in Theorem~\ref{EK} performs exactly the same as the simpler relaxed estimator followed by ML rounding. In all the cases, the simple NI-based rounding already provides a performance similar to that offered by the more sophisticated  ML-based rounding. 
%
%However, the performance gap between NI- and the ideal GA-based rounding is non-negligible. This might suggest that more efficient, and also practically feasible, rounding mechanisms could be designed to shrink the gap.
%
\begin{table}[!t]
	\centering
	\caption{Detection success probability based on relaxed estimation followed by different rounding mechanisms}
	\begin{tabular}{p{1.4cm}|p{1.8cm}p{1.8cm}|p{1.2cm}}
		\toprule
		& \multicolumn{2}{c|}{Relaxed Estimator $+$} & 	  \\ 
		Mechanism & NI rounding & ML rounding & Optimum \\
		\midrule	
		U-CPT & $7.17\%$ & $7.24\%$ & $7.24\%$ \\
		A-CPT-F & $31.70\%$ & $32.50\%$ & --    \\
		A-CPT-D & $91.69\%$ & $92.70\%$ & --   \\
		\bottomrule
	\end{tabular}\label{table3}
\end{table}
\vspace{-6mm}
\subsection{On the Detection Scalability}\label{scalab}
\vspace{-2mm}
\begin{figure}[t!]
	\centering
	\includegraphics[width=0.45\textwidth]{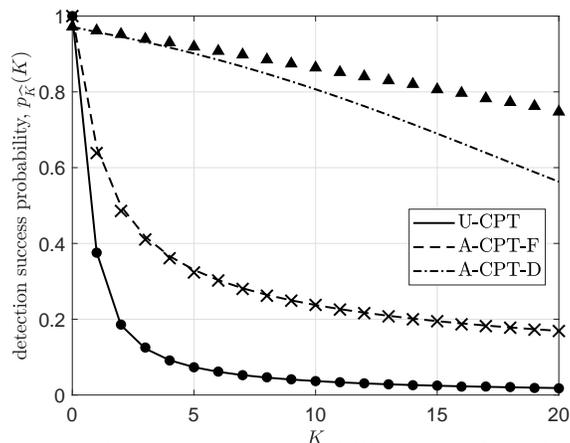}
		\vspace{-6mm}
	\caption{Detection success probability as a function of $K$. %(number of active devices / signal sparsity level). 
	%The devices are assumed to be deployed such that their channel with the coordinator is subject to a large scale fading randomly selected in the interval $[-160,-120]$ dB.
	}
	\label{FigR2}
		\vspace{-6mm}
\end{figure}
Fig.~\ref{FigR2} illustrates the detrimental effect that the number of device activations has on the detection success probability. This comes as a direct consequence of the estimation variance increasing with $K$ in \eqref{var}, \eqref{varKS}, and \eqref{varF2} for U-CPT, A-CPT-F, and A-CPT-D, respectively. The performance deterioration with $K$ is more abrupt under U-CPT for which the estimation variance increases with $K^2$, while A-CPT mechanisms offer more resilience since the estimation variance increases linearly with $K$. %against an increase of $K$
In the case of A-CPT-D, the estimation variance increases not only linearly with $K$ but also with the smallest slope,
%\footnote{According to \eqref{varKS}, the slope of the variance with respect to $K$ is $(4-\pi)/\pi\approx 0.273$ in the case of A-CPT-F. In the case of A-CPT-D, the slope is $0.5\times \psi \ln(1+\vartheta/\mu)/(1+N_1\bar{\gamma}')$ according to \eqref{varF2}, which approximates to $10^{-7}$ for
	%$\xi=10^{-2}$, $\bar{\gamma}_i\approx 0.92$ and $\mu/\vartheta=-\ln(1-\xi)$ as in 
%	the specific simulations carried out here.},
which finally translates to the best performance. 

%\begin{minipage}{0.46\textwidth}	
%	%	\begin{table}[!t]
%	\centering
%	\captionof{table}{Detection success probability based on relaxed estimation followed by different rounding mechanisms}
%	\begin{tabular}{p{1.65cm}|p{1.5cm}p{1.5cm}|p{1.2cm}}
%		\toprule
%		& \multicolumn{2}{c|}{Relaxed Estimator $+$} & 	  \\ 
%		 & NI & ML & Optimum \\
%		\midrule	
%		U-CPT & $7.17\%$ & $7.24\%$ & $7.24\%$ \\
%		A-CPT-F & $31.70\%$ & $32.50\%$ & --    \\
%		A-CPT-D & $91.69\%$ & $92.70\%$ & --   \\
%		\bottomrule
%	\end{tabular}
%	\label{table3}
%	%	\end{table}
%	%\captionof{figure}{PDF of the sum of two scaled Student's $t$ RVs.}
%	%\label{Fig2}
%\end{minipage}\ \ \ \ \ \ \ \ 
%\begin{minipage}{0.44\textwidth}
%		\includegraphics[width=1\textwidth]{Fig_r2.eps}
%	\vspace{-10mm}
%	\captionof{figure}{Detection success probability as a function of $K$.
%	}
%	\label{FigR2}
%\end{minipage}

\begin{figure}[t!]
	\centering
	\includegraphics[width=0.45\textwidth]{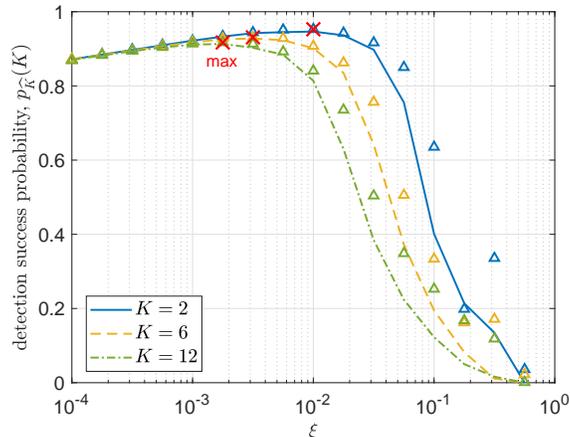}
		\vspace{-6mm}
	\caption{Detection success probability under A-CPT-D as a function of the transmission probability of active devices for $K\in\{2,6,12\}$.}
	\label{FigR5}
		\vspace{-6mm}
\end{figure}
Fig.~\ref{FigR5} shows the impact of $\xi$ on the detection performance. Observe that as $\xi$ decreases approaching 0, $\mu$ decreases, thus, allowing almost every device to transmit even when experiencing poor channel conditions. This means that to satisfy the average transmit power constraint $\rho \mathbb{E}[|\hat{h}_i|^{-2}\ \!|\!\ |\hat{h}_i|^2\ge \mu]=1$, one must accordingly decrease $\rho$, which in turns affects the receive SNR at the coordinator and the detection performance under A-CPT-D. On the other hand, as $\xi$ increases approaching 1, $\mu$ increases, while increasingly restricting the number $K'$ of active devices  that are transmitting. As a result, the detection of $K''$, and ultimately that of $K$, becomes more seriously affected. Based on previous arguments, there is obviously at least one value of $\xi$ that guarantees optimum detection performance. In fact, that point is unique as illustrated in Fig.~\ref{FigR5}, and the larger $K$ is, the smaller the optimum $\xi$ becomes. 
In practice, such value can be configured such that the detection performance for a particular range of values of $K$ remains favorable. For instance, based on the results illustrated in Fig.~\ref{FigR5}, we can assure that $\xi\approx 3\times 10^{-3}$ guarantees a good detection performance $\forall K\le 12$ (presumably also for greater $K$). If there are some statistical priors on $K$, then such information can be also exploited for optimizing $\xi$.
% \textcolor{red}{Therefore, having prior knowledge on the distribution of $K$ might not only be beneficial for designing more efficient estimators, but also for properly configuring $\xi$, which is left for future work.}
%
\vspace{-4mm}
\subsection{How Many CPT Symbols are Needed?}
\vspace{-2mm}
\begin{figure}[t!]
	\centering
	\includegraphics[width=0.45\textwidth]{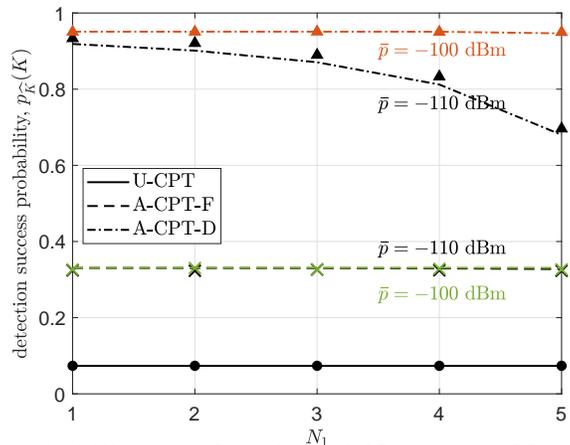}
		\vspace{-6mm}
	\caption{Detection success probability as a function of $N_1$. We set $N_2=N-N_1$. The performance of the U-CPT configuration, which does not uses ($N_1$) broadcast DL symbols, is illustrated only for benchmarking.}
		\vspace{-6mm}
	\label{FigR3}
\end{figure}
Fig.~\ref{FigR3} shows the detection success probability as a function of $N_1$, while $N_2=N-N_1$. Note that U-CPT does not define ($N_1$) DL broadcast symbols, thus, $p_{\widehat{K}_{\text{u-cpt}}}(K)$, obtained merely with $N=6$, does not vary with $N_1$. Interestingly, the performance of A-CPT-F is not significantly affected by the choice of $N_1$.
This is because fading mitigation capabilities of A-CPT-F rely only on channel phase corrections, thus, the performance saturates given a small number of training symbols under favorable DL/UL SNR conditions.
Meanwhile, the situation is quite different in the case of A-CPT-D, where the DL/UL symbol allocation influences significantly the detection success probability as there are more degrees of freedom for resolving the fading-related uncertainties. For the adopted configuration, a small number of DL symbols is preferable when operating with A-CPT-D with $\bar{p}\in\{-140,-130\}$ dB. 
%
%although the performance impact of the latter largely depends on the received DL/UL SNR. In case of DL SNR being greater than UL SNR, $N_1$
%
\begin{figure}[t!]
	\centering
	\includegraphics[width=0.45\textwidth]{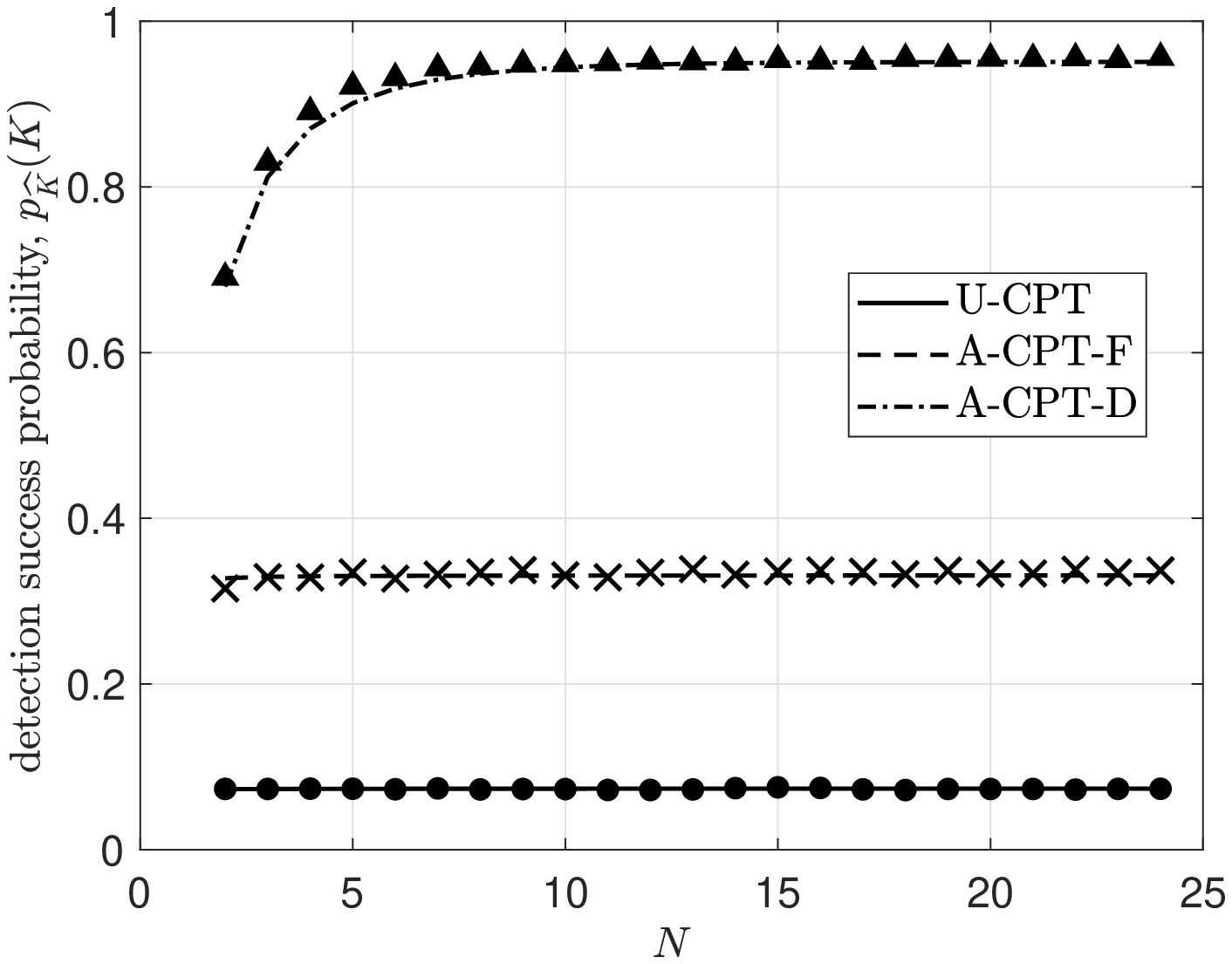}\ \ \ 
	\includegraphics[width=0.45\textwidth]{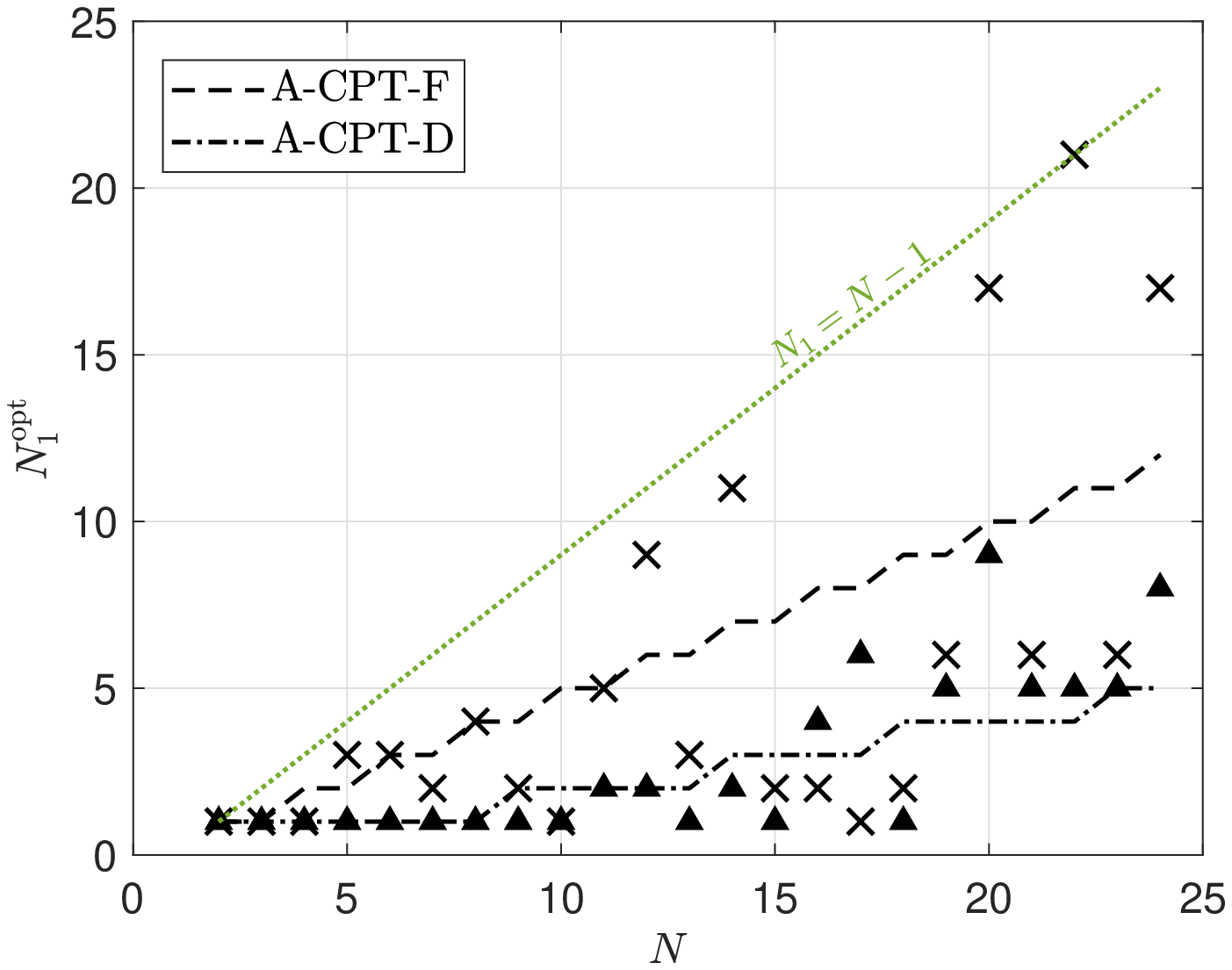}
	\vspace{-4mm}
	\caption{a) Optimum detection success probability (top), and b) optimum $N_1$ for A-CPT mechanisms (bottom) as a function of $N$. Observe that $N_2^\text{opt}=N-N_1^\text{opt}$.}
	\label{FigR4}
	\vspace{-10mm}
\end{figure}

%\begin{minipage}{0.48\textwidth}
%	\includegraphics[width=1\textwidth]{Fig_r5.eps}
%	\vspace{-8mm}
%	\captionof{figure}{Detection success probability under A-CPT-D as a function of the transmission probability of active devices for $K\in\{2,6,12\}$.\\
%}
%	\label{FigR5}
%\end{minipage}\ \ \ 
%
%\begin{minipage}{0.48\textwidth}
%	\includegraphics[width=1\textwidth]{Fig_r3.eps}
%	\vspace{-10mm}
%	\captionof{figure}{Detection success probability as a function of $N_1$. We set $N_2=N-N_1$. The performance of the U-CPT configuration, which does not uses ($N_1$) broadcast DL symbols, is illustrated only for benchmarking.
%	}
%	\label{FigR3}
%\end{minipage}

The performance impact of increasing the total number of CPT symbols is illustrated in Fig.~\ref{FigR4}a. 
%in terms of the optimum detection success probability. The latter is obtained after testing
Herein, we test the performance under all possible combinations of $(N_1,N_2)$ with $N_1+N_2=N$ and select the one leading to the best detection success probability, which appears depicted in Fig.~\ref{FigR4}b. Observe that the detection performance under U-CPT and A-CPT-F does not vary significantly with $N$, contrary to what happens with A-CPT-D. This is because SNR conditions are favorable, which means that detection performance is dominated by fading realizations and not the noise. In any case, it seems that configuring a relatively small $N$ guarantees already a near-optimum performance. For instance, the detection success probability with $N=10$ under U-CPT (0.075), A-CPT-F (0.333) and A-CPT-D (0.949), represents already 99.32\%, 99.40\% and 99.37\% of what it would be obtained for $N=24$. 
%Meanwhile, the optimum allocation of symbols for DL training $(N_1)$ is depicted in Fig.~\ref{FigR4} as a function of $N$
Finally, notice that A-CPT-F (A-CPT-D) benefits more from DL (UL) rather than UL (DL) CPT symbols, although this ultimately depends on the DL/UL average SNR conditions.
\vspace{-4mm}
\section{Conclusion and Future Research Directions}\label{conclusions}
\vspace{-2mm}
In this work, we introduced a CPT-based framework for detecting the number $K$ of devices that become active, i.e., signal sparsity level,  in a mMTC network under Rayleigh fading. Such information may play a crucial role for sparse signal recovery algorithms aiming at identifying the specific set of active devices. The proposed CPT mechanisms leverage $N$ symbols at the beginning of a transmission block for the purpose of detecting $K$. Specifically, only UL transmissions are exploited when using U-CPT, while A-CPT mechanisms include DL transmissions for CSI estimation that resolve fading uncertainty. We discussed two A-CPT specific implementations: i) A-CPT-F, which implements CSI-based phase corrections while leveraging the same statistical inverse power control used by U-CPT, and ii) A-CPT-D, which implements a dynamic CSI-based inverse power control, although it requires some active devices to remain in silence if their corresponding channels are too faded. For each CPT mechanism, we derived efficient estimators for $K$  and analytically characterized their variance. We showed that the estimator variance increases with $K^2$ and $K$ when operating respectively with U-CPT and A-CPT mechanisms, thus, evincing the superiority of the latter. We derived exact or approximate (semi) closed-form expressions for the PMF of the estimators, and validated their accuracy. Such PMFs were shown to be exponential-, Gaussian- and Student's $t-$like in the case of U-CPT, A-CPT-F and A-CPT-D, respectively.
They allow tractable computation of the detection success probability, thus, becoming valuable for system design/optimization purposes. The numerical results demonstrated the superiority of A-CPT-D under appropriate (but not strict/rigorous) configuration of the probability of the devices to remain silent, $\xi$. The optimum value of $\xi$ was shown to decrease slowly with $K$. Moreover, we analyzed the detection success probability performance under 
two
%three
 different rounding strategies: i) rounding to the nearest integer (NI), and ii) rounding relying on maximum likelihood (ML).
 %, and iii) genie-aided (GA) rounding, being the latter just included for benchmarking as it is not realizable in practice. 
 The results revealed that NI-based rounding offers a performance similar to that of the more sophisticated ML-based rounding.
 %, however, there is potential for improvements since the performance gap with respect to GA-rounding is significant.
 We showed that the detection success probability increases with $N$, specially when operating with A-CPT-D, although the performance gain may saturate quickly in high SNR regimes.
%, the detrimental impact of a large $K$, the achievable gain
%

To conclude, below we enumerate some attractive research directions that are closely related to our contribution in this work and which we would like to pursue in the sequence.
\subsubsection{Exploiting Prior Statistical  Knowledge of $K$}
We have not assumed any statistical  knowledge of $K$. In practice, the coordinator might have some prior expectations based on traffic history, which can be leveraged for more accurate CPT-based detectors. In a future work, we aim to design CPT-based detectors exploiting traffic history. Moreover, note that according to the results illustrated in Fig.~\ref{FigR5}, some information regarding $p_K(k)$ can be also leveraged to properly configure $\xi$, thus, to optimize the detection success probability under A-CPT-D.
\subsubsection{Practical/Arbitrary Channel Models}
In this work, we have limited our discussions to a scenario where channels are subject to quasi-static Rayleigh fading. %In practice, channels can be under the influence of some line-of-sight, for which the Rayleigh fading assumption does not hold anymore 
%The estimator design and analysis under
For further studies, either assuming more general/realistic channel models, e.g., Rician fading, or being completely agnostic of channel statistics, are attractive directions. In the case of channels subject to Rician fading, the proposed U-CPT detector needs to be revisited, while it does not work at all under unknown channel statistics. In any case,  proposed A-CPT detectors remain valid and the detection success probability needs to be re-assessed for all the CPT-based detectors. 
%In case of unknown channel statistics, U-CPT, which anyways seems to perform bad, cannot be used, while the performance of
%A work on this, specially in terms of performance analysis since A-CPT estimators would remain unchanged
%
\subsubsection{CPT Optimized for MIMO Systems}
MIMO technology is key for successful MUD, specially in mMTC networks with sporadic device activations. Therefore, adapting 
our proposed CPT framework to MIMO setups is undoubtedly appealing. Here, the A-CPT detectors require  adjustments to account for multi-antenna CSI acquisition. Due to the potential overhead introduced by multi-antenna CSI training and the limited number of CPT symbols that may be available, an efficient proposal can rely on a compressed CPT training phase that limits the number of communicating antennas and/or  exploits efficiently configured precoders/combiners.
\subsubsection{Joint CPT \& MUD Optimization}
The proposed CPT phase spanning over $N$ symbols and aiming to determine the number of active devices is followed by a MUD phase occupying $N'$ symbols, where the specific set of active devices is detected. Note that the number of active devices detected by CPT works as a prior for MUD mechanisms. An interesting question that we aim to address in a future work is how to efficiently allocate the CPT and MUD symbols given that $N+N'$ is constrained. For this, one needs to jointly assess the performance of both CPT \& MUD phases, which ultimately reveals the (practical) achievable performance of MUD.
\appendices
\section{Proof of Theorem~\ref{EK}}\label{ApK}
Since $f(\mathbf{y};\mathcal{H}_k)\in\mathbb{C}$, we can rewrite \eqref{hK} as
\begin{align}
	\hat{K}
	%&=\arg\max_{k} |p(\mathbf{y};\mathcal{H}_k)|^2\nonumber\\
	& =\arg\max_{k} \Big(f\big(\Re\{\mathbf{y}\};\mathcal{H}_k\big)^2 + f\big(\Im\{\mathbf{y}\};\mathcal{H}_k\big)^2\Big),
	\label{hK2}
\end{align}
where 
\begin{align}
	f\big(\Re\{\mathbf{y}\};\mathcal{H}_k\big)=\!\int\! f\big(\Re\{\mathbf{y}\}|h';\mathcal{H}_k\big) f_{H'}(h')\mathrm{d}h',\ \ 
	f\big(\Im\{\mathbf{y}\};\mathcal{H}_k\big)=\!\int\! f\big(\Im\{\mathbf{y}\}|h';\mathcal{H}_k\big) f_{H'}(h')\mathrm{d}h'\nonumber
\end{align}
using the Bayesian approach \cite[Ch.6]{Kay.1998}. Then, we derive $f\big(\Re\{\mathbf{y}\};\mathcal{H}_k\big)$ as shown in \eqref{pyre} at the top of the page.
%$f(\mathbf{y}_{re};\mathcal{H}_k)$ is given in \eqref{pyre}. 
Therein, (a) is obtained by respectively associating $\Re\{h'\}$ and $\Im\{h'\}$ to integration variables $x_1$ and $x_2$ for notation simplicity, and then exploiting $\Re\{y[n]\}=\sqrt{\bar{p}k}\big(\Re\{h'\}\Re\{s[n]\}-\Im\{h'\}\Im\{s[n]\}\big)+\Re\{w[n]\}$ together with $\Im\{h'\},\Re\{h'\}\sim \mathcal{N}(0,1/2)$ and $\Re\{w[n]\}\in\mathcal{N}(0,1/2)$. Meanwhile (b) comes after some algebraic simplifications that include the sum and expansion of the square of trinomials, and (c) is obtained after integration and using the definition of $\alpha_1(\mathbf{x})$, $\alpha_2(\mathbf{x})$, $\alpha_3(\mathbf{x})$, $\delta_1$ and $\delta_2$. 
Finally, by 
%exploiting $|s[n]^2|=\Re\{s[n]\}^2+\Im\{s[n]\}^2=1$,
removing the contribution of $2/(\sigma^N\pi^{N/2})$ in \eqref{pyre} as it does not affect the optimization, and leveraging that $f\big(\Im\{\mathbf{y}\};\mathcal{H}_k\big)$ can be also evaluated via \eqref{pyre} but using $\Im\{\mathbf{y}\}$ instead of $\Re\{\mathbf{y}\}$, we obtain \eqref{hKF} after few algebraic operations. \hfill 	\qedsymbol
%Note that $f(\mathbf{y}_{im};\mathcal{H}_k)$ can be also evaluated via \eqref{pyre} but using $\Im\{y[n]\}$ instead of $y_{re}[n]$. We use this fact when evaluating \eqref{hK2} to obtain \eqref{hKF}. Therein, we have also exploited $|s[n]^2|=\Re\{s[n]\}^2+s_{im}[n]^2=1$, and removed the contribution from $\sigma^N\pi^{\frac{N-1}{2}}$ in \eqref{pyre} as it does not affect the optimization. \hfill 	\qedsymbol
%
\begin{figure*}
	\small
	\begin{align}
	f\big(\Re\{\mathbf{y}\};\mathcal{H}_k\big)&
	%=\!\int\! p(\mathbf{y}_{re}|h';\mathcal{H}_k) p(h')\mathrm{d}h'
	\stackrel{(a)}{=}\!\int\limits_{-\infty}^{\infty}\!\int\limits_{-\infty}^{\infty}\! \mathrm{exp}\bigg(\!\!-\!\frac{1}{\sigma^2}\sum\limits_{n=1}^{N}\Big(\Re\{y[n]\}\!-\!\sqrt{k\bar{p}}\big(x_1\Re\{s[n]\} -\!x_2\Im\{s[n]\}\big)\Big)^2\!\!-\!x_1^{2}\!-\!x_2^{2}\bigg)\frac{\mathrm{d}x_1\mathrm{d}x_2}{\pi^{\frac{N+2}{2}}\!\! \sigma^N}\nonumber\\
	&\stackrel{(b)}{=}\!\int\limits_{-\infty}^{\infty}\int\limits_{-\infty}^{\infty} \mathrm{exp}\bigg(\!-\sum\limits_{n=1}^{N}\frac{\Re\{y[n]\}^2}{\sigma^2}-\frac{2}{\sigma^2}\sum\limits_{n=1}^{N}\Big(\sqrt{k\bar{p}}\Re\{y[n]\}\Im\{s[n]\}x_2-\sqrt{k\bar{p}}\Re\{y[n]\}\Re\{s[n]\}x_1\nonumber\\
	&\qquad\qquad\qquad\qquad\qquad-k\bar{p}\!\ \Re\{s[n]\}\Im\{s[n]\}x_1x_2\Big) -\Big(\frac{Nk\bar{p}}{\sigma^2}+1\Big)\big(x_1^{2}+x_2^{2}\big)\bigg)\frac{\mathrm{d}x_1\mathrm{d}x_2}{\pi^{\frac{N+2}{2}} \sigma^N}\nonumber\\
	&\stackrel{(c)}{=}\frac{2\ \!\mathrm{exp}\Big(\frac{\delta_1\big(\alpha_2(\Re\{\mathbf{y}\})\alpha_3(\Re\{\mathbf{y}\})-\alpha_1(\Re\{\mathbf{y}\})\delta_1\big)+\delta_2\big(\alpha_2(\Re\{\mathbf{y}\})^2+\alpha_3(\Re\{\mathbf{y}\})^2\big)+4\delta_2^2\alpha_1(\Re\{\mathbf{y}\})}{\delta_1^2-4\delta_2^2}\Big)}{\sigma^N\pi^{N/2}\sqrt{4\delta_2^2-\delta_1^2}}\label{pyre}\\
	\bottomrule\nonumber
\end{align}
\vspace{-20mm}
\end{figure*}
\vspace{-4mm}
\section{Proof of Theorem~\ref{the1}}\label{AppT1}
	The probability of the active devices to remain in silence must satisfy $\mathbb{P}(|\hat{h}_i|^2< \mu_i)=\xi$. By exploiting $|\hat{h}_i|^2\sim \mathrm{Exp}(1/\vartheta_i)$, one can easily obtain $\mu_i$ as in \eqref{u1rho}. Then, the common parameter $\rho$ is the solution to $\mathbb{E}\big[\frac{\rho}{|\hat{h}_i|^2}\big||\hat{h}_i|^2\ge \mu_i\big]=\frac{\bar{p}}{\beta_i}$. Note that

\begin{align}
\mathbb{E}\bigg[\frac{\rho}{|\hat{h}_i|^2}\Big||\hat{h}_i|^2\ge\mu_i\bigg]&=\frac{\rho}{\mathbb{P}(|\hat{h}_i|^2\ge  \mu_i)}\int\limits_{\mu_i}^{\infty}\frac{e^{-x/\vartheta_i}}{\vartheta_i x}\mathrm{d}x=-\frac{\rho}{(1-\xi)\vartheta_i}\mathrm{li}(1-\xi),\label{eq}
%	-\frac{\rho}{(1-\xi)}\mathrm{li}(1-\xi)&=\bar{p}
\end{align}
where last line comes from using $\mathbb{P}(|\hat{h}_i|^2\!\ge \! \mu_i)=1-\xi$, $\mu_i$ in \eqref{u1rho}, and the definition of the logarithmic integral function \cite[eq. (6.2.8)]{Olver.2010}. Finally, one attains $\rho$ in \eqref{u1rho} by equaling \eqref{eq} to $\bar{p}/\beta_i$ and setting $N_1\bar{\gamma}_i\rightarrow \infty,\ \forall i$. \hfill 	\qedsymbol
\vspace{-6mm}
\section{Proof of Theorem~\ref{th2}}\label{AppT2}
%
%On the other hand, t
Note that the PMF of $K''$ conditioned on $K$ can be written as
\begin{align}
p_{K''|K}(k'')&=\frac{\binom{K}{k''}}{\binom{Q}{K}}\!\!\!\!\sum_{\substack{\forall \mathcal{I},\mathcal{J}\subset\mathcal{Q}:\\ \mathcal{I}\cap\mathcal{J}=\emptyset,\\ |\mathcal{I}|=k'',|\mathcal{J}|=K-k''}}\!\!\!\!\prod_{i\in\mathcal{I}}\xi_i\prod_{j\in\mathcal{J}}(1-\xi_j)\label{pK}
\end{align}
$\forall k''=0,\cdots,K$,
where $\xi_i\triangleq \mathbb{P}(|\hat{h}_i|^2<\mu_i),\ \forall i\in\mathcal{Q}$. 
%
%In a simplified, but often practical, scenario where devices are configured such that $\xi_i=\xi,\forall i$ (i.e., by adopting the parameter configuration given in Theorem~\ref{the1} in the high SNR regime), \eqref{pK} 
Such expression simplifies to
\begin{align}
p_{K''|K}(k'')\!=\!\binom{K}{k''}\xi^{k''}(1\!-\!\xi)^{K-k''}\!\!,\!\ \text{for\ }k''\!=\!0,\cdots,K,\label{pK2}
\end{align}
for scenarios with $\xi_i=\xi,\forall i$. Hereinafter, we proceed with such assumption for tractability. Nevertheless, 
%our procedure can also be implemented with
one could still directly work with an arbitrary set $\{\xi_i\}$ via \eqref{pK} at expense of increased complexity.

%
%This becomes useful for the estimation of $K''$, thus $K$, as described next.

%First,
Observe that $K=K'+K''$, thus, \eqref{pK2} can be readily written in terms of $K'$, instead of $K$, as
\begin{align}
p_{K''|K'}(k'')=\binom{K'+k''}{k''}\xi^{k''}(1-\xi)^{K'} \label{pK3}
\end{align}
$\forall k''\!=\!0,\cdots,K_l$. Here, $K_l$ is the solution to $\sum_{k''=0}^{K_l} p_{K''|K'}(k'')=1$, where 
\begin{align}
\sum_{k''=0}^{K_l} p_{K''|K'}(k'')&= \!\sum_{k''=0}^{\infty}\!\! p_{K''|K'}(k'')\!-\!\!\sum_{k''=0}^{\infty}\!\! p_{K''|K'}(k''\!+\!K_l\!+\!1)\nonumber\\
&=\frac{1}{1-\xi}\bigg(1-B_{\xi}\big(1+K_l,1+K'\big)\frac{(1+K_l+K')!}{K_l!K'!}\bigg),\label{Kleq}
\end{align}
which is attained after several algebraic transformations and exploiting the definition and properties of the incomplete beta function \cite[Sec. 8.17]{Olver.2010}.
%. 
Since $\sum_{k''=0}^{\infty} p_{K''|K'}(k'')=\frac{1}{1-\xi}$, $K_l$ must be finite. Nevertheless, observe that $K_l\rightarrow\infty$ as $\xi\rightarrow 0$.  By equaling \eqref{Kleq} to 1, one obtains \eqref{Kl}.

Then, by exploiting $\widehat{K'}_\mathrm{a-cpt-d}$, the relaxed MMSE estimate of $K''$ is given by
	\begin{align}
\widehat{K''}_\mathrm{a-cpt-d}^\mathbb{R}&=\sum_{k''=0}^{K_l}k''p_{K''|\widehat{K'}}(k'')=\sum_{k''=0}^{K_l}k''\binom{\widehat{K'}+k''}{k''}\xi^{k''}(1-\xi)^{\widehat{K'}},\label{mmse0}
\end{align}
where last line comes from using \eqref{pK3}. Finally, \eqref{mmse} follows after few algebraic transformations and leveraging the definition of  the (Gaussian) hypergeometric function. 
\hfill 	\qedsymbol
\section{Proof of Theorem~\ref{TH3}}\label{AppTH3}
By departing from \eqref{varKT}, 
%while ignoring the rounding operation, 
one obtains
\begin{align}
\mathrm{var}[\widehat{K}_{\mathrm{a-cpt-d}}^\mathbb{R}]&\!=\!\mathrm{var}\big[\widehat{K'}_\mathrm{a-cpt-d}^\mathbb{R}\big]\!+\!\mathrm{var}\big[\widehat{K''}_\mathrm{a-cpt-d}^\mathbb{R}\big]\!\stackrel{(a)}{=}\!\psi \mathrm{var}\big[\widehat{K'}_\mathrm{a-cpt-d}^\mathbb{R}\big]\!  \stackrel{(b)}{=}\!\psi\Big(\mathrm{var}\Big(\Re\Big\{\frac{\varphi_{\mathcal{K}'}}{\sqrt{\rho}}\Big\}\Big)\!+\!\mathrm{var}\Big(\Re\Big\{\frac{\mathbf{s}^H\mathbf{w}}{N_2\sqrt{\rho}}\Big\}\Big)\Big)\nonumber\\
&\stackrel{(c)}{=}\!\psi\bigg(\sum_{i\in\mathcal{K}'}\mathrm{var}\bigg[\Re\Big\{\frac{\hat{h}_i^*\tilde{h}_i}{|\hat{h}_i|^2}\Big\}\bigg]\!+\!\frac{1}{2N_2\bar{\gamma}_c'}\bigg)\stackrel{(d)}{=}\psi\bigg(\sum_{i\in\mathcal{K}'}\mathrm{var}\bigg[\frac{\Re\{\tilde{h}_i\}}{|\hat{h}_i|}\bigg]\!+\!\frac{1}{2N_2\bar{\gamma}_c'}\bigg),\label{varAPT}
\end{align}
where (a) comes from leveraging the relation between $\widehat{K'}_\mathrm{a-cpt-d}^\mathbb{R}$ and $\widehat{K''}_\mathrm{a-cpt-d}^\mathbb{R}$ given in \eqref{Kp2}, while (b) follows from substituting \eqref{KpE}. Meanwhile, (c) comes from exploiting the definition of $\varphi_{\mathcal{K}'}$ and $\Re\{\mathbf{s}^H\mathbf{w}/(N_2\sqrt{\rho})\}\sim \mathcal{N}(0,1/(2N_2\bar{\gamma}_c'))$. Then, $(d)$ follows after exploiting the fact that $\hat{h}_i^*\tilde{h}_i/|\hat{h}_i|^2$ is equivalently distributed as $ \tilde{h}_i/|\hat{h}_i|$ since $\hat{h}_i^*/|\hat{h}_i|$ is uniformly distributed in the unit circle and independent from $\tilde{h}_i/|\hat{h}_i|$, thus, it does not alter latter's distribution.
%
%\begin{lemma}
%	Let's define $Z\triangleq \Re\{\tilde{h}_i\}/|\hat{h}_i|$ given that $|\hat{h}_i|\ge \mu_i$, then, the distribution of RV $Z$ is given by
%
%	\begin{align}
%	F_Z(z)&= \\
%	f_Z(z)&=
%	\end{align}
%
%\end{lemma}
%
%\begin{proof}

Observe that $\Re\{\tilde{h}_i\}/|\hat{h}_i|\sim\frac{1}{\sqrt{2N_1\varrho\vartheta_i}}X_i/Y_i$, where
$X_i\sim \mathcal{N}(0,1)$, and 
\begin{align}
f_{Y_i}(y)&=\frac{f_{|\hat{h}_i|}(y)}{1-F_{|\hat{h}_i|}(\sqrt{\mu_i/\vartheta})}=2y e^{\mu_i/\vartheta_i-y^2}\label{fY}
\end{align}
for $y\ge \sqrt{\mu_i/\vartheta_i}$. Let us denote $Z_i\triangleq X_i/Y_i$,
then we can write \eqref{varAPT} as
\begin{align}
\mathrm{var}[\widehat{K}_{\mathrm{a-cpt-d}}^\mathbb{R}]=\frac{\psi}{2}\Big(\sum_{i\in\mathcal{K}'}\frac{\mathrm{var}[Z_i]}{N_1\varrho\vartheta_i}+\frac{1}{N_2\bar{\gamma}_c'}\Big).\label{vaK}
\end{align}
Now, observe that
%The variance of $Z$ obeys
%
%
\begin{align}
\mathrm{var}[Z_i]&\stackrel{(a)}{=}\mathbb{E}[X^2]\mathbb{E}[Y^{-2}]=\int_{\sqrt{\mu_i/\vartheta_i}}^{\infty}y^{-2}f_Y(y)\mathrm{d}y\stackrel{(b)}{=}2\int_{\sqrt{\mu_i/\vartheta_i}}^{\infty}\frac{1}{y} e^{\frac{\mu_i}{\vartheta_i}-y^2}\mathrm{d}y\stackrel{(c)}{=}e^{\frac{\mu_i}{\vartheta_i}} \mathrm{Ei}\Big(\frac{\mu_i}{\vartheta_i}\Big),\label{varZ} %\nonumber\\
\end{align}
where (a) follows since $X$ and $Y$ are independent and $\mathbb{E}[X]=0$, while (b) comes after using \eqref{fY}. Then, we attain (c) by applying simple algebraic transformations and using the definition of the exponential integral \cite[eq. (6.2.1)]{Olver.2010}. Finally, by substituting \eqref{varZ} into \eqref{vaK} while leveraging $\mu_i/\vartheta_i=-\ln(1-\xi)$ (Theorem~\ref{the1}), one attains \eqref{varF}. \hfill 	\qedsymbol
\vspace{-4mm}
\section{Proof of Theorem~\ref{a-cpt-f-T}}\label{TH2}
\vspace{-2mm}
We leverage \eqref{KF}, \eqref{KF2} to attain
\begin{align}
F_{\widehat{K}_{\mathrm{a-cpt-f}|\mathcal{K}}^\mathbb{R}}\!\!(\hat{k})\!=\!\mathbb{P}\big(\widehat{K}_\mathrm{a-cpt-f}^\mathbb{R}\le \hat{k}\big)\!=\!\mathbb{P}\Big( \sum_{i\in\mathcal{K}}U_i\!+\!R \!\le\! \sqrt{2}\hat{k}/\upsilon\Big)\!=\!\int\limits_{0}^{\infty}\!F_R\Big(\frac{\sqrt{2}}{\upsilon}\hat{k}\!-\!\sum_{i\in\mathcal{K}}\!u_i\Big)\!\prod_{i\in\mathcal{K}}\!f_{U_i}(u_i)\mathrm{d}u_i.\label{FKYXi}
\end{align}
%different than for U-CPT in \eqref{cdf},
Observe that \eqref{FKYXi} is conditional on the specific set of active devices $\mathcal{K}$, and not only on its cardinality $K$. Therefore, different estimation statistics may be observed even if the number of active devices $K$ remains unchanged. Since the set of active devices is unknown to the coordinator, one needs to average all the possible combinations as follows
\begin{align}
F_{\widehat{K}_{\mathrm{a-cpt-f}}|K}^\mathbb{R}(\hat{k})&=\mathbb{E}_{\substack{\forall \mathcal{K}\subseteq \mathcal{Q}\\|\mathcal{K}|=K} }\Big[F_{\widehat{K}_{\mathrm{a-cpt-f}}|\mathcal{K}}^\mathbb{R}(\hat{k})\Big]=\frac{1}{\binom{Q}{K}}\sum_{\substack{\forall \mathcal{K}\subseteq \mathcal{Q}\\|\mathcal{K}|=K}}F_{\widehat{K}_{\mathrm{a-cpt-f}}|\mathcal{K}}^\mathbb{R}(\hat{k}).\label{FKK}
\end{align}
Finally, one attains \eqref{final} after substituting \eqref{FKYXi}, together with the CDF of $R$ and the PDF of $U_i$, into \eqref{FKK}. \hfill 	\qedsymbol
\vspace{-7mm}
\section{Proof of Theorem~\ref{a-cpt-d-T}}\label{TH5}
The CDF of $\widehat{K'}_\mathrm{a-cpt-d}^\mathbb{R}$ can be obtained as follows
\begin{align}
&F_{\widehat{K'}_\mathrm{a-cpt-d}^\mathbb{R}|\mathcal{K}'}(\hat{k'})
%&=\mathbb{P}\big(\widehat{K'}_\mathrm{a-cpt-d}^\mathbb{R}+\widehat{K''}_\mathrm{a-cpt-d}^\mathbb{R}\le x\big)\nonumber\\
%&\stackrel{(a)}{\approx}\mathbb{P}\big(\widehat{K'}_\mathrm{a-cpt-d}^\mathbb{R}+\widehat{K''}_\mathrm{a-cpt-d}^\mathbb{R}\le x\big)\nonumber\\
\!=\!\mathbb{P}\Big(K'\!+\!\sum_{i\in\mathcal{K}'}Z_i\!+\!V\le \hat{k'}\Big) \!=\!\!\!\underbrace{\int\limits_{-\infty}^\infty\!\!\!\cdots\!\!\!\int\limits_{-\infty}^\infty}_{K'\ \text{integrals}}\!\!\! F_V\Big(\hat{k'}\!-\!K'\!-\!\sum_{i\in\mathcal{K}'}\!z_i\Big)\!\prod_{i\in\mathcal{K}'}\!f_{Z_i}(z_i)\mathrm{d}z_i,\label{KV}
\end{align}
where $F_V(v)=1-\mathrm{erfc}(v\sqrt{N_2\bar{\gamma}_c'})/2$.
%$F_V(v)=2\sqrt{N_2\bar{\gamma}_c'/\pi}e^{-N_2\bar{\gamma}_c'v^2}$.
However, since such distribution is conditional on the specific set of active and transmitting devices, one must compute
\begin{align}
F_{\widehat{K'}_{\mathrm{a-cpt-d}}|K'}^\mathbb{R}(\hat{k'})&=\mathbb{E}_{\substack{\forall \mathcal{K}'\subseteq \mathcal{Q}\\|\mathcal{K}'|=K'} }\Big[F_{\widehat{K'}_{\mathrm{a-cpt-d}}|\mathcal{K}'}^\mathbb{R}(\hat{k'})\Big]=\frac{1}{\binom{Q}{K'}}\sum_{\substack{\forall \mathcal{K}'\subseteq \mathcal{Q}\\|\mathcal{K}'|=K'}}F_{\widehat{K'}_{\mathrm{a-cpt-d}}|\mathcal{K}'}^\mathbb{R}(\hat{k'}),\label{FKK2}
\end{align}
to bring the condition on $K'$, similar to \eqref{FKK}. This is done by averaging with respect to all the possible sets $\mathcal{K}'$ with fixed cardinality $K'$. Finally, \eqref{KD} is attained by substituting \eqref{KV}, together with $F_V(v)$, into \eqref{FKK2}. \hfill 	\qedsymbol
\vspace{-7mm}
\section{Proof of Theorem~\ref{TH6}}\label{AL3}
Note that $T$ is symmetric around 0 and bell-shaped, thus, a Student's $t$ distribution, which is also more general than a Gaussian distribution, may be a good fit. Several simulation campaigns that we carried out revealed that this is indeed the case. 

At least two moments of $T$ are needed to match those of a scaled Student's $t$ distribution since such distribution is characterized only by scale $s$, and number of degrees of freedom $\nu$. The challenge lies in that $\nu$ must be greater than the moment order, while odd moments cannot be used since they are 0. For instance, this implies that we cannot rely on the first moment, and we cannot fit moments of order higher than 2 in order to allow $\nu\in(2,\infty)$ (which is required for having defined variance as it is the case here, see \eqref{varF}). The latter issue is very important since simulation results evinced that $T$ may accurately fit, in many cases, a scaled Student's $t$ distribution with $\nu$ approaching $2$ from above. To circumvent these issues, we resort to a fitting based on the second moment and characteristic function. 

%Assume $\bar{\gamma}_i\approx\bar{\gamma}'$, which holds with equality when $\beta_i=\beta$ ($\vartheta_i=\vartheta$). Then $\{X_i\}$, $\{Y_i\}$, $\{Z_i\}$ are i.i.d. RVs with $X_i\sim X,\ Y_i\sim Y,´$ and $Z_i\sim Z, \ \forall i$. 
The second moment of $T$ is given by
\begin{align}
\omega_1\triangleq\mathbb{E}[T^2]=K'\mathbb{E}[Z^2],\label{Z2}
\end{align}
which equals $\omega_1$ in \eqref{omega1omega2} by leveraging the independence and zero-mean features of $\{Z_i\}$ and from using \eqref{varZ} with $\mu_i/\vartheta_i=-\ln(1-\xi)$. Meanwhile, the characteristic function of $T$ can be computed as
\begin{align}
\omega_2(t)&\triangleq \mathbb{E}[e^{\imath tT}]=\mathbb{E}\big[e^{\imath t \sum_{i\in\mathcal{K}'}Z_i}\big]=\mathbb{E}[e^{\imath tZ}]^{K'}\stackrel{(a)}{=}\bigg[\int_{-\infty}^{\infty}\big(\cos (tz)+\imath\sin (tz)\big)f_Z(z)\mathrm{d}z\bigg]^{K'}
\nonumber\\
&\stackrel{(b)}{=}\bigg[2\int_{0}^{\infty}\cos (tz)f_Z(z)\mathrm{d}z\bigg]^{K'}
,\ \forall t\ge 0,\label{C}
\end{align}
where (a) comes from exploiting $e^{\imath a}=\cos a+\imath\sin a$ and expressing the expectation in integral form, while (b) is obtained by leveraging the symmetry of $f_Z(s)$ around $0$, and properties $\cos(-a) =\cos a$, $\sin (-a)=-\sin a$.
\begin{figure}[t!]
	\centering
	\includegraphics[width=0.45\textwidth]{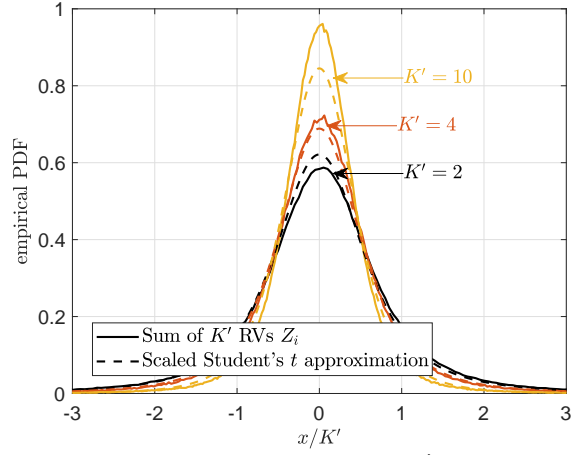}
	\vspace{-5mm}
	\caption{Empirical PDF of the normalized sum of $K'$ i.i.d. RVs distributed as \eqref{fZ2} and corresponding scaled Student's $t$ fitting. We set $\mu/\vartheta=10^{-2}$.}
	\label{figF}
	\vspace{-9mm}
\end{figure}

Now, we proceed to match \eqref{Z2} and \eqref{C} with the second moment and characteristic function %(at $t=1$) 
of a scaled Student's $t$ distribution $s\mathcal{T}(\nu)$, which are respectively given by
\begin{align}
E\big[(s\mathcal{T}(\nu))^2\big]&= \frac{s^2\nu}{\nu-2},\label{E2}\\
\text{CF}\big(s\mathcal{T}(\nu)\big)&=\frac{K_{\nu/2}(\sqrt{\nu}ts)(\sqrt{\nu}ts)^{\nu/2}}{2^{\nu/2-1}\Gamma(\nu/2)},\ \forall t\ge 0.
\label{Eabs}
\end{align}
Then, the system of equations to solve becomes
%
%\begin{subequations}
%	\begin{alignat}{4}
\begin{align}
	\Big\{\omega_1=\frac{s^2\nu}{\nu-2},\qquad
\omega_2(t)=\frac{K_{\nu/2}(\sqrt{\nu}ts)(\sqrt{\nu}ts)^{\nu/2}}{2^{\nu/2-1}\Gamma(\nu/2)}\Big\}.\label{eq2}
\end{align}
Through extensive simulation campaigns, we found that the solution of above system of equations is the most accurate for $t\approx 26$. By setting $t=26$, and combining equations in \eqref{eq2}, one obtains \eqref{eqV}, where $\omega_2$ in \eqref{omega1omega2} matches \eqref{C}. Then, $s$ is attained from $\nu$ by exploiting the first equation in \eqref{eq2}. The accuracy of the fitting is illustrated in Fig.~\ref{figF}.
%
%\begin{align}
%2^{\nu/2-1}\omega_2\Gamma\Big(\frac{\nu}{2}\Big)&=K_{\nu/2}(\sqrt{\omega_1(\nu-2)})(\omega_1(\nu-2))^{\nu/4}.\label{eqV}%\nonumber\\
%%&\approx\frac{h_1(\omega_1)\nu-h_2(\omega_1)}{\nu-h_3(\omega_1)},\label{app}
%\end{align}
%
%
%
\hfill 	\qedsymbol
\bibliographystyle{IEEEtran}
\bibliography{IEEEabrv,references}
\vspace{-2mm}
\end{document}